\theoremstyle{plain}
\newtheorem{theorem}{Theorem}
\newtheorem{lemma}{Lemma}
\theoremstyle{definition}
\newtheorem{definition}{Definition}
\newtheorem{property}{Property}
\newcommand{\zero}[1]{\makebox[0mm][l]{$#1$}}
\newcommand{\istrut}[2][0]{\rule[- #1 mm]{0mm}{#1 mm}\rule{0mm}{#2 mm}}
\newcommand{\rb}[2]{\raisebox{#1 mm}[0mm][0mm]{#2}}
\newcommand{\Egervary}{Egerv\'{a}ry}
\newcommand{\paren}[1]{\left( #1 \right)}
\newcommand{\ceil}[1]{\lceil #1 \rceil}
\newcommand{\floor}[1]{\lfloor #1 \rfloor}
\newcommand{\f}[2]{\frac{#1}{#2}}
\newcommand{\fr}[2]{\mbox{$\frac{#1}{#2}$}}
\newcommand{\bydef}{\stackrel{\rm def}{=}}
\newcommand{\poly}{\mathrm{poly}}
\DeclareMathOperator*{\argmin}{arg\,min}
\newcommand{\ignore}[1]{}
\newcommand{\hcm}[1][1]{\hspace*{#1 cm}}
\newcommand{\wt}{\tilde{w}}
\newcommand{\symdiff}{\oplus}
\newcommand{\Vodd}{\mathcal{V}_{\operatorname{odd}}}
\newcommand{\BIP}{{\sc bipartite}}
\newcommand{\RAND}{{\sc randomized}}
\newcommand{\CARD}{{\sc cardinality only}}
\newcommand{\MCM}{{\sc mcm}}
\newcommand{\MCPM}{{\sc mcpm}}
\newcommand{\MCMs}{{\sc mcm}s}
\newcommand{\MWM}{{\sc mwm}}
\newcommand{\MWMs}{{\sc mwm}s}
\newcommand{\MWPM}{{\sc mwpm}}
\newcommand{\MWMONLY}{{\sc mwm only}}
\newcommand{\INT}{{\sc integer weights}}
\newcommand{\Gelig}{G_{elig}}
\newcommand{\scale}{\operatorname{scale}}
\newcommand{\msb}{\operatorname{MSB}}
\newcommand{\VODD}{V_{\operatorname{odd}}}
\newcommand{\VEVEN}{V_{\operatorname{even}}}
\newcommand{\citeyear}[1]{\cite{#1}}
\begin{document}

\title{Scaling Algorithms for Approximate\\ and Exact Maximum Weight Matching\thanks{This 
work is supported by NSF CAREER grant no. CCF-0746673
and a grant from the US-Israel Binational Science Foundation.
H.-H. Su is supported by a Taiwan (R.O.C.) Ministry of Education Fellowship.
Authors' emails: duanr02@gmail.com, pettie@umich.edu, hsinhao@umich.edu.}}

\author{Ran Duan\\
Max-Planck-Institut\\ f\"{u}r Informatik
\and
Seth Pettie\\
University of Michigan
\and
Hsin-Hao Su\\
University of Michigan
}

\maketitle

\begin{abstract}
The {\em maximum cardinality} and {\em maximum weight matching} problems can be solved in time 
$\tilde{O}(m\sqrt{n})$, a bound that has resisted improvement despite decades of research.
(Here $m$ and $n$ are the number of edges and vertices.)
In this article we demonstrate that this ``$m\sqrt{n}$ barrier'' is extremely fragile, in the following sense.
For any $\epsilon>0$, we give an algorithm that computes a $(1-\epsilon)$-approximate maximum weight matching
in $O(m\epsilon^{-1}\log\epsilon^{-1})$ time, that is, optimal {\em linear time} for any fixed $\epsilon$.
Our algorithm is dramatically simpler than the best exact maximum weight matching algorithms on general graphs
and should be appealing in all applications that can tolerate a negligible relative error.

Our second contribution is a new {\em exact} maximum weight matching algorithm for integer-weighted bipartite graphs
that runs in time $O(m\sqrt{n}\log N)$. 
This improves on the $O(Nm\sqrt{n})$-time and $O(m\sqrt{n}\log(nN))$-time algorithms known since the mid 1980s,
for $1\ll \log N \ll \log n$.  Here $N$ is the maximum integer edge weight.
\end{abstract}

\section{Introduction}\label{intro}

\nocite{Easterfield46}
\nocite{GT88}
\nocite{IbarraM81}
\nocite{Karzanov76}
\nocite{WitzgallZ65}
\nocite{Balinski69}
\nocite{J75}
\nocite{DH03a,DH03c}
\nocite{Gabow83}
\nocite{AltBMP91}
\nocite{FederM95}
\nocite{Gabow74}
\nocite{CunninghamM78}
\nocite{GGS89}
\nocite{GalilMG86}

Graph matching is one of the most well studied problems in combinatorial optimization.  The original motivations
of the problem were minimizing transportation costs~\cite{Hitchcock41,Kantorovitch42} and optimally assigning personnel to job positions~\cite{Easterfield46,Thorndike50}.
Over the years matching algorithms have found applications in scheduling, approximation algorithms, network switching, and as key
subroutines in other optimization algorithms, for example, undirected shortest paths~\cite{Lawler76}, planar max cut~\cite{OrlovaD72,Hadlock75}, Chinese postman tours~\cite{EdmondsJ73,Kwan62}, and metric traveling salesman~\cite{Christofides76}.
In most practical applications it is {\em not} critical that the algorithm produce an exactly optimum solution.  In this article we explore the extent to which this freedom---not demanding exact solutions---allows us to design simpler and more efficient algorithms.

In order to discuss prior work with precision we must introduce some notation and terminology.  The input is 
a weighted graph $G=(V,E,w)$ where $n=|V|$ and $m=|E|$ are the number of vertices and edges and $w$ is the edge weight function.
If $w$ assigns integer (rather than real) weights, let $N$ be the largest magnitude of a weight.  An unweighted graph is one for which $w(e)=1$ for all $e\in E$.
A {\em matching} is a set of vertex-disjoint edges and a {\em perfect} matching is one in which all vertices are matched.
The weight of a matching is the sum of its edge weights.
We use \MWM{} (and \MWPM) to denote the problem of finding a maximum weight (perfect) matching, as well as the matching itself.  We use \MCM{} and \MCPM{}
for the cardinality (unweighted) versions of these problems.  The \MWPM{} problem on bipartite graphs is often called the {\em assignment} problem. 

\begin{table}[t]
\caption{\label{tbl:cardhistory}{\sc Cardinality Matching}}
\centering
\begin{tabular}{|@{\hcm[.2]}c@{\hcm[.2]}|l@{\hcm[.2]}|l@{\istrut[1]{3.5}\hcm[.2]}|}
\multicolumn{3}{c}{}\\
\multicolumn{1}{l}{Year} & \multicolumn{1}{l}{Authors}	& \multicolumn{1}{l}{Time Bound \& Notes\hcm[2]}\\\cline{1-3}
		& folklore/trivial & $mn$	\hfill \BIP\\\cline{1-3}
1965 & Edmonds	& poly$(n)$	 \\\cline{1-3}
1965	& Witzgall \& Zahn	&		\\
1969	& Balinski		&		\\
1974	& Kameda \& Munro & 	 \\
1976	& Gabow		& 	\rb{2.5}{$mn$ or $mn\alpha(m,n)$ or $n^3$} \\
1976	& Lawler		&		\\
1976 & Karzanov	&	\\\cline{1-3}
1971	& Hopcroft \& Karp & \\
1973	& Dinic \& Karzanov	& \rb{2.5}{$m\sqrt{n}$}	 \hfill \rb{2.5}{\BIP}\\\cline{1-3}
1980	& Micali \& Vazirani & 		 \\
1991	& Gabow \& Tarjan	 & \rb{2.5}{$m\sqrt{n}$}	\\\cline{1-3}
1981 & Ibarra \& Moran	& $n^\omega$ \hcm \hfill \CARD,\RAND,\BIP\\\cline{1-3}
	&				& $n^\omega$ \hfill \CARD,\RAND\\
\rb{2.5}{1989}	& \rb{2.5}{Rabin \& Vazirani} & $n^{\omega+1}$ \hfill \RAND\\\cline{1-3}
1991 & Alt, Blum, Mehlhorn \& Paul \hcm[.3] & $n\sqrt{nm/\log n}$  \hfill \BIP\\\cline{1-3}
1991	& Feder \& Motwani	& \\
1997 & Goldberg \& Kennedy & \rb{2.5}{$m\sqrt{n}/\kappa$} \hfill \rb{2.5}{\BIP, $\kappa = \f{\log n}{\log(n^2/m)}$}\\\cline{1-3}
1996 & Cheriyan \& Mehlhorn & $n^2 + n^{5/2}/w$ \hfill \hfill \BIP, $w =$ machine word size\\\cline{1-3}
2004	& Goldberg \& Karzanov & $m\sqrt{n}/\kappa$ \\\cline{1-3}
2004	& Mucha \& Sankowski  &		                   \\
2006	& Harvey				&	\rb{2.5}{$n^\omega$} \hfill \rb{2.5}{\RAND}         \\\cline{1-3}
\multicolumn{3}{l}{\emph{Note:} Here $\omega<2.376$ is the exponent of $n\times n$ matrix multiplication.}
\end{tabular}

\end{table}

The \MWPM{} and \MWM{} problems are reducible to each other.
Given an instance $G$ of \MWM{}, let $G'$ consist of two copies of $G$ with zero-weight edges connecting copies of the same vertex.
Clearly a \MWPM{} in $G'$ corresponds to a pair of \MWMs{} in $G$.  In the reverse direction, if $G$ is an instance of \MWPM{} with weight function $w$,
find the \MWM{} of $G$ using the weight function $w'(e) = w(e) + nN$.  Maximum weight matchings with respect to $w'$ necessarily have maximum cardinality.
Call a matching {\em $\delta$-approximate}, where $\delta \in [0,1]$, if its weight is at least a factor $\delta$ of the optimum matching.
Let $\delta$-\MWM{} (and $\delta$-\MCM) be the problem of finding $\delta$-approximate maximum weight (cardinality) matching, as well as the matching itself.

Tables~\ref{tbl:cardhistory}, \ref{tbl:bipartitehistory}, and \ref{tbl:generalhistory} 
give an at-a-glance history of exact matching algorithms.  Algorithms are dated according to their initial publication, and
are included either because they establish a new time bound, or employ a noteworthy technique, or are of historical interest.  Table~\ref{tbl:approxhistory} gives a history of approximate \MCM{} and \MWM{} algorithms.

\subsection{Algorithms for Bipartite Graphs}

The \MWM{} problem is expressible as the following integer linear program, where $x$ represents the incidence vector of the matching.

\begin{align}
\mbox{maximize} \hcm[.5] & \sum_{e\in E} w(e)x(e)\nonumber\\
\mbox{subject to} \hcm[.5]  & 0\le x(e) \le 1 & \forall e\in E\label{LP}\\
&\sum_{e=(u,u')\in E} x(e) \le 1 &\forall u\in V\nonumber\\
&\istrut[0]{6}\mbox{$x(e)$ is an integer} & \forall e\in E\label{INT}
\intertext{It is well known that in {\em bipartite} graphs the integrality requirement~(\ref{INT}) is redundant,
that is, the basic feasible solutions of the LP~(\ref{LP}) are nonetheless integral.  See~\cite{Birkhoff46,Dantzig51}.
The dual of~(\ref{LP}) is}
\mbox{minimize} \hcm[.5] & \sum_{u\in V} y(u)\nonumber\\
\mbox{subject to} \hcm[.5] & y(e) \ge w(e) & \forall e\in E		\label{DUAL-BIP}\\
					& y(u) \ge 0 & \forall u\in V\nonumber\\
\mbox{where, by definition,} \hcm[.5] & y(u,v) \bydef y(u) + y(v)\nonumber
\end{align}

\begin{table}[t]
\caption{\label{tbl:bipartitehistory}{\sc Weighted Matching: Bipartite Graphs}}
\centering
\begin{tabular}{|c|l|l|@{\istrut[1]{3.5}}l|}
\multicolumn{3}{c}{}\\
\multicolumn{1}{l}{Year} & \multicolumn{1}{l}{Authors}	& \multicolumn{1}{l}{Time Bound \& Notes}\\\cline{1-3}
1946 & Easterfield & $2^n\mathrm{poly}(n)$\\\cline{1-3}
1953 & von Neumann &\\
1955 & Kuhn		& 	 \\
1955 & Gleyzal	& $\mathrm{poly}(n)$\\
1957 & Munkres	& 	 \\
1964 & Balinski \& Gomory & 	 \\	\cline{1-3}
1969 & Dinic \& Kronrod & $n^3$ \\\cline{1-3}
1970 & Edmonds \& Karp & \hfill $\mathrm{SP}^+ = $ time for one SSSP computation on\\
1971 & Tomizawa	& \rb{2.5}{$n\cdot \mathrm{SP}^+$} \hfill a non-negatively weighted graph\\\cline{1-3}
1975 & Johnson 	& $mn\log_d n$	 \hfill $d=2+m/n$\\\cline{1-3}
	 &  					&	$mn^{3/4}\log N$ \hfill \INT\\
\rb{2.5}{1983} & \rb{2.5}{Gabow} &	$Nm\sqrt{n}$ \hfill \MWMONLY, \INT\\\cline{1-3}
1984 & Fredman \& Tarjan & $mn + n^2\log n$   \\\cline{1-3}
1988 & Gabow \& Tarjan	&  \\
1992 & Orlin \& Ahuja & $m\sqrt{n}\log(nN)$  \hfill \INT\\
1997 & Goldberg \& Kennedy & \\\cline{1-3}
1996 & Cheriyan \& Mehlhorn & $n^{5/2}\log(nN)(\f{\log\log n}{\log n})^{1/4}$ \hfill \INT\\\cline{1-3}
	&		& $Nm\sqrt{n}/\kappa$ \hfill \MWMONLY, \INT\\
\rb{2.5}{1999} & \rb{2.5}{Kao, Lam, Sung \& Ting} & $N(n^2 + n^{5/2}/w)$ \hfill \MWMONLY, \INT\\\cline{1-3}
2004 & Mucha \& Sankowski & $Nn^\omega$ \hcm[1.5] \hfill \MWMONLY, \RAND, \INT\\\cline{1-3}
2006 & Sankowski 	& $Nn^\omega$ \hfill \RAND, \INT\\\cline{1-3}
\multicolumn{2}{|c|@{\istrut[1]{3.5}}}{} & $m\sqrt{n}\log N$ \hfill \MWMONLY, \INT 	\\
\multicolumn{2}{|c|@{\istrut[1]{3.5}}}{\rb{2.5}{\bf new}} & $m\sqrt{n}\log(nN)$	 \hfill \INT\\\cline{1-3}
\multicolumn{3}{l}{\parbox{6in}{\emph{Note:} $N$ is the maximum integer edge weight, $w$ is the machine word size,
and $\kappa = \log n/\log(n^2/m)$.
The time bounds of Johnson~\cite{J75} and Fredman and Tarjan~\cite{FT87} reflect faster priority queues.
The time bound of Mucha and Sankowski~\cite{MS04} follows from~Kao et al.'s~\citeyear{KaoLST01} reduction.}}
\end{tabular}

\end{table}

(In the \MWPM{} problem $\sum_{e=(u,u')} x(e) = 1$ holds with equality in the primal and $y(u)$ is unconstrained in the dual.)
Kuhn's~\citeyear{Kuhn55,Kuhn56} publication of the {\em Hungarian method} stimulated research on this problem from an 
{\em algorithmic} perspective, but it was not without precedent.  Kuhn noted that the algorithm was latent in the work
of Hungarian mathematicians K\"onig and \Egervary.\footnote{A translation of \Egervary's work appears in Kuhn~\citeyear{Kuhn55b}.}
However, the history goes back even further.  A recently rediscovered article
of Jacobi from 1865 describes a variant of the Hungarian algorithm; see~\cite{Ollivier09}.
Although Kuhn's algorithm self-evidently runs in polynomial time, this mark of efficiency was noted later: Munkres~\citeyear{Munkres57} 
showed that $O(n^4)$ time is sufficient.

Kuhn's Hungarian algorithm is sometimes described as a {\em dual} (rather than primal) algorithm, 
due to the fact that it maintains feasibility of the dual~(\ref{DUAL-BIP}) and progressively improves the primal objective~(\ref{LP})
by finding augmenting paths.
Gleyzal~\citeyear{Gleyzal55} (see also~\cite{BalinskiG64}) gave a {\em primal} algorithm for the assignment problem
in which the primal is feasible (the current matching is perfect) and the dual objective is progressively improved via weight-augmenting cycles.\footnote{The idea of {\em cycle canceling} is usually attributed to Robinson~\citeyear{Robinson49}.
Some assignment algorithms simply do not fit the primal/dual mold.  Von Neumann~\citeyear{vonNeumann53}, for example, gave a reduction from the assignment problem
to finding the optimum strategy in a zero-sum game given as an $n\times n^2$ matrix, which can be solved in polynomial time~\cite{BrownvN50}.}

The search for faster assignment algorithms began in earnest in the 1960s.  Dinic and Kronrod~\citeyear{DinicK69} gave an $O(n^3)$-time algorithm
and Edmonds and Karp~\citeyear{EdmondsK72} and Tomizawa~\citeyear{Tomizawa71} observed that assignment is reducible to $n$ single-source shortest path computations on a non-negatively weighted
directed graph.\footnote{It was known that the assignment problem is reducible to $n$ shortest path computations on arbitrarily weighted graphs.  
See Ford and Fulkerson~\citeyear{FF62}, Hoffman and Markowitz~\citeyear{HoffmanM63}, and Desler and Hakimi~\citeyear{DeslerH69} for different reductions.} 
Using Fibonacci heaps, $n$ executions of Dijkstra's~\citeyear{Dij59} shortest path algorithm take $O(mn+n^2\log n)$ time.
On integer weighted graphs this algorithm can be implemented slightly faster,
in $O(mn + n^2\log\log n)$ time~\cite{Han02,Tho03} or $O(mn)$ time (randomized)~\cite{AnderssonHNR98,Thorup07b},
independent of the maximum edge weight. 
Gabow and Tarjan~\citeyear{GT89}, improving an earlier algorithm of Gabow~\citeyear{Gab85}, 
gave a {\em scaling} algorithm for the assignment problem running in $O(m\sqrt{n}\log(nN))$ time, 
which is just a $\log(nN)$ factor slower than the fastest \MCM{} algorithm~\cite{HK73}.\footnote{Gabow and Tarjan's algorithm takes a Hungarian-type approach.
The same time bound has been achieved by Orlin and Ahuja~\citeyear{OrlinA92} using the {\em auction} approach of Bertsekas~\citeyear{Bertsekas81}, and by Goldberg and Kennedy~\citeyear{GoldbergK97} using a {\em preflow-push} approach.}
For reasonably sparse graphs Gabow and Tarjan's~\citeyear{GT89} assignment algorithm remains unimproved.  However, faster algorithms have
been developed when $N$ is small or the graph is dense~\cite{CheriyanM96,KaoLST01,Sankowski09}.
Of particular interest is Sankowski's algorithm~\citeyear{Sankowski09}, which solves \MWPM{} in $O(Nn^\omega)$ time, 
where $\omega$ is the exponent of square matrix multiplication.

\subsection{Algorithms for General Graphs}

Whereas the basic solutions to~(\ref{LP},\ref{DUAL-BIP}) are integral on bipartite graphs, the same is not true for general graphs.
For example, if the graph is a unit-weighted cycle with length $2k+1$ the \MWM{} has weight $k$ but (\ref{LP}) achieves its maximum of $k+1/2$ by setting $x(e)=1/2$ for all 
$e\in E$.  Let $\Vodd$ be the set of all odd-size subsets of $V$.  Clearly every feasible solution to the integer linear program
(\ref{LP},\ref{INT}) also satisfies the following odd-set constraints.
\begin{align}
& \sum_{e \in E(B)} x(e) \le (|B|-1)/2 & \forall B\in \Vodd\label{ODD}
\end{align}

\begin{table}[t]
\centering
\caption{\label{tbl:generalhistory}{\sc Weighted Matching: General Graphs}}
\begin{tabular}{|c|l|l|@{\istrut[1]{3.5}}}
\multicolumn{3}{c}{}\\
\multicolumn{1}{l}{Year} & \multicolumn{1}{l}{Authors}	& \multicolumn{1}{l}{Time Bound \& Notes}\\\cline{1-3}
1965 & Edmonds	& $\poly(n)$	 \\\cline{1-3}
1974 & Gabow	& 	 \\
1976 & Lawler		& \rb{2.5}{$n^3$}		\\\cline{1-3}
1976 & Karzanov	& $n^3 + mn\log n$		\\\cline{1-3}
1978 & Cunningham \& Marsh & $\poly(n)$		\\\cline{1-3}
1982 & Galil, Micali \& Gabow & $mn\log n$    \\\cline{1-3}
	 &  					&	$mn^{3/4}\log N$ \hfill \INT\\
\rb{2.5}{1985} & \rb{2.5}{Gabow} &	$Nm\sqrt{n}$ \hfill \MWMONLY, \INT\\\cline{1-3}
1989 & Gabow, Galil \& Spencer \hcm[.1] & $mn\log\log\log_d n + n^2\log n$ \hfill $d=2+m/n$\\\cline{1-3}
1990 & Gabow	& $mn + n^2\log n$	 \\\cline{1-3}
1991 & Gabow \& Tarjan	& $m\sqrt{n\log n}\log(nN)$  \hfill \INT\\\cline{1-3}
2006 & Sankowski 	& $Nn^\omega$ \hfill {\sc \underline{weight only}}, \INT\\\cline{1-3}
	&			&	$Nm\sqrt{n}/\kappa$		\hfill \MWMONLY, \INT\\
\rb{2.5}{2012} & \rb{2.5}{Huang \& Kavitha} & $Nn^\omega$ \hcm[.4] \hfill \MWMONLY, \RAND, \INT\\\cline{1-3} 
\multicolumn{3}{l}{\parbox{6in}{\emph{Note:} $N$ is the maximum integer edge weight, $\omega$ is the exponent of $n\times n$ matrix multiplication,
and $\kappa = \log n/\log(n^2/m)$.}}
\end{tabular}

\end{table}

Edmonds~\citeyear{Edmonds65,Ed65} proved that if we replace the integrality constraints~(\ref{INT}) with
(\ref{ODD}), the basic solutions to the resulting LP are integral.\footnote{In the \MWPM{} problem $\sum_{e=(u,u')} x(e) = 1$, for all $u\in V$,
and we have the freedom to use an alternative variety of odd-set constraints, namely, 
$\sum_{e=(u,v)\in E \::\: u\in B, v\not\in B} x(e) \ge 1, \; \forall B\in \Vodd$.}
Edmonds' algorithm mimics the structure of the Hungarian algorithm but the search for augmenting paths
is complicated by the presence of odd-length alternating cycles and the fact that matched edges must be searched in both directions.
Edmonds' solution is to contract {\em blossoms} as they are encountered.
A blossom is defined inductively as an odd-length cycle alternating between matched and unmatched edges,
whose components are either single vertices or blossoms in their own right.  Blossoms are discussed in detail in Section~\ref{sect:LP}.

The fastest implementation of Edmonds' algorithm, due to Gabow~\citeyear{G90}, runs in $O(mn + n^2\log n)$ time, 
which matches the running time of the best bipartite \MWPM{} algorithm~\cite{FT87}.
Gabow and Tarjan~\citeyear{GT91} extended their scaling algorithm for \MWPM{} to general graphs, achieving a running time
of $O(m\sqrt{n\log n}\log(nN))$, which is the fastest known algorithm for integer-weighted graphs and nearly matches the $O(m\sqrt{n})$ time bound of the best \MCM{} algorithms~\cite{MV80,Vazirani94}.\footnote{Gabow and Tarjan~\citeyear{GT91} claim a running time of $O(m\sqrt{n\log n\alpha(m,n)}\log(nN))$,
where the $\alpha(m,n)$ factor comes from an $O(m\alpha(m,n))$ implementation of the {\em split-findmin} data structure~\cite{G85}.  This can be reduced to $O(m\log\alpha(m,n))$~\cite{Pet05b}.  However, Thorup~\citeyear{Tho99} noted that split-findmin can be implemented in $O(m)$ time on integer-weighted graphs.}
As in the bipartite case, faster algorithms for \MWM{} and \MWPM{} are known when the graph is dense or $N$ is small.  
Sankowski~\citeyear{Sankowski09} noted that the {\em weight} of the \MWPM{} could be computed in $O(Nn^\omega)$ time;
however, it remains an open problem to adapt the cardinality matching algorithms of~\cite{MS04,Harvey09} to weighted graphs.
Huang and Kavitha~\citeyear{HuangK12}, generalizing \cite{KaoLST01}, proved that \MWM{} is reducible to $N$ \MCM{} computations,
which, by virtue of~\cite{GoldbergK04,MS04,Harvey09}, implies a new bound of $O(N \cdot \min\{n^\omega, m\sqrt{n}\log(n^2/m)/\log n\})$.  

\begin{table}[t]
\centering
\caption{\label{tbl:approxhistory}{\sc Approximate Maximum Cardinality/Weight Matching}}
\begin{tabular}{|c|l|l|l|@{\istrut[1]{3.5}}}
\multicolumn{3}{c}{}\\
\multicolumn{1}{l}{Year} & \multicolumn{1}{l}{Authors}	& \multicolumn{1}{l@{\hcm[.2]}}{Approx. Problem}	& \multicolumn{1}{l}{Time Bound \& Notes}	\\\cline{1-4}
1971	& Hopcroft \& Karp	&		& \\
1973	& Dinic \& Karzanov & \rb{2.5}{$(1-\epsilon)$-\MCM} & \rb{2.5}{$m\epsilon^{-1}$} \hfill \rb{2.5}{\BIP}\\\cline{1-4}
1980	& Micali \& Vazirani &		& 	 \\
1991	& Gabow \& Tarjan	 &	\rb{2.5}{$(1-\epsilon)$-\MCM}	& \rb{2.5}{$m\epsilon^{-1}$}  \\\cline{1-4}
\multicolumn{4}{l}{}\\\cline{1-4}
	& folklore/trivial	& $\f{1}{2}$-\MWM	& $m\log n$     \\\cline{1-4}
1988 & Gabow \& Tarjan	& $(1-\epsilon)$-\MWM & $m\sqrt{n}\log(n/\epsilon)$ \hfill \BIP\\\cline{1-4}
1991 & Gabow \& Tarjan	& $(1-\epsilon)$-\MWM & $m\sqrt{n\log n}\log(n/\epsilon)$\\\cline{1-4}
1999 & Preis		&			&		\\
2003 & Drake \& Hougardy \hcm[.2] & \rb{2.5}{$\f{1}{2}$-\MWM}	& \rb{2.5}{$m$}	 \\\cline{1-4}
2003 & Drake \& Hougardy & $(\f{2}{3} - \epsilon)$-\MWM	& $m\epsilon^{-1}$    \\\cline{1-4}
2004 & Pettie \& Sanders	& $(\f{2}{3} - \epsilon)$-\MWM & $m\log\epsilon^{-1}$\\\cline{1-4}
2010 & Duan \& Pettie		& 						&\\
2010 & Hanke \& Hougardy	& \rb{2.5}{$(\f{3}{4}-\epsilon)$-\MWM}	& \rb{2.5}{$m\log n \log\epsilon^{-1}$} \\\cline{1-4}
2010 & Hanke \& Hougardy  & $(\f{4}{5}-\epsilon)$-\MWM & $m\log^2 n \log\epsilon^{-1}$ \istrut[2]{3.5} \\\cline{1-4}
\multicolumn{2}{|c@{\hcm[.3]}|}{} 	& & $m\epsilon^{-1}\log\epsilon^{-1}$ \hcm[.3] \hfill {\sc arbitrary weights}\\
\multicolumn{2}{|c@{\hcm[.3]}|}{\rb{2.5}{\bf new}}  & \rb{2.5}{$(1-\epsilon)$-\MWM} & $m\epsilon^{-1}\log N$ \hfill \INT\\\cline{1-4}
\multicolumn{4}{l}{\parbox{6in}{\emph{Note:} $N$ is the maximum integer edge weight and $\epsilon > 0$ is arbitrary.}}
\end{tabular}
\end{table}

\subsection{Approximating Weighted Matching}

The approximate \MWM{} problem is remarkable in that it has been studied for decades, has practical applications,
and yet, as late as 1999, essentially nothing better than the greedy algorithm was known.\footnote{The greedy algorithm repeatedly includes the heaviest edge in the matching and removes all incident edges. Gabow and Tarjan~\citeyear{GT89,GT91} observed 
that by retaining the $O(\log(n/\epsilon))$ high-order bits of the edge weights, their exact scaling algorithms become $\tilde{O}(m\sqrt{n})$-time $(1-\epsilon)$-\MWM{} algorithms for bipartite and general graphs.}
Moreover, the $(1-\epsilon)$-\MCM{} problem had been solved satisfactorily in the early 1970s.
Although not stated as such, the $O(m\sqrt{n})$-time exact \MCM{} algorithms \cite{HK73,Dinic70,Karzanov73a,MV80} 
are actually $(1-\epsilon)$-\MCM{} algorithms running in $O(m\epsilon^{-1})$ time.
These algorithms are based on three observations (i) a maximal set of shortest augmenting paths can be found in linear time,
(ii) augmenting along such a set increases the length of the shortest augmenting path, and (iii) that after $k$ rounds of such augmentations the
resulting matching is a $(1-\f{1}{k+1})$-\MCM.

Preis~\citeyear{Preis99} gave a linear time $\f{1}{2}$-\MWM{} algorithm, which improves on the greedy algorithm's $O(m\log n)$ running time but
not its approximation guarantee.
Drake and Hougardy~\cite{VinkemeierH05}
presented the first linear time algorithm with an approximation guarantee greater than $1/2$.  Specifically,
they gave a $(\f{2}{3}-\epsilon)$-\MWM{} algorithm running in $O(m\epsilon^{-1})$ time, for any $\epsilon>0$.  The dependence on $\epsilon$ was later improved by 
Pettie and Sanders~\citeyear{PS04}.  These algorithms are based on a weighted version of Hopcroft and Karp's~\citeyear{HK73} argument, namely
that any matching whose weight-augmenting paths {\em and cycles} have at least $k$ unmatched edges is necessarily a $(1-\f{1}{k})$-\MWM.
Algorithms are presented in~\cite{DuanP10b,Hanke04,HankeH10} with different time/approximation tradeoffs: a $(\f{3}{4}-\epsilon)$-\MWM{} algorithm
running in time $O(m\log n\log\epsilon^{-1})$ and a $(\f{4}{5}-\epsilon)$-\MWM{} algorithm running in $O(m\log^2 n\log\epsilon^{-1})$ time.

\subsection{New Results}

We present the first $(1-\epsilon)$-\MWM{} algorithm that significantly improves on the $\tilde{O}(m\sqrt{n})$ running times of~\cite{GT89,GT91}.
Our algorithm runs in $O(m\epsilon^{-1}\log\epsilon^{-1})$ time on general graphs and $O(\min\{m\epsilon^{-1}\log\epsilon^{-1},m\epsilon^{-1}\log N\})$ time on integer-weighted general graphs.  This is optimal for any fixed $\epsilon$
and near-optimal as a function of $\epsilon$, given the state-of-the-art in \MCM{} algorithms.\footnote{Note that any $(1-\epsilon)$-\MWM{}
algorithm running in $O(f(\epsilon)m)$ time yields an {\em exact} \MCM{} algorithm running in $O(m\cdot (f(\epsilon) + \epsilon n))$ time, for any $\epsilon$. Thus,
any $(1-\epsilon)$-\MWM{} algorithm running in $o(m\epsilon^{-1})$ time would improve the $O(m\sqrt{n})$ \MCM{} algorithms~\cite{HK73,Dinic70,Karzanov73a,MV80}.}
Moreover, our algorithm is as {\em simple} as one could reasonably hope for.  Its search for augmenting paths uses depth first search~\cite[\S 8]{GT91} rather than the double depth first search of \cite{MV80}.
It uses no priority queues, split-findmin structures~\cite{G85}, or the blossom ``shells'' that arise from 
Gabow and Tarjan's~\citeyear{GT91} scaling technique.

Our second result is a new algorithm for exact \MWM{} on bipartite graphs running in $O(m\sqrt{n}\log N)$ time, which improves on \cite{Gab85,GT89}
for $1 \ll \log N \ll \log n$.  According to the \MWPM$\rightarrow$\MWM{} reduction, this also yields a new $O(m\sqrt{n}\log(nN))$ \MWPM{} algorithm, matching the performance of~\cite{GT89}.
However, our algorithm can be used to solve \MWPM{} directly, in $\log(\sqrt{n}N)$ scales rather than $\log(nN)$, which might be practically significant.
In terms of technique, our algorithm is a synthesis of the dual (Hungarian-type) approach of Gabow and Tarjan~\citeyear{GT89} and the primal approach of 
Balinski and Gomory~\citeyear{BalinskiG64}, among others.  The $\sqrt{n}$ factor in our running time arises not from the standard blocking flow-type argument~\cite{Karzanov73a,HK73}
but Dilworth's lemma~\citeyear{Dilworth50}, which ensures that every partial order on $n$ elements contains a chain or anti-chain with size $\sqrt{n}$.
Dilworth's lemma has also been used in Goldberg's~\citeyear{G95} single-source shortest path algorithm.

\subsection{Remarks on Approximate Weighted Matching and Its Applications}

Our focus is on algorithms that accept {\em arbitrary} input graphs and that give provably good {\em worst-case} approximations.
These twin objectives are self-evidently attractive, yet nearly all work (prior to~Preis~\citeyear{Preis99}) on approximate weighted matching
focused on specialized cases or weaker approximation guarantees.  Early work on the problem usually considered
complete bipartite graphs, and confirmed the efficiency of heuristics either experimentally or analytically 
with respect to inputs over some natural distribution~\cite{Brogden46,Thorndike50,Motzkin56,KuhnB62,Kurtzberg62,Avis78}.
See Avis~\citeyear{Avis83} for a more detailed discussion of heuristics.

Most work in the area considers graphs defined by metrics, often Euclidean metrics.
Reingold and Tarjan~\citeyear{ReingoldT81} proved that the greedy algorithm for metric \MWPM\footnote{For metric inputs
let \MWPM{} be the {\em minimum} weight perfect matching problem.} has an approximation ratio of $\approx n^{\log\f{3}{2}} > n^{0.58}$.
Goemans and Williamson~\citeyear{GW95} gave a 2-approximation for metric \MWPM{} that can be implemented in $O(n^2)$ time~\cite{GabowP02},
or $O(m\log^2 n)$ time~\cite{ColeHLP01} in metrics defined by $m$-edge graphs.  The Euclidean \MWPM{} comes in two flavors:
the monochromatic version is given $2n$ points and the bichromatic version is given $2n$ points, $n$ of which are colored blue, the rest red, where the matching
cannot include monochromatic edges.\footnote{The weight of the bichromatic \MWPM{} is also known as 
the {\em earth mover distance} between the red and blue points.}
Varadarajan and Agarwal~\citeyear{VaradA99} gave $(1+\epsilon)$-\MWPM{} algorithms for
the mono- and bichromatic variants running in time $O(n\poly(\epsilon^{-1}\log n))$ and $O(n^{3/2}\poly(\epsilon^{-1}\log n))$, respectively.
Other time-approximation tradeoffs for the bichromatic variant are possible~\cite{AgarwalV04,Charikar02}, including an $O(n\poly(\log n))$ time algorithm for $O(1)$-approximating the {\em weight} of the \MWPM{}~\cite{Indyk07}.
Some work considers the even more specialized case of Euclidean matching in the unit square, which allows for algorithms that guarantee
absolute upper bounds on the weight of the matching; see~\cite{IriMM83,ReingoldS83,Avis83} and the references therein.

There are several applications of \MWM{} (on general or bipartite graphs) in which one would gladly sacrifice matching quality for speed.
In input-queued switches packets are routed across a {\em switch fabric} from input to output ports.
In each cycle one partial permutation can be realized.  Existing algorithms for choosing these matchings,
such as iSLIP~\cite{McKeown99} and PIM~\cite{AndersonOST93}, guarantee $\fr{1}{2}$-\MCMs{}
and it has been shown~\cite{McKeownAW96,GiacconeLS05} that (approximate) \MWMs{}
have good throughput guarantees, where edge weights are based on queue-length.  See also~\cite{LeonardiMNM03,ShahK02,ShahGP02}.
Approximate \MWM{} algorithms are a component in several
multilevel graph clustering libraries.\footnote{E.g., METIS~\cite{KarypisK98b}, PARTY~\cite{PreisD97}, PT-SCOTCH~\cite{Pellegrini-SCOTCH08}
CHACO~\cite{HendricksonL95b}, JOSTLE~\cite{WalshawC07},
and KaFFPa/KaFFPaE~\cite{HoltgreweSS10,SandersS12}.} 
(PARTY, for example,
builds a hierarchical clustering by iteratively finding and contracting approximate \MWMs; see~\cite{PreisD97}.)
Approximate \MWM{} algorithms are used as a heuristic preprocessing step in several sparse linear
system solvers \cite{OlschowkaN96,DuffG02,SchenkWH07,HagemannS06}.  The goal is to permute the rows/columns to
maximize the weight on or near the main diagonal.

\subsection{Organization}

Section~\ref{sect:def} introduces some notation, states well known properties 
of augmenting paths and blossoms, 
and reviews Edmonds' optimality conditions for weighted matching.
In Section~\ref{sect:approxmwm} we present our $(1-\epsilon)$-\MWM{} algorithm
and in Section~\ref{sect:exact} we present exact algorithms for bipartite \MWM{} and \MWPM.

\section{Preliminaries}\label{sect:def}

We use $E(H)$ and $V(H)$ to refer to the edge and vertex sets of $H$ or the graph induced by $H$,
that is, $V(E')$ is the set of endpoints of $E'\subseteq E$ and $E(V')$ is the edge set of the graph induced by $V'\subseteq V$.  
A {\em matching} $M$ is a set of vertex-disjoint edges. Vertices not incident to an $M$ edge are {\em free}.
An alternating path (or cycle) is one whose edges alternate between $M$ and $E\backslash M$.
An alternating path $P$ is {\em augmenting} if $P$ begins and ends at free vertices,
that is, $M\symdiff P \bydef (M\backslash P)\cup (P\backslash M)$ is a matching with cardinality $|M\symdiff P| = |M|+1$.

When we seek $(1-\epsilon)$ approximate solutions, we can afford to scale and round edge weights to small integers.  To see this, observe that the weight of the \MWM{} is at least $w_{\max} = \max\{w(e) \;|\; e\in E(G)\}$.
It suffices to find a $(1-\epsilon/2)$-\MWM{} $M$ with respect to the weight function
$\wt(e) = \floor{w(e)/\gamma}$ where $\gamma = \epsilon\cdot w_{\max}/n$.  
Note that $w(e) - \gamma < \gamma\cdot \wt(e) \le w(e)$ for any $e$.
It follows from the definitions that:
\begin{align*}
w(M) &\ge \gamma\cdot \wt(M) 						& \mbox{Defn.~of $\wt$}\\
	&\ge \gamma \cdot (1-\epsilon/2)\wt(M^*) 				& \mbox{Defn.~of $M$, $M^*$ is the \MWM}\\
	&> (1-\epsilon/2)(w(M^*) - \gamma n/2)				& \mbox{Defn.~of $\wt$, $|M^*| \le n/2$}\\
	&= (1-\epsilon/2)(w(M^*) - \epsilon\cdot w_{\max}/2)	& \mbox{Defn.~of $\gamma$}\\
	&> (1-\epsilon)w(M^*)							& \mbox{Since $w(M^*) \ge w_{\max}$}
\end{align*}
Since it is better to use an exact \MWM{} algorithm when $\epsilon < 1/n$ \cite{G90,GT91}, we assume,
henceforth, that $w\::\: E\rightarrow \{1,2,\ldots,N\}$, where $N\le n^2$ is the maximum integer edge weight.  
For notational convenience we also assume that $N$ is a power of 2.

\subsection{Blossoms and the LP Formulation of MWM}\label{sect:LP}

The dual LP of (\ref{LP},\ref{ODD}) is

\begin{align*}
\mbox{minimize} \hcm[.5] & \sum_{u\in V(G)} y(u) + \sum_{B\in \Vodd} \f{|B|-1}{2}\cdot z(B)\\
\mbox{subject to} \hcm[.5] & yz(e) \ge w(e) & \forall e\in E(G)\\
& y(u) \ge 0, z(B) \ge 0 & \forall u\in V(G), \forall B\in \Vodd\istrut[3]{0}\\
\mbox{where, by definition,} \hcm[.5] & 
yz(u,v) \bydef y(u) + y(v) + \sum_{\substack{B\in \Vodd,\\(u,v)\in E(B)}} \zero{z(B)}
\end{align*}

Despite the exponential number of primal constraints and dual $z$-variables, Edmonds demonstrated
that an optimum matching could be found in polynomial time without maintaining information ($z$-values)
on more than $n/2$ elements of $\Vodd$ at any given time.
At intermediate stages of Edmonds' algorithm there is a matching $M$
and a laminar (nested) subset $\Omega\subseteq \Vodd$, where each element of $\Omega$
is identified with a {\em blossom}.
Blossoms are formed inductively as follows.  
If $v\in V$ then the set $\{v\}$ is a trivial blossom.
An odd length sequence $(A_0,A_1,\ldots,A_\ell)$ forms a nontrivial blossom $B=\bigcup_i A_i$
if the $\{A_i\}$ are blossoms
and there is a sequence of edges $e_0,\ldots, e_\ell$ where $e_i\in A_i\times A_{i+1}$ (modulo $\ell+1$) and $e_i\in M$ if and only if $i$ is odd, 
that is, $A_0$ is incident to unmatched edges $e_0,e_{\ell}$.  See Figure~\ref{fig:blossom}.
The {\em base} of blossom $B$ is the base of $A_0$; the base of a trivial blossom is its only vertex.
The set of {\em blossom edges} $E_B$ are $\{e_0,\ldots,e_\ell\}$ and those used in the formation of $A_0,\ldots,A_\ell$.
The set $E(B) = E\cap (B\times B)$ may, of course, include many non-blossom edges.
A short proof by induction shows that $|B|$ is odd and that the base of $B$ is the only unmatched vertex in the subgraph induced by $B$.

Matching algorithms represent a nested set $\Omega$ of {\em active} blossoms by rooted trees,
where leaves represent vertices and internal nodes represent nontrivial blossoms.
A {\em root blossom} is one not contained in any other blossom.  The children of an internal node representing $B$ are 
ordered according to the odd cycle that formed $B$, where one child is distinguished as containing
the base of $B$.
As we will see, it is often possible to treat blossoms as if they were single vertices.
Let the {\em contracted graph} $G/\Omega$ be obtained by contracting all root blossoms and removing spurious edges.
To {\em dissolve} a root blossom $B$ means
to delete its node in the blossom forest and, in the contracted graph, to replace $B$ with individual vertices $A_0,\ldots,A_\ell$.
Lemma~\ref{lem:blossom} summarizes some useful properties of the contracted graph.

\begin{lemma}\label{lem:blossom}
Let $\Omega$ be a set of blossoms with respect to a matching $M$.
\begin{enumerate}
\item If $M$ is a matching in $G$ then $M/\Omega$ is a matching in $G/\Omega$.
\item Every augmenting path $P'$ relative to $M/\Omega$ in $G/\Omega$ extends to an augmenting path $P$ relative to $M$ in $G$.
(That is, $P$ is obtained from $P'$ by substituting for each non-trivial blossom vertex $B$ in $P'$ a path through $E_B$.  See Figure~\ref{fig:blossom}(a,b).)
\item If $P$ is an augmenting path and $P/\Omega$ is also an augmenting path relative to $M/\Omega$, then $\Omega$ remains a valid set of blossoms (possibly with different bases) for the augmented matching $M\symdiff P$.
See Figure~\ref{fig:blossom}(a,b).
\item The base $u$ of a blossom $B\in \Omega$ uniquely determines a maximum cardinality matching of $E_B$, having size $(|B|-1)/2$.  See Figure~\ref{fig:blossom}(a,b).
\end{enumerate}
\end{lemma}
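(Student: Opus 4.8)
The plan is to reduce all four parts to a single structural fact about blossoms, proved by induction on the nesting depth of $\Omega$: from the base $b$ of any blossom $B$, every vertex $v\in B$ is reachable by an alternating path of \emph{even} length using only edges of $E_B$ (and, in fact, also by one of odd length). Together with the already-established fact that the base is the unique $M$-unexposed vertex inside $B$, this ``even-alternating-path'' property is the workhorse for parts~2 and~3, while parts~1 and~4 follow more directly. I would prove the four parts in the order 4, 1, 2, 3, since each relies on the preceding ones.

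For part~4, I induct on the structure of $B=(A_0,\ldots,A_\ell)$, where $\ell$ is even. Matching each sub-blossom $A_i$ by its own base-exposing matching (of size $(|A_i|-1)/2$, unique by the inductive hypothesis) and adding the matched blossom edges $e_1,e_3,\ldots,e_{\ell-1}$ exposes exactly the base of $A_0$, namely $b$, and has size $\sum_i(|A_i|-1)/2+\ell/2=(|B|-1)/2$. Uniqueness follows because within $E_B$ the sub-blossoms are joined in an odd cycle $A_0,\ldots,A_\ell$, and an odd cycle has a unique near-perfect matching exposing any prescribed vertex; combined with the inductive uniqueness inside each $A_i$, the whole matching is forced. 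Part~1 is then immediate: since the base is the only unexposed vertex of a root blossom $B$, every other vertex of $B$ is matched within $B$, so at most one $M$-edge (the one at $b$, if any) leaves $B$. Hence the contracted vertex for $B$ has $M/\Omega$-degree at most one and $M/\Omega$ is a matching in $G/\Omega$.

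For part~2, I expand $P'$ blossom by blossom. Wherever $P'$ meets a contracted vertex $B$, either it passes through---entering along the matched edge at $b$ and leaving along an unmatched edge at some vertex $y\in B$---or $B$ is a free endpoint, in which case $b$ is free and $P'$ leaves along an unmatched edge at some $y\in B$. In both cases I splice in the even-length alternating path from $b$ to $y$ guaranteed by the workhorse property; choosing even length makes it begin with an unmatched edge and end with a matched edge at $y$, so the resulting $P$ alternates correctly and is augmenting relative to $M$. The workhorse property is proved by the same induction as part~4: inside the odd cycle of sub-blossoms one can travel from $A_0$ to the sub-blossom containing $v$ along either arc, realizing both parities, and inside each sub-blossom one recurses to reach its base and then $v$.

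The main obstacle is part~3. Here $P/\Omega$ being augmenting in $G/\Omega$ forces $P$ to meet each root blossom $B$ exactly as in part~2: it enters at $b$ and traverses an even alternating path inside $B$ to the exit vertex $y$ (or runs from the free base to $y$ when $B$ is a free endpoint). Augmenting flips the matched/unmatched status of these internal edges, and I must show that $B$ remains a valid blossom for $M\symdiff P$ with new base $y$. Intuitively, flipping an even alternating path from the base rotates the blossom's exposed vertex to $y$ while preserving the alternating odd cycle $e_0,\ldots,e_\ell$ (only its distinguished starting point moves). The delicate point is that this rotation must propagate consistently through \emph{every} level of nesting: the shift of the top-level base forces a compatible re-choice of base in precisely those sub-blossoms the internal path passes through, and one must verify by induction that each such sub-blossom is again a valid blossom for the flipped matching, while sub-blossoms off the path are untouched. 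Carefully bookkeeping these simultaneous base changes is where the real work lies; parts~1, 2, and~4 are essentially mechanical once the even-alternating-path property is in hand.
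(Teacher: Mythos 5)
A preliminary remark on the comparison itself: the paper never proves this lemma---it is presented in Section~\ref{sect:def} as a summary of well-known blossom facts---so your proposal stands or falls on its own merits. Parts 1 and 2 and your ``workhorse'' even-alternating-path property follow the standard route and are essentially sound. The genuine gap is in part 4, and it is fatal to your induction: your uniqueness step tacitly assumes that a maximum matching of $E_B$ projects to a matching of the contracted odd cycle, i.e., that at most one cycle edge incident to each sub-blossom is used. When sub-blossoms are nontrivial this fails, because the \emph{unmatched} cycle edges $e_0,e_2,\ldots$ need not attach at bases. Concretely, let $B_2$ be a triangle blossom on $\{x_1,x_2,x_3\}$ with base $x_1$ and $(x_2,x_3)\in M$, and let $B_1=(B_2,\{y\},\{z\})$ with cycle edges $e_0=(x_2,y)\notin M$, $e_1=(y,z)\in M$, $e_2=(z,x_3)\notin M$. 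Then $E_{B_1}$ has \emph{two} matchings of size $2=(|B_1|-1)/2$ exposing the base $x_1$: the structured one $\{(x_2,x_3),(y,z)\}$ and the unstructured one $\{(x_2,y),(x_3,z)\}$, which uses both $e_0$ and $e_2$ (incident to $B_2$ at different vertices). So part 4 under the literal reading you adopted is false, and no induction can close; what is true, and what matching algorithms actually use, is that the base determines a unique maximum matching of $E_B$ \emph{consistent with the nested cycle structure} (every sub-blossom's cycle still alternates with the matching). Your alternation-forcing argument is correct only once that consistency requirement is added as a hypothesis and carried through the induction.

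The same issue undermines part 3, where you additionally assert something the hypothesis does not force. That $P/\Omega$ is augmenting does force $P\cap B$ to be an even alternating path from the base $b$ to the exit vertex $y$, but it does not force that path to be the structured path through $E_B$---it need not use $E_B$ at all, since $E(B)\setminus E_B$ may be nonempty. Example: let $B$ be the 5-cycle blossom $(b,v_1,v_2,v_3,v_4)$ with $(v_1,v_2),(v_3,v_4)\in M$ and free base $b$, suppose $G$ also contains the unmatched chords $(v_1,v_4)$ and $(v_3,b)$, and let $f$ be a free vertex adjacent to $v_2$. Then $P=(f,v_2,v_1,v_4,v_3,b)$ is augmenting in $G$ and $P/\Omega=(f,B)$ is augmenting in $G/\Omega$, yet after forming $M\symdiff P$ no edge of $E_B$ is matched, so the recorded cycle $(b,v_1,v_2,v_3,v_4)$ alternates for no choice of base: the structure of $B$ in $\Omega$ is destroyed. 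Hence part 3, like part 4, is true only for the structured extensions produced in part 2 (which is exactly how the paper's Augmentation step uses it), and your proof of part 3 both makes a false forcing claim and, by your own admission, leaves the inductive bookkeeping---which is the entire content of that part---undone. The repair is to restate parts 3 and 4 with structure-consistency built in (part 3: $P$ is an extension as constructed in part 2; part 4: uniqueness among matchings for which every sub-blossom's cycle alternates); with those hypotheses your rotation and alternation arguments do go through.
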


Implementations of Edmonds' algorithm grow a matching $M$ while maintaining Property~\ref{prop:yz-strict}, which
controls the relationship between $M$, $\Omega$ and the dual variables.

\begin{property}\label{prop:yz-strict} (Complementary Slackness Conditions)
\begin{enumerate}
    \item {\em Nonnegativity:} $z(B)\geq 0$ for all $B\in\Vodd$ and $y(u)\ge 0$ for all $u\in V(G)$.
    \item {\em Active Blossoms:} $\Omega$ contains all $B$ with $z(B)>0$ and all root blossoms $B$ have $z(B)>0$. (Non-root blossoms may have zero $z$-values.)\label{item:z-strict}
    \item {\em Domination:} $yz(e)\geq w(e)$ for all $e\in E$.\label{item:yz-lb-strict}
    \item {\em Tightness:} $yz(e)= w(e)$ when $e\in M$ or $e\in E_B$ for some $B\in \Omega$.\label{item:yz-ub-strict}
\end{enumerate}
\end{property}

If the $y$-values of free vertices become zero, it follows from domination and tightness that $M$ is a maximum weight matching,
as the following short proof attests.  Here $M^*$ is any maximum weight matching.

\begin{align*}
w(M) &= \sum_{e\in M} w(e)\\
	&= \sum_{e\in M} yz(e) 		& \mbox{tightness}\\
        &= \sum_{u\in V(G)} y(u) + \sum_{B\in \Omega} \f{|B|-1}{2}\cdot z(B) & \mbox{Note } \sum_{u\in V(G)} y(u) = \sum_{u\in V(M)} y(u)\\
        &\ge \sum_{u\in V(M^*)} y(u) + \sum_{B\in \Omega} |E(B)\cap M^*|\cdot z(B)   & \mbox{$y,z$ non-negative}\\
        &= \sum_{e\in M^*} yz(e) \; \ge \; w(M^*) & \mbox{domination}
\end{align*}

\begin{figure}[t]
\centering
\begin{tabular}{c@{\hcm[1.5]}c}
\scalebox{.6}{\includegraphics{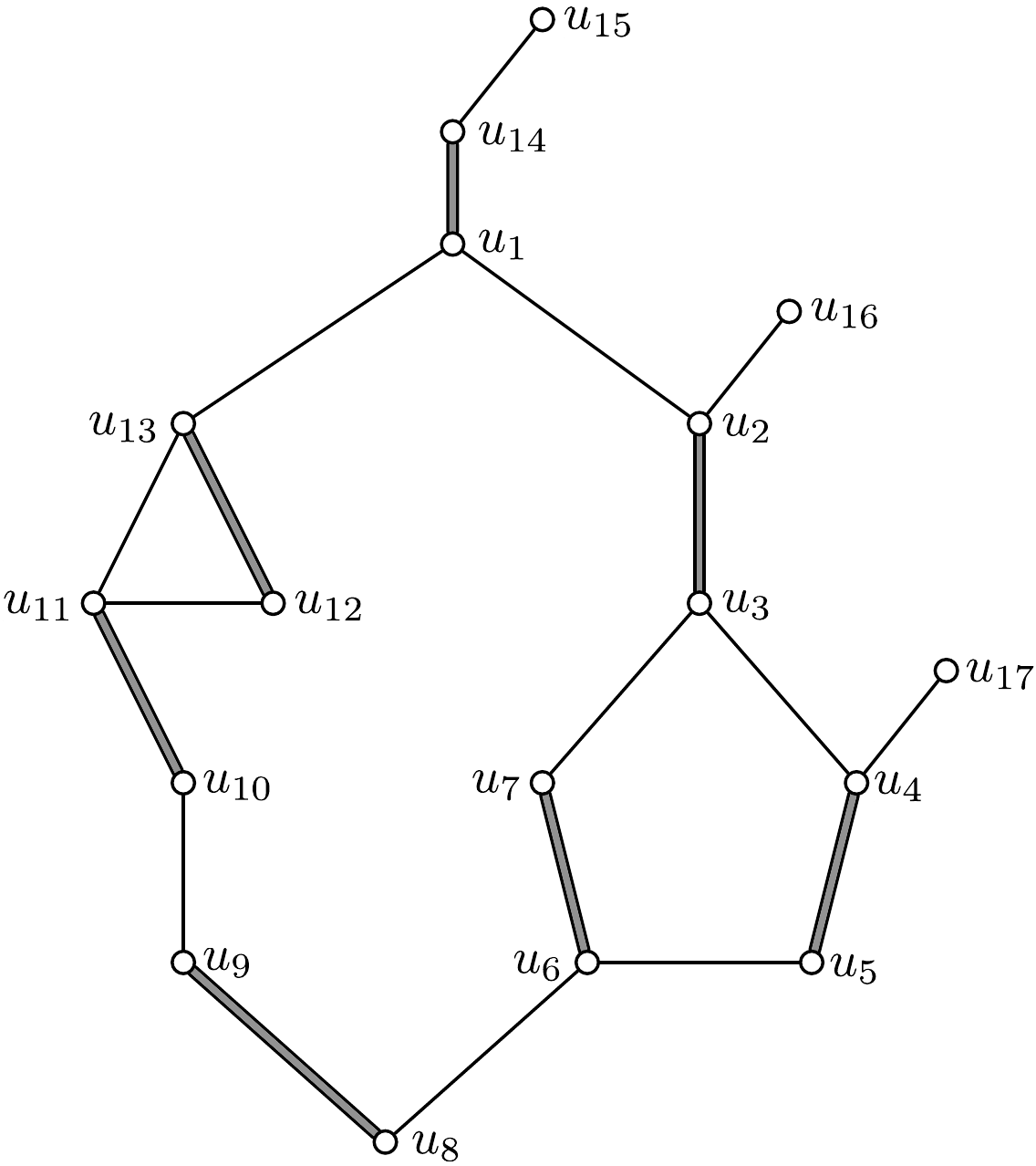}}
&
\scalebox{.6}{\includegraphics{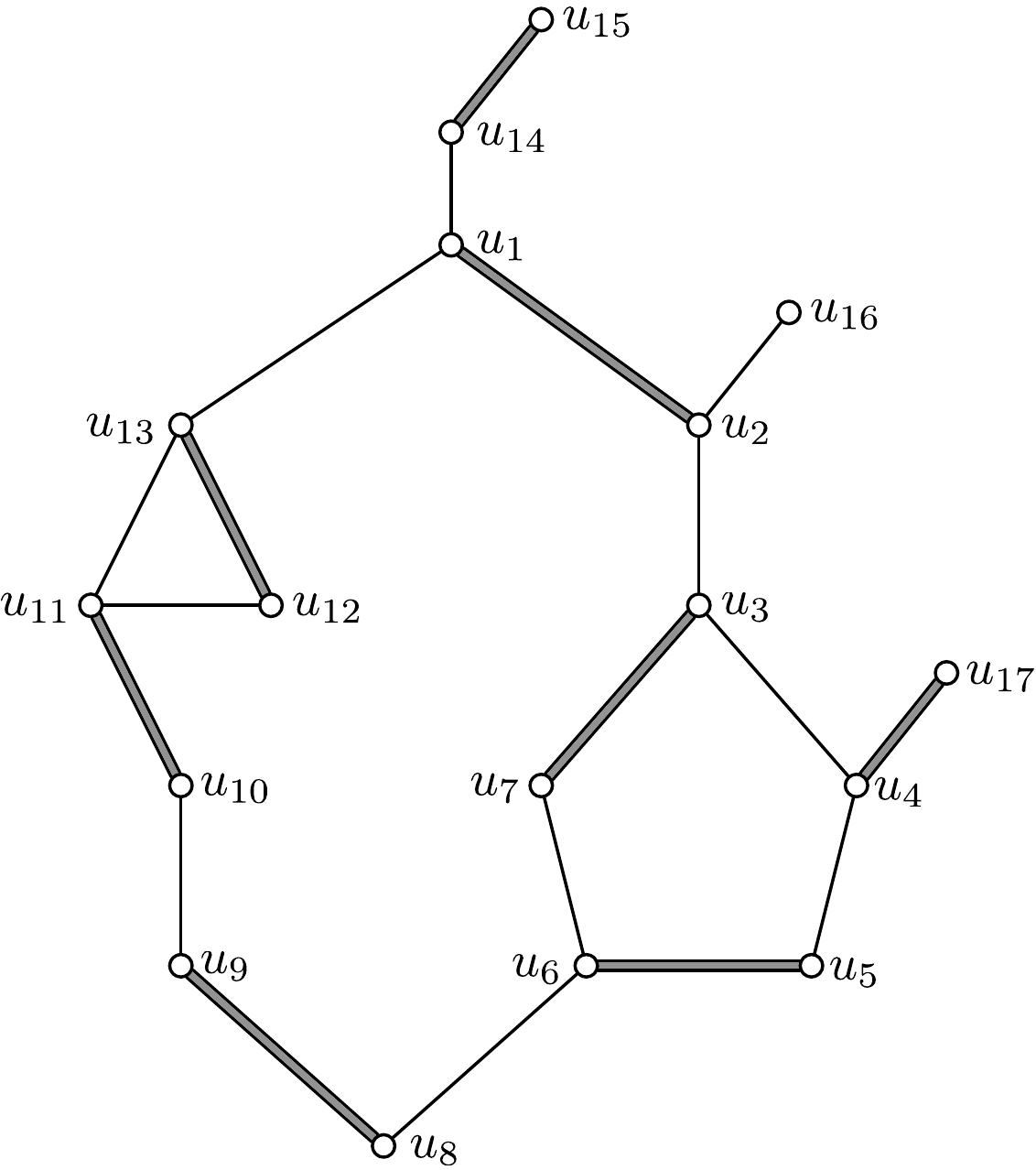}}\\
\ \\
(a) & (b)
\end{tabular}
\caption{\label{fig:blossom}Thick edges are matched, thin unmatched.
(a)
A blossom $B_1 = (u_1,u_2,B_2,u_8,u_9,u_{10},B_3)$ with base $u_1$
containing
non-trivial sub-blossoms $B_2 = (u_3,u_4,u_5,u_6,u_7)$ with base $u_3$
and $B_3 = (u_{11},u_{12},u_{13})$ with base $u_{11}$.  Vertices $u_{15},u_{16},$ and $u_{17}$ are free.
The path $(u_{16},u_2,u_3,u_7,u_6,u_5,u_4,u_{17})$ is an example of an augmenting path that 
exists in $G$ but not $G/B_1$,
the graph obtained by contracting $B_1$.
(b) 
The situation after augmenting along $(u_{15},u_{14},B_1,u_{17})$ in $G/B_1$,
which corresponds to augmenting along 
$(u_{15},u_{14},u_1,u_2,u_3,u_7,u_6,u_5,u_4,u_{17})$ in $G$.
After augmentation $B_1$ and $B_2$ have their base at $u_4$.
}
\end{figure}

\section{A Scaling Algorithm for Approximate \MWM}\label{sect:approxmwm}

Our algorithm maintains a {\em dynamic} relaxation of complementary slackness.  In the beginning {\em domination}
is weak but becomes progressively tighter at each scale whereas {\em tightness} is weakened at each scale,
though not uniformly.  The degree to which a matched edge or blossom edge may violate tightness depends on {\em when} it last entered the blossom or matching.

Recall that $N$ is the maximum integer edge weight.  The parameter $\epsilon' = \Theta(\epsilon)$ will be selected later to guarantee
that the final matching is a $(1-\epsilon)$-\MWM.  Henceforth, assume that $N\ge 1$ and $\epsilon' \le 1/4$ are powers of two.
Define $\delta_0 = \epsilon' N$ and $\delta_{i} = \delta_0/2^i$.
At scale $i$ we use the truncated weight function $w_i(e) = \delta_i \floor{w(e)/\delta_i}$.
Note that $w_{i+1}(e) = w_i(e)$ or $w_i(e)+\delta_{i+1}$.

\begin{property}\label{prop:yz} {(Relaxed Complementary Slackness)}
There are $L+1$ scales numbered $0,\ldots,L$, where $L = \log N$.
Let $i\in [0,L]$ be the current scale.

\begin{enumerate}
    \item {\em Granularity:} $z(B)$ is a nonnegative multiple of $\delta_i$, for all $B\in\Vodd$, and $y(u)$ is a nonnegative multiple of $\delta_i/2$, 
    for all $u\in V(G)$.\label{item:nonneg}
    \item {\em Active Blossoms:} $\Omega$ contains all $B$ with $z(B)>0$ and all root blossoms $B$ have $z(B)>0$. (Non-root blossoms may have zero $z$-values.)\label{item:z}
    \item {\em Near Domination:} $yz(e) \geq w_i(e) - \delta_i$ for all $e\in E$.\label{item:yz-lb}
    \item {\em Near Tightness:} Call a matched or blossom edge {\em type $j$} if it was last made a matched or blossom edge in scale $j\le i$.
    (That is, it entered the set $M\cup\bigcup_{B\in\Omega} E_B$ in scale $j$ and has remained in that set, even as $M$ and $\Omega$ evolve as augmenting paths are found and blossoms are formed and dissolved.)
    If $e$ is such a type $j$ edge then $yz(e) \le w_i(e) + 2(\delta_j - \delta_i)$.\label{item:yz-ub}
    \item {\em Free Vertex Duals:} The $y$-values of free vertices are equal and strictly less than the $y$-values of matched vertices.\label{item:y}
\end{enumerate}
\end{property}

Lemma~\ref{lem:approx} allows us to measure the quality of a matching $M$, given duals $y$ and $z$
satisfying Property~\ref{prop:yz}.

\begin{lemma}\label{lem:approx}
Let $M$ be a matching satisfying Property~\ref{prop:yz} at scale $i$ and let $M^*$ be a maximum weight matching. 
Let $f$ be the number of free vertices, each having $y$-value $\phi$, and let $\hat{\epsilon}$ be such that
$yz(e) - w(e) \le \hat{\epsilon}\cdot w(e)$ for all $e\in M$.
Then $w(M) \ge (1+\hat{\epsilon})^{-1}(w(M^*) - 2\delta_i  |M^*| - f\phi)$.  
If $i = L$ and $\phi=0$ then $M$ is a $(1-\epsilon' -\hat{\epsilon})$-\MWM.
\end{lemma}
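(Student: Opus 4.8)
The plan is to mirror the exact optimality proof given after Property~\ref{prop:yz-strict}, but to carry the three relaxations in Property~\ref{prop:yz} --- near domination, the multiplicative slack $\hat\epsilon$ in place of tightness, and the nonzero free-vertex potential $\phi$ --- as explicit additive and multiplicative error terms. Write $\Phi = \sum_{u\in V(G)} y(u) + \sum_{B\in\Omega} \f{|B|-1}{2}\,z(B)$ for the dual objective. First I would record two pointwise consequences of Property~\ref{prop:yz}: from near domination together with $w_i(e) > w(e) - \delta_i$ (which holds because $w_i(e) = \delta_i\floor{w(e)/\delta_i}$), every edge satisfies $yz(e) \ge w(e) - 2\delta_i$; and the hypothesis $yz(e)-w(e)\le\hat\epsilon\, w(e)$ rewrites as $w(e) \ge (1+\hat\epsilon)^{-1}\,yz(e)$ for every $e\in M$.

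The heart of the argument is a pair of estimates for $\sum_{e\in M} yz(e)$ and $\sum_{e\in M^*} yz(e)$ obtained by expanding the definition of $yz$ and swapping the order of summation, so that each $z(B)$ is weighted by $|E(B)\cap M|$ (resp.\ $|E(B)\cap M^*|$). Since the base of each $B\in\Omega$ is the only vertex of $B$ unmatched within $B$, exactly $(|B|-1)/2$ edges of $M$ lie inside $B$, giving $|E(B)\cap M| = (|B|-1)/2$; meanwhile $|E(B)\cap M^*|\le(|B|-1)/2$ because $|B|$ is odd. Combined with the observation that the free vertices contribute exactly $f\phi$ to $\sum_{u\in V(G)}y(u)$, this yields $\sum_{e\in M} yz(e) = \Phi - f\phi$ and, using $y\ge 0$ and $z\ge 0$, $\sum_{e\in M^*} yz(e) \le \Phi$. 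The pointwise bound $yz(e)\ge w(e)-2\delta_i$ then gives $w(M^*) \le \sum_{e\in M^*} yz(e) + 2\delta_i|M^*| \le \Phi + 2\delta_i|M^*|$, i.e.\ $\Phi \ge w(M^*) - 2\delta_i|M^*|$. Substituting back, $\sum_{e\in M} yz(e) \ge w(M^*) - 2\delta_i|M^*| - f\phi$, and applying $w(M) \ge (1+\hat\epsilon)^{-1}\sum_{e\in M} yz(e)$ proves the first claim.

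For the second claim I would specialize to $i=L=\log N$, where $\delta_L = \delta_0/2^L = \epsilon'$ and $\phi=0$. The one subtlety is that the generic bound carries the factor $2\delta_i$, which would yield only a $(1-2\epsilon'-\hat\epsilon)$ guarantee. The fix is to notice that at the final scale near domination loses merely $\delta_L$ rather than $2\delta_L$: because $1/\epsilon'$ is an integer (a power of two) and the weights are integers, every $w(e)$ is a multiple of $\delta_L=\epsilon'$, so $w_L(e)=w(e)$ exactly and hence $yz(e)\ge w_L(e)-\delta_L = w(e)-\epsilon'$. Redoing the estimate with this improvement gives $w(M)\ge(1+\hat\epsilon)^{-1}(w(M^*)-\epsilon'|M^*|)$; then, since every edge weight is at least $1$ we have $|M^*|\le w(M^*)$, and a routine use of $(1+\hat\epsilon)^{-1}(1-\epsilon')\ge(1-\hat\epsilon)(1-\epsilon')\ge 1-\hat\epsilon-\epsilon'$ finishes the proof.

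I expect the main obstacle to be the blossom bookkeeping --- rigorously justifying $|E(B)\cap M|=(|B|-1)/2$ from the blossom structure (Lemma~\ref{lem:blossom} and the fact that the base is the unique internally unmatched vertex) and the summation swap --- together with spotting the factor-of-two saving at scale $L$, which is precisely where the integrality of the scaled weights is used.
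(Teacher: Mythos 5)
Your proposal is correct and follows essentially the same route as the paper's proof: both compare $w(M)$ and $w(M^*)$ through the dual objective using near tightness (via $\hat{\epsilon}$) on $M$-edges, the blossom facts $|E(B)\cap M|=(|B|-1)/2$ and $|E(B)\cap M^*|\le(|B|-1)/2$, nonnegativity of the duals, and near domination on $M^*$-edges, and both recover the second claim by observing that $w_L=w$ at the final scale so only $\delta_L|M^*|$ (not $2\delta_L|M^*|$) is lost. Introducing $\Phi$ and splitting the paper's single chain of inequalities into two halves is only a cosmetic reorganization.
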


\begin{proof}
The claim follows from Property~\ref{prop:yz}.
\begin{align}
w(M) &= \sum_{e\in M} w(e)						& \mbox{defn. of $w(M)$}\nonumber\\
	&\ge (1+\hat{\epsilon})^{-1}\sum_{e\in M} yz(e) 		& \mbox{near tightness, defn. of $\hat{\epsilon}$}\nonumber\\
        &= (1+\hat{\epsilon})^{-1}\paren{\sum_{u\in V(M)} y(u) + \sum_{B\in \Omega} \f{|B|-1}{2}\cdot z(B)} & \mbox{defn. of $yz$}\nonumber\\
        &\ge (1+\hat{\epsilon})^{-1}\paren{\sum_{u\in V(M^*)} y(u)  + \sum_{B\in \Omega} |E(B)\cap M^*|\cdot z(B) - f\phi}\hcm[-1]\label{eqn:approx1}\\
        &\ge (1+\hat{\epsilon})^{-1}\paren{\sum_{e\in M^*} yz(e) - f\phi} 		& \mbox{defn. of $yz$}\nonumber\\
        & \ge (1+\hat{\epsilon})^{-1}\paren{w_i(M^*)  - f\phi- \delta_i\cdot |M^*|} & \mbox{near domination}\nonumber\\
        & > (1+\hat{\epsilon})^{-1}\paren{w(M^*)  - f\phi- 2\delta_i\cdot |M^*|} & \mbox{defn. of $w_i$}\nonumber
\end{align}
Inequality~(\ref{eqn:approx1}) follows from several facts, namely, that no matching can contain more than $(|B|-1)/2$ edges in $B$,
that $V(M^*)\backslash V(M)$ contains only free vertices (with respect to $M$), whose $y$-values are $\phi$, and that $y$- and $z$-values are nonnegative.  Note that the last inequality is loose by $\delta_i|M^*|$ if $i=L$ since in that case $w_L = w$.

The integrality of edge weights implies that $w(M^*) \ge |M^*|$.  If $i=L$ and $\phi=0$ then
$\delta_L = \epsilon'N/2^L = \epsilon'$
and 
$w(M) \ge (1+\hat{\epsilon})^{-1}(w(M^*) - \delta_L|M^*|) \ge (1+\hat{\epsilon})^{-1}(1-\epsilon')w(M^*) > (1-\epsilon'-\hat{\epsilon})w(M^*)$,
that is, $M$ is a $(1-\epsilon'-\hat{\epsilon})$-\MWM.
\end{proof}

While not suggesting an algorithm per se, Lemma~\ref{lem:approx} tells us which invariants our algorithm must maintain and 
when it may halt with a $(1-\epsilon)$-\MWM.  
For example, if at the last scale $L$ we have $\delta_L \le \epsilon/2$ and for any type $j$ edge $e\in M$, $\delta_j \le (\epsilon/4)w(e)$ (i.e., $\hat{\epsilon} = \epsilon/2$) then as soon as $y$-values at free vertices reach zero, the current matching must be a $(1-\epsilon)$-\MWM.

\subsection{The Scaling Algorithm}

Initially $M=\emptyset, \Omega = \emptyset$, and
$y(u) = N/2 - \delta_0/2$ for all $u\in V$, which clearly satisfies Property~\ref{prop:yz} for scale $i=0$,
since $yz(e) = 2(N/2 - \delta_0/2) \ge w_0(e) - \delta_0$.
The algorithm, given in Figure~\ref{fig:alg}, consists of scales $0,\ldots,L=\log N$, where the purpose of
scale $i$ is to halve the $y$-values of free vertices while maintaining Property~\ref{prop:yz}.
In each iteration of scale $i$ the algorithm (1) augments a maximal set of augmenting paths of {\em eligible} edges,
(2) finds and contracts blossoms of {\em eligible} edges, (3) performs dual adjustments on $y$- and $z$-values, and (4) dissolves previously contracted
root blossoms if their $z$-values become zero.  
Each dual adjustment step decrements by $\delta_i/2$ the $y$-values of free vertices.  Thus, 
there are roughly $(N/2^i)/(\delta_i/2) = O(\epsilon^{-1})$ iterations per scale, independent of $i$.
The efficiency and correctness of the algorithm depend on {\em eligibility} being defined properly.

\begin{definition}\label{def:eligible}
At scale $i$, an edge $e$ is {\em eligible} if at least one of the following hold:

\parbox{5in}{
\begin{enumerate}
\item[(i)] $e\in E_B$ for some $B\in \Omega$.
\item[(ii)] $e\not\in M$ and $yz(e) = w_i(e) - \delta_i$.
\item[(iii)] $e\in M$ and $yz(e) - w_i(e)$  is a nonnegative integer multiple of $\delta_i$.
\end{enumerate}
}

Let $E_{elig}$ be the set of eligible edges 
and let $\Gelig = (V,E_{elig})/\Omega$ be the unweighted graph obtained by discarding ineligible edges and contracting
root blossoms.
\end{definition}

Criterion (i) for eligibility simply ensures that an augmenting path in $\Gelig$ extends to an augmenting path of eligible edges in $G$.
A key implication of Criteria (ii) and (iii) is that if $P$ is an augmenting path in $\Gelig$, every edge in $P$ becomes ineligible in $(M/\Omega) \symdiff P$.
This follows from the fact that unmatched edges must have $yz(e) - w_i(e) < 0$ whereas matched edges must have $yz(e) - w_i(e) \ge 0$.
Regarding Criterion (iii), note that Property~\ref{prop:yz} (granularity and near domination) implies that 
$yz(e) - w_i(e)$ is at least 
$-\delta_i$ and an integer multiple of $\delta_i/2$.

\begin{figure}
\centering
\framebox{
\begin{minipage}{6.2in}
\underline{Initialization:}
\begin{align*}
M &\leftarrow \emptyset & \zero{\hcm[-6]\mbox{no matched edges}}\\
\Omega &\leftarrow \emptyset &\zero{\hcm[-6]\mbox{no blossoms}}\\
\delta_0 &\leftarrow \epsilon' N & \zero{\hcm[-6]\mbox{$\epsilon' = \Theta(\epsilon)$; w.l.o.g., $\epsilon',N$ are powers of 2}}\\
y(u) &\leftarrow \f{N}{2}- \f{\delta_0}{2}, \mbox{ for all } u\in V(G) \hcm[7] & \zero{\hcm[-6]\mbox{satisfies Property~\ref{prop:yz}(\ref{item:yz-lb})}}\\
\end{align*}

Execute scales $i=0\ldots,L=\log N$ and return the matching $M$. \\

\underline{Scale $i$:}\\

\begin{itemize}
\item[] \begin{itemize}
	\item Repeat the following steps until $y$-values of free vertices reach $N/2^{i+2} - \delta_i/2$, if $i\in[0,L)$,
	or until they reach zero, if $i = L$.\\

    \begin{itemize}
    \item[(1)] {\bf Augmentation:}\\
    Find a maximal set $\Psi$ of augmenting paths in
    $\Gelig$ and set $M \leftarrow M\symdiff(\bigcup_{P\in\Psi}P)$.
    Update $\Gelig$.\\
    
    \item[(2)] {\bf Blossom Shrinking:}\\
    Let $V_{out}\subseteq V(\Gelig)$ be the vertices (that is, root blossoms) reachable from free vertices by even-length alternating paths; let $\Omega'$ be a maximal set of (nested) blossoms on $V_{out}$.  (That is, if $(u,v)\in E(\Gelig)\backslash M$ and $u,v\in V_{out}$, then $u$ and $v$ must be in a common blossom.)  
    Let $V_{in} \subseteq V(\Gelig)\backslash V_{out}$ be those non-$V_{out}$-vertices reachable from free vertices by odd-length alternating paths.
    	Set $z(B) \leftarrow 0$ for $B\in \Omega'$ and set $\Omega \leftarrow \Omega \cup \Omega'$.  Update $\Gelig$.\\
	
    \item[(3)] {\bf Dual Adjustment:}\\
    Let $\hat{V}_{in},\hat{V}_{out}\subseteq V$ be original vertices represented by vertices in $V_{in}$ and $V_{out}$.
    The $y$- and $z$-values for some vertices and root blossoms are adjusted:
    \begin{align*}
        y(u) &\leftarrow y(u)-\delta_i/2, \mbox{ for all $u\in \hat{V}_{out}$.}\\
        y(u) &\leftarrow y(u)+\delta_i/2,  \mbox{ for all $u\in \hat{V}_{in}$.}\\
        z(B) &\leftarrow z(B)+\delta_i, \mbox{ if $B\in\Omega$ is a root blossom with $B\subseteq \hat{V}_{out}$.}\\
        z(B) &\leftarrow z(B)-\delta_i, \mbox{ if $B\in\Omega$ is a root blossom with $B\subseteq \hat{V}_{in}$.}\hcm[2]
    \end{align*}

    \item[(4)] After dual adjustments some root blossoms may have zero $z$-values.  Dissolve such blossoms (remove them from $\Omega$) as long as they exist.
    Note that non-root blossoms are allowed to have zero $z$-values.  Update $\Gelig$ by the new $\Omega$.\\
\end{itemize}

\item Prepare for the next scale, if $i\in [0,L)$:
\begin{align*}
\delta_{i+1} &\leftarrow \delta_i/2\\
y(u) &\leftarrow y(u) + \delta_{i+1}, \mbox{ for all } u\in V(G).\hcm[5]
\end{align*}
\end{itemize}
\end{itemize}
\end{minipage}
}
\caption{\label{fig:alg} A $(1-\epsilon)$-approximate \MWM{} algorithm}
\end{figure}

\subsection{Analysis and Correctness}

Lemma~\ref{lem:propone-dualadj} states that the algorithm maintains 
Property~\ref{prop:yz} after each of the $O(\epsilon^{-1})$ Dual Adjustment steps in each scale.
Lemma~\ref{lem:non-exist-path} establishes the critical fact 
that all augmenting paths and paths from free vertices to blossoms are eliminated from $\Gelig$
before each Dual Adjustment step.

\begin{lemma}\label{lem:non-exist-path}
After the Augmentation and Blossom Shrinking steps $\Gelig$ contains
no augmenting path, nor is there a path from a free vertex to a blossom.
\end{lemma}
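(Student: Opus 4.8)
The plan is to track how the two steps transform $\Gelig$ through three snapshots, $H_0$ (before Augmentation), $H_1$ (after Augmentation), and $H_2=\Gelig$ (after Blossom Shrinking), and to verify the two post-conditions in turn. First I would rule out augmenting paths in $H_1$ using the eligibility-flip observation recorded just after Definition~\ref{def:eligible}. Augmentation leaves every $yz$-value fixed and only toggles the matched/unmatched status of edges on the chosen paths; since an unmatched eligible edge has $yz(e)-w_i(e)=-\delta_i$ while a matched eligible edge has $yz(e)-w_i(e)\ge 0$, every edge of the maximal set $\Psi$ becomes ineligible in $H_1$, and no edge gains eligibility. Now suppose $H_1$ had an augmenting path $Q$. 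Its edges are eligible in $H_1$, so they avoid all (now ineligible) $\Psi$-edges; moreover every vertex on a $\Psi$-path is matched after augmentation, with its unique matched edge being a $\Psi$-edge, so $Q$ cannot pass through such a vertex without traversing that ineligible matched edge. Hence $Q$ is vertex-disjoint from $\bigcup_{P\in\Psi}P$, and since its edges retain their status and eligibility from $H_0$, it is an augmenting path of $H_0$ disjoint from $\Psi$, contradicting the maximality of $\Psi$.

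Next I would push this through Blossom Shrinking. This step introduces no new inter-vertex edges and does not alter $yz$; the only edges that newly become eligible are the blossom edges of the shrunk blossoms $\Omega'$, and each is internal to a contracted vertex of $H_2$, hence absent from $E(H_2)$. Thus $H_2$ is simply $H_1$ with the blossoms of $\Omega'$ contracted, so by Lemma~\ref{lem:blossom}(2) any augmenting path of $H_2$ would lift to an augmenting path of $H_1$, which we have just excluded. This establishes the first assertion of the lemma.

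For the second assertion I would exploit the maximality of the blossom construction. In $H_1$ the absence of augmenting paths makes the alternating forest rooted at the free vertices well defined, with even-reachable set $V_{out}$ and odd-reachable set $V_{in}$, and $\Omega'$ is by construction a maximal nested family of blossoms on $V_{out}$, so every unmatched eligible edge joining two even-reachable root blossoms is absorbed into a common blossom. After contracting $\Omega'$, the even vertices of $H_2$ are exactly the images of the $V_{out}$-blossoms, and no unmatched edge of $H_2$ joins two distinct even vertices, since such an edge would witness a blossom left uncontracted, contradicting maximality. A forbidden ``path from a free vertex to a blossom'' is precisely an alternating path exhibiting either a residual augmenting path or an unabsorbed even blossom, and both have now been eliminated.

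The step I expect to be the crux is this last one: contracting $\Omega'$ can change the parity of alternating paths and thereby expose even-reachable vertices that were odd-reachable in $H_1$, so ``maximal'' must be read as iterated to a fixpoint, guaranteeing that the even vertices of $H_2$ coincide with the contracted $V_{out}$-blossoms and admit no further blossom. The remaining difficulty is purely bookkeeping across $H_0,H_1,H_2$, since Augmentation deletes eligible edges while Blossom Shrinking both adds (internal) blossom edges and contracts, and one must be careful that each structural claim is read against the correct snapshot.
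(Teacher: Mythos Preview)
Your proof is correct and follows the same approach as the paper: the eligibility flip on $\Psi$-edges together with the maximality of $\Psi$ kills augmenting paths, and the maximality of $\Omega'$ kills reachable blossoms. The paper's own proof is a two-sentence sketch of exactly this argument; your version is more careful in explicitly lifting an $H_2$-augmenting path back to $H_1$ via Lemma~\ref{lem:blossom}(2), and the subtlety you flag about $V_{out}$ being stable under the contraction of $\Omega'$ is real but resolves without iteration, since every vertex of a blossom in $\Omega'$ is already even-reachable in $H_1$ (via the even alternating path from the base), so contraction cannot create new even-reachable vertices.
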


\begin{proof}
Suppose there is an augmenting path $P$ in $\Gelig$ after
augmenting along paths in $\Psi$.  Since $\Psi$ is maximal, $P$
must intersect some $P'\in\Psi$ at a vertex $v$.
However, after the Augmentation step every edge in $P'$ will become ineligible, so the
matching edge $(v,v')\in M$ is no longer in $\Gelig$, contradicting the fact that $P$ consists of eligible edges.
Since $\Omega'$ is maximal there can be no blossom reachable from a free vertex in $\Gelig$ after the Blossom Shrinking step.
\end{proof}

Lemma~\ref{lem:parity} guarantees that all $y$-values updated in a Dual Adjustment step
have the same parity as a multiple of $\delta_i/2$.  
In the proof of Lemma~\ref{lem:propone-dualadj} this fact is used to argue that if both endpoints of 
an edge $e$ have their $y$-values decremented, then $yz(e)$ is a multiple of $\delta_i$.

\begin{lemma}\label{lem:parity}
Let $R\subseteq V(\Gelig)$ be the set of vertices
reachable from free vertices by eligible alternating paths, at any point in scale $i$.  
Let $\hat{R} \subseteq V(G)$ be the set of original vertices represented by those in $R$.
Then the $y$-values of $\hat R$-vertices have the same parity, as a multiple of $\delta_i/2$.
\end{lemma}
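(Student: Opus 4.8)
The plan is to reduce the whole statement to a single observation about eligible edges, then promote it first to entire blossoms and then to entire alternating paths. Throughout I would measure $y$-values in units of $\delta_i/2$; by granularity (Property~\ref{prop:yz}(\ref{item:nonneg})) every $y(u)$ is an integer number of such units, so ``$u$ and $v$ have the same parity'' simply means $y(u)-y(v)$ is a multiple of $\delta_i$. The key base fact is that parity is \emph{preserved across an eligible edge of type (ii) or (iii)}. Writing $yz(e)=y(u)+y(v)+\sum_{B\ni e}z(B)$ and using granularity, both $\sum_{B\ni e}z(B)$ and $w_i(e)=\delta_i\floor{w(e)/\delta_i}$ are multiples of $\delta_i$. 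If $e\notin M$ then $yz(e)=w_i(e)-\delta_i$ (criterion (ii)), and if $e\in M$ then $yz(e)-w_i(e)$ is a multiple of $\delta_i$ (criterion (iii)); in either case $y(u)+y(v)$ is a multiple of $\delta_i$. Since $2y(v)$ is also a multiple of $\delta_i$, so is $y(u)-y(v)=(y(u)+y(v))-2y(v)$, which is exactly the parity claim. I would note here that every edge of $\Gelig$ joining two \emph{distinct} root blossoms is necessarily of type (ii) or (iii), since type (i) (blossom) edges lie inside a single blossom, hence inside one root blossom, and are contracted away.

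Next I would show, by induction on the order in which blossoms are formed, that for every $B\in\Omega$ created at some scale $i_B\le i$ and all $u,v\in B$, the difference $y(u)-y(v)$ is a multiple of $\delta_{i_B}$. When $B=(A_0,\ldots,A_\ell)$ is formed, its cycle edges $e_0,\ldots,e_\ell$ are eligible and non-blossom, hence of type (ii)/(iii), so the base fact (applied with $\delta_{i_B}$) makes the two endpoints of each $e_j$ agree in parity; combined with the inductive hypothesis inside each $A_j$ (whose intra-blossom differences are multiples of $\delta_{i_{A_j}}$, hence of $\delta_{i_B}$ because $i_{A_j}\le i_B$), all of $B$ shares one parity with respect to $\delta_{i_B}/2$. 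The crucial point is that this relation is then \emph{preserved for the life of $B$}: augmentations leave the duals untouched, and every dual adjustment and every between-scale update changes the $y$-values of all vertices in a common root blossom by the same amount, because $V_{out}$ and $V_{in}$ are defined on root blossoms, so a root blossom lies entirely in $\hat V_{out}$, entirely in $\hat V_{in}$, or in neither. Thus intra-blossom differences $y(u)-y(v)$ are invariant, and since $i_B\le i$ gives $\delta_{i_B}=2^{\,i-i_B}\delta_i$, a multiple of $\delta_i$, those preserved differences are multiples of $\delta_i$. Hence at the current moment all vertices of any $B\in\Omega$ share one parity with respect to $\delta_i/2$.

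Finally I would assemble the path. Take any $u\in\hat R$, represented by a root blossom $B_u\in R$ reachable from a free root blossom by an eligible alternating path in $\Gelig$. Consecutive root blossoms on the path are joined by type (ii)/(iii) edges, so the base fact forces the two incident endpoints to agree in parity, and the blossom result pins down the parity of \emph{every} vertex of each root blossom along the path. Therefore every vertex of $\hat R$ shares the parity of the free vertices from which it is reached; because all free vertices carry the common $y$-value $\phi$ (Property~\ref{prop:yz}(\ref{item:y})) they all have one parity, and this single parity propagates even across distinct reachable components, giving the lemma. I expect the main obstacle to be the second step, specifically the bookkeeping that lets parity established at the coarse granularity $\delta_{i_B}/2$ in an earlier scale survive as parity at the finer granularity $\delta_i/2$ of the current scale; the clean way through is the invariance of intra-blossom $y$-differences under all of the algorithm's operations, which collapses the cross-scale comparison to the trivial divisibility $\delta_i\mid\delta_{i_B}$.
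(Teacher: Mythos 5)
Your proposal is correct and follows essentially the same route as the paper's proof: both arguments rest on the same two ingredients---parity preservation across type (ii)/(iii) eligible edges (from the granularity of $y$, $z$, and $w_i$ together with the eligibility criteria) and the lockstep movement of $y$-values within a common root blossom under dual adjustments---chained along eligible alternating paths and anchored at the common free-vertex dual value. The only difference is organizational: you induct on blossom formation time with explicit cross-scale bookkeeping ($\delta_i \mid \delta_{i_B}$ for a blossom formed at scale $i_B \le i$), whereas the paper inducts over the algorithm's steps and leaves that bookkeeping implicit, so your write-up is if anything slightly more careful on a point the paper glosses over.
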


\begin{proof}
Assume, inductively, that before the Blossom Shrinking step, all vertices in a common blossom have the same parity, as a multiple of $\delta_i/2$.  Consider an eligible path $P = (B_0,B_1,\ldots,B_k)$ in $\Gelig$, where the $\{B_j\}$ are either vertices or blossoms in $\Omega$ and $B_0$ is unmatched in $\Gelig$.  Let $(u_0,v_1), (u_1,v_2), \ldots, (u_{k-1},v_k)$ be the $G$-edges corresponding to $P$, where $u_j,v_j\in B_j$.  By the inductive hypothesis, $u_j$ and $v_j$ have the same parity, and whether $(u_j,v_{j+1})$ is matched or unmatched, Definition~\ref{def:eligible} stipulates that $yz(u_j,v_{j+1})/\delta_i$ is an integer, which implies $y(u_j)$ and $y(v_{j+1})$ have the same parity as a multiple of $\delta_i/2$.
Thus, the $y$-values of all vertices in $B_0\cup \cdots \cup B_k$ have the same parity as a free vertex in $B_0$, whose
$y$-value is equal to every other free vertex, by Property~\ref{prop:yz}(\ref{item:y}).
Since new blossoms are formed by eligible edges, the inductive hypothesis is preserved after the Blossom Shrinking step.  It is also preserved after the Dual Adjustment step since the $y$-values of vertices in a common blossom are incremented or decremented in lockstep.  This concludes the induction.
\end{proof}

\begin{lemma}\label{lem:propone-dualadj}
The algorithm preserves Property~\ref{prop:yz}.
\end{lemma}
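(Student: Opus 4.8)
The plan is to prove the lemma by induction over the Dual Adjustment steps: assuming Property~\ref{prop:yz} holds at the start of an iteration of scale $i$, I would show it holds again after the four steps (Augmentation, Blossom Shrinking, Dual Adjustment, Dissolution), and separately that it survives the between-scale update $\delta_{i+1}\leftarrow\delta_i/2$, $y(u)\leftarrow y(u)+\delta_{i+1}$. The base case is the initialization, already checked in the text. Since Augmentation and Blossom Shrinking change only $M$ and $\Omega$ and leave the duals untouched, the conditions that depend purely on $y,z$ (granularity, near domination, near tightness) need only be re-examined across the Dual Adjustment, whereas Augmentation's effect on $M$ must be tracked for active blossoms (via Lemma~\ref{lem:blossom}), for the ``type'' bookkeeping in near tightness, and for free vertex duals.

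Granularity and active blossoms I expect to be routine. The adjustments move each $y(u)$ by $\pm\delta_i/2$ and each root-blossom $z(B)$ by $\pm\delta_i$, and each newly shrunk blossom starts at $z=0$, so all values stay multiples of $\delta_i/2$ and $\delta_i$; nonnegativity of $y$ follows from the scale's stopping rule together with Property~\ref{prop:yz}(\ref{item:y}), and nonnegativity of $z$ together with the requirement that root blossoms have positive $z$ follows because a $V_{in}$ root blossom starts at a positive multiple of $\delta_i$ and is dissolved in step~(4) exactly when it reaches $0$, while each newly shrunk (hence $V_{out}$) blossom is incremented to $\delta_i>0$. For near tightness I would compute the change in $yz(e)$ for an existing matched or blossom edge: by the alternating-forest structure a matched edge joins a $V_{in}$ vertex to a $V_{out}$ vertex (or is untouched), so its endpoint adjustments $-\delta_i/2$ and $+\delta_i/2$ cancel, and a blossom edge lies inside a root blossom $B$, where the adjustment to $y$ across its two endpoints cancels the $\pm\delta_i$ on $z(B)$. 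Hence $yz(e)$ is unchanged and the (within-scale constant) bound still holds; edges that just became matched or blossom edges are type $i$ and were eligible, so by Definition~\ref{def:eligible}(ii)/(iii) they satisfy $yz(e)\le w_i(e)=w_i(e)+2(\delta_i-\delta_i)$.

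The crux is near domination, and here I would show $yz(e)$ can decrease only for an unmatched, ineligible edge, by at most $\delta_i$. If a decrease occurs, at least one endpoint lies in $V_{out}$ and the other cannot lie in $V_{in}$ (that would make the net change nonnegative), so either both endpoints are in $V_{out}$ or one is in $V_{out}$ and the other is untouched. Were such an edge eligible and unmatched, maximality of the Blossom Shrinking step (and Lemma~\ref{lem:non-exist-path}) would force its endpoints into a common blossom, contradicting that it is not a blossom edge; hence it is ineligible and $yz(e)-w_i(e)\neq-\delta_i$. When both endpoints are in $V_{out}$ the decrease is $\delta_i$, but Lemma~\ref{lem:parity} makes $y(u)+y(v)$ an even multiple of $\delta_i/2$, so $yz(e)-w_i(e)$ is a multiple of $\delta_i$ that, being ineligible and at least $-\delta_i$, is in fact at least $0$; after the $\delta_i$ decrease it is still at least $-\delta_i$. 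When only one endpoint is in $V_{out}$ the decrease is $\delta_i/2$ and $yz(e)-w_i(e)$, a multiple of $\delta_i/2$ that is at least $-\delta_i$ and unequal to $-\delta_i$, is at least $-\delta_i/2$, so again it stays at least $-\delta_i$.

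Finally, for free vertex duals I would note that all free vertices lie in $V_{out}$ and drop by the same $\delta_i/2$, so they remain equal; a vertex just matched by Augmentation has $y=\phi$, but its unique matched edge was flipped during Augmentation and is therefore ineligible, so it cannot be reached by an eligible even alternating path and is not in $V_{out}$, whence it is not decremented and stays strictly above the new free value. The between-scale update halves $\delta$ and raises every $y$ by $\delta_{i+1}$, hence each $yz(e)$ by $\delta_i$; a short calculation against $w_{i+1}(e)\in\{w_i(e),\,w_i(e)+\delta_{i+1}\}$ and $2\delta_{i+1}=\delta_i$ confirms granularity, near domination, and near tightness at scale $i+1$, while the uniform shift preserves the other two conditions. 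I expect the parity-based near-domination argument to be the only genuinely delicate step.
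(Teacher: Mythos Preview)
Your overall structure matches the paper's, and your treatment of granularity, active blossoms, near domination, free vertex duals, and the between-scale update is essentially correct (the parity-based near-domination argument is indeed the delicate step, and you handle it the way the paper does).

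The genuine gap is in near tightness. You write that ``by the alternating-forest structure a matched edge joins a $V_{in}$ vertex to a $V_{out}$ vertex (or is untouched), so its endpoint adjustments $-\delta_i/2$ and $+\delta_i/2$ cancel.'' This is only true for \emph{eligible} matched edges. An ineligible matched edge $e=(u,v)$ is not in the search forest, and its endpoints can land in $\hat V_{in}$ independently: one can have $u\in\hat V_{in}$ and $v\notin\hat V_{in}\cup\hat V_{out}$ (the paper's Case~4), or both $u,v\in\hat V_{in}$ (the ineligible subcase of the paper's Case~2). In these cases $yz(e)$ \emph{increases} by $\delta_i/2$ or $\delta_i$, and you must argue that the near-tightness bound $yz(e)\le w_i(e)+2(\delta_j-\delta_i)$ still holds afterward. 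This is not automatic: it requires observing that ineligibility of a matched edge means $yz(e)-w_i(e)$ is either negative or an odd multiple of $\delta_i/2$ (in Case~4) or exactly $-\delta_i$ (in Case~2, given parity), and hence that $yz(e)$ sat at least $\delta_i/2$ (respectively $\delta_i$) below the bound before the increase. Without this step, the induction on near tightness does not close.
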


\begin{proof}
Property~\ref{prop:yz}(\ref{item:y}) (free vertex duals) is obviously maintained as only free vertices have their $y$-values decremented in each Dual Adjustment step.
Property~\ref{prop:yz}(\ref{item:z}) (active blossoms) is also maintained since all the new root blossoms discovered in the Blossom Shrinking step are contained in $V_{out}$ and will have positive $z$-values after adjustment.
Furthermore, each root blossom whose $z$-value drops to zero is dissolved, after Dual Adjustment.
At the beginning of scale $i$ all $y$- and $z$-values are integer multiples of $\delta_i/2$ and $\delta_i$, respectively,
satisfying Property~\ref{prop:yz}(\ref{item:nonneg}) (granularity).  This property is clearly maintained in each Dual Adjustment step.  If $e\not\in M$ is placed in $M$ during an Augmentation step
or placed in $\bigcup_{B\in \Omega} E_B$ during a Blossom Shrinking step then 
$e$ is type $i$ and $yz(e) = w_i(e) - \delta_i < w_i(e)$, which satisfies Property~\ref{prop:yz}(\ref{item:yz-ub}).

It remains to show that the algorithm maintains Property~\ref{prop:yz}(\ref{item:yz-lb},\ref{item:yz-ub}) (near domination and near tightness).  First consider the dual adjustments made at the {\em end} of scale $i$ (the last line of pseudocode in Figure~\ref{fig:alg}.)
Let $e=(u,v)$ be an arbitrary edge and let $yz$ and $yz'$ be the function before and after dual adjustment.
It follows that
\begin{align*}
yz'(e) &= yz(e) + 2\delta_{i+1}	& \mbox{$y(u),y(v)$ incremented by $\delta_{i+1}$}\\
	&\ge w_i(e) - \delta_i + 2\delta_{i+1} & \mbox{near domination at scale $i$}\\
	&\ge w_{i+1}(e) - \delta_{i+1}		& \mbox{$w_i(e) \ge w_{i+1}(e) - \delta_{i+1}$}
\end{align*}
That is, Property~\ref{prop:yz}(\ref{item:yz-lb}) is preserved.
If $e\in M \cup \bigcup_{B\in \Omega} E_B$ is a type $j$ edge, then at the end of scale $i$
Property~\ref{prop:yz}(\ref{item:yz-ub}) is also preserved since
\[
yz'(e) = yz(e) + 2\delta_{i+1} \le w_i(e) + 2(\delta_j - \delta_i) + 2\delta_{i+1} \le w_{i+1}(e) + 2(\delta_j - \delta_{i+1})
\]
The first inequality follows from Property~\ref{prop:yz}(\ref{item:yz-ub}) at scale $i$ and the second
inequality from the fact that $w_i(e) \le w_{i+1}(e)$ and $\delta_i = 2\delta_{i+1}$.

Now consider a Dual Adjustment step.
If neither $u$ nor $v$ is in $\hat{V}_{in}\cup \hat{V}_{out}$ 
or if $u,v$ are in the same root blossom in $\Omega$, then $yz(e)$ is unchanged, preserving Property~\ref{prop:yz}.
The remaining cases depend on whether $(u,v)$ is in $M$ or not, whether $(u,v)$ is eligible or not, and whether
both $u,v \in \hat{V}_{in}\cup \hat{V}_{out}$ or not.

\paragraph{Case 1: $e \not\in M, \, u,v \in \hat{V}_{in}\cup \hat{V}_{out}$}
If $e$ is ineligible then $yz(e) > w_i(e) - \delta_i$.  However, by Lemma~\ref{lem:parity} (parity of $y$-values) we know $(yz(e) - w_i(e))/\delta_i$ is an integer,
so $yz(e)\ge w_i(e)$ before adjustment and $yz(e)\ge w_i(e) - \delta_i$ afterward (which could occur if both $u,v\in \hat{V}_{out}$), 
thereby preserving Property~\ref{prop:yz}(\ref{item:yz-lb}).
If $e$ is eligible then at least one of $u,v$ is in $\hat{V}_{in}$,
otherwise another blossom or augmenting path would have been formed, so $yz(e)$ cannot be reduced, 
which also preserves Property~\ref{prop:yz}(\ref{item:yz-lb}).

\paragraph{Case 2: $e \in M, \, u,v \in \hat{V}_{in}\cup \hat{V}_{out}$}
Since $u,v \in \hat{V}_{in}\cup \hat{V}_{out}$, Lemma~\ref{lem:parity} (parity of $y$-values) guarantees
that $(yz(e) - w_i(e))/\delta_i$ is an integer.  The only way $e$ can be ineligible is if $yz(e) = w_i(e) - \delta_i$ and $u,v\in \hat{V}_{in}$, 
hence $yz(e) = w_i(e)$ after dual adjustment, which preserves Property~\ref{prop:yz}(\ref{item:yz-lb},\ref{item:yz-ub}).
On the other hand, if $e$ is eligible then $u\in \hat{V}_{in}$ and $v\in \hat{V}_{out}$.  It cannot be that $u,v\in \hat{V}_{out}$, otherwise $e$ would
have been included in an augmenting path or root blossom.  In this case $yz(e)$ is unchanged, 
preserving Property~\ref{prop:yz}(\ref{item:yz-lb},\ref{item:yz-ub}).

\paragraph{Case 3: $e \not\in M, \, v \not\in \hat{V}_{in}\cup \hat{V}_{out}$}
If $e$ is eligible then $u\in \hat{V}_{in}$ and $yz(e)$ will increase.
If it is ineligible then $yz(e) \ge w_i(e) - \delta_i/2$ before adjustment
and $yz(e) \ge w_i(e) - \delta_i$ afterward.  In both cases Property~\ref{prop:yz}(\ref{item:yz-lb}) is preserved.

\paragraph{Case 4: $e \in M, \, v \not\in \hat{V}_{in}\cup \hat{V}_{out}$}
It must be that $e$ is ineligible, 
so $u\in \hat{V}_{in}$ and $yz(e) - w_i(e)$ is either negative or an odd multiple of $\delta_i/2$.
If $e$ is type $j$ then, by Property~\ref{prop:yz}(\ref{item:nonneg},\ref{item:yz-ub}) (granularity and near tightness), 
$yz(e) \le w_i(e) + 2(\delta_j - \delta_i) - \delta_i/2$ before adjustment
and $yz(e) \le w_i(e) + 2(\delta_j - \delta_i)$ afterward, preserving Property~\ref{prop:yz}(\ref{item:yz-ub}).
\end{proof}

Recall that Lemma~\ref{lem:approx} stated that the final matching will be a $(1-O(\epsilon))$-\MWM{}
if $\delta_L = O(\epsilon)$, free vertices have zero $y$-values, and $yz(e)-w(e) = O(\epsilon)\cdot w(e)$.
Lemmas~\ref{lem:weight-in-scale} and \ref{lem:approx-guarantee} establish these bounds.

\begin{lemma}\label{lem:weight-in-scale}
Let $i\le L$ be the scale index.  Then
\begin{enumerate}
\item For $i<L$, all edges eligible at any time in scales 0 through $i$ have weight at least $N/2^{i+1}+\delta_i$.\label{item:weight-in-scale1}
\item For any $i$, if $e\in M$ then $yz(e) \le (1+4\epsilon') w(e)$.\label{item:weight-in-scale2}
\end{enumerate}
\end{lemma}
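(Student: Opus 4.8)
The plan is to prove the two parts separately, since they control different aspects of the duals. For the first part I would argue by tracking how small an edge weight can be while still becoming eligible. The key observation is that free vertices all share a common $y$-value (Property~\ref{prop:yz}(\ref{item:y})), and that this value is forced to be large in the early scales: the algorithm only lets free-vertex $y$-values descend to $N/2^{i+2}-\delta_i/2$ before moving to scale $i+1$, so throughout scales $0$ through $i$ every free vertex has $y$-value at least $N/2^{i+2}$. Since matched vertices have strictly larger $y$-values (again Property~\ref{prop:yz}(\ref{item:y})), \emph{every} vertex at every such scale carries a $y$-value bounded below by roughly $N/2^{i+2}$. Now if an edge $e=(u,v)$ is eligible, then by Definition~\ref{def:eligible} we have $yz(e)$ close to $w_i(e)$ (either equal to $w_i(e)-\delta_i$ for unmatched edges, or exceeding $w_i(e)$ for matched/blossom edges). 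I would combine the lower bound $yz(e)\ge y(u)+y(v)\ge 2\cdot N/2^{i+2}=N/2^{i+1}$ with the eligibility relation to conclude $w_i(e)\ge N/2^{i+1}+\delta_i$, and then use $w(e)\ge w_i(e)$ to transfer the bound to the true weight. The main care here is getting the $\delta_i$ additive terms exactly right and confirming that the $+z(B)$ contributions to $yz(e)$ only help (being nonnegative), so I need only the $y$-part of the lower bound.

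For the second part I would establish, for a matched edge $e$ of type $j$, a bound on $yz(e)-w(e)$ and rewrite it as a \emph{relative} bound. By near tightness (Property~\ref{prop:yz}(\ref{item:yz-ub})) a type-$j$ matched edge satisfies $yz(e)\le w_i(e)+2(\delta_j-\delta_i)\le w(e)+2\delta_j$, using $w_i(e)\le w(e)$. The whole point is that an edge of type $j$ was made matched (or a blossom edge) in scale $j$, and the first part tells us that at the moment it became eligible in scale $j$ its weight was large. Specifically, applying part~1 with $i=j$ (or handling $j=L$ directly) gives $w(e)\ge N/2^{j+1}$, hence $\delta_j=\epsilon' N/2^j=2\epsilon'\cdot N/2^{j+1}\le 2\epsilon' w(e)$. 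Substituting, $yz(e)-w(e)\le 2\delta_j\le 4\epsilon' w(e)$, which rearranges to exactly $yz(e)\le(1+4\epsilon')w(e)$, as claimed.

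The step I expect to be the main obstacle is correctly linking the \emph{type} index $j$ of a matched edge to a weight lower bound. The subtlety is that an edge is type $j$ if it \emph{last} entered $M\cup\bigcup_B E_B$ in scale $j$; I must verify that at that entry moment it was eligible and that part~1 (proved for eligibility at scales $0$ through $i$) legitimately applies with the scale equal to $j$. A clean way to handle this is to note that whenever an edge enters $M$ during Augmentation or enters a blossom during Blossom Shrinking in scale $j$, it is eligible in $\Gelig$ at scale $j$, so part~1 applies directly at that scale to give $w(e)\ge N/2^{j+1}+\delta_j> N/2^{j+1}$. The boundary case $j=L$ (where part~1 is stated only for $i<L$) needs a separate remark: either the bound at scale $L-1$ still controls such edges, or one observes that $\delta_L=\epsilon'$ is small enough that $2\delta_L=2\epsilon'\le 4\epsilon' w(e)$ holds trivially from $w(e)\ge 1$. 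I would make sure both the ``made matched'' and ``made a blossom edge'' sub-cases are covered and that the additive constants in the eligibility conditions of Definition~\ref{def:eligible} are threaded through without sign errors.
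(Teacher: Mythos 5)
Your plan follows the same route as the paper's proof: part~1 comes from the lower bound on free-vertex duals (Property~\ref{prop:yz}(\ref{item:y})) combined with the eligibility relation, and part~2 from near tightness plus part~1 invoked at the type index $j$; your arithmetic for part~2 ($yz(e)-w(e)\le 2\delta_j=4\epsilon'\cdot N/2^{j+1}\le 4\epsilon' w(e)$) is exactly the paper's, and your separate remark for $j=L$ is a correct touch the paper leaves implicit. However, part~1 as you argue it has a genuine gap. For an \emph{unmatched} edge, criterion~(ii) of Definition~\ref{def:eligible} is an equality, $yz(e)=w_i(e)-\delta_i$, so the lower bound $yz(e)\ge y(u)+y(v)\ge N/2^{i+1}$ does transfer to $w_i(e)$. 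But for matched and blossom edges the eligibility relation points the wrong way: criterion~(iii) only gives $yz(e)\ge w_i(e)$, and criterion~(i) gives no relation between $yz(e)$ and $w_i(e)$ at all. From $yz(e)\ge N/2^{i+1}$ and $yz(e)\ge w_i(e)$ nothing follows about $w_i(e)$, so your step ``combine the lower bound with the eligibility relation to conclude $w_i(e)\ge N/2^{i+1}+\delta_i$'' fails precisely for the edges covered by criteria (i) and (iii).

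The missing idea---which is what the paper's proof does, tersely, by analyzing \emph{only} unmatched eligible edges---is a trace-back: since $M$ and $\Omega$ start empty, every edge that is ever eligible via (i) or (iii) was, at some weakly earlier scale $j'\le i$, an unmatched non-blossom edge eligible via criterion~(ii) (edges join $M$ only along augmenting paths in $\Gelig$, where they are unmatched; edges join blossoms either while unmatched, or while matched, in which case they joined $M$ earlier still). At that criterion-(ii) moment the bound $w(e)\ge N/2^{j'+1}+\delta_{j'}$ holds, and it is monotone in the scale index, hence at least $N/2^{i+1}+\delta_i$. Your third paragraph gestures at exactly this, but your resolution (``it is eligible in $\Gelig$ at scale $j$, so part~1 applies directly'') re-enters the invalid case when the type-defining event is a \emph{matched} edge entering a blossom: there the edge is eligible only via criterion~(iii), and you must trace back further, to the scale at which it last entered $M$. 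A secondary imprecision: free-vertex $y$-values do dip to $N/2^{i+2}-\delta_i/2$ at the very end of scale $i$, so the bound $y\ge N/2^{i+2}$ should be anchored (as the paper does) at the last \emph{search} of scale $i$, which is the only time eligibility matters; otherwise you lose the additive $\delta_i$ claimed in part~1.
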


\begin{proof}
We prove the parts separately.

\paragraph{Part 1} 
The last search for augmenting paths in scale $i$ begins when the $y$-values
of free vertices are $N/2^{i+2}$, and strictly less than $y$-values of other vertices, by Property~\ref{prop:yz}(\ref{item:y}).
An unmatched edge $e=(u,v)$ can only be eligible at this scale if
$yz(e) = w_i(e) - \delta_i \le w(e) - \delta_i$.  
Hence $w(e) \ge y(u) + y(v) + \delta_i \ge N/2^{i+1} + \delta_i$.

\paragraph{Part 2}
Let $e$ be a type $j$ edge in $M$ during scale $i$.
Property~\ref{prop:yz}(\ref{item:yz-ub}) states that
$yz(e) - w_i(e) \leq 2(\delta_j - \delta_i)$.  Since $w_i(e) \leq w(e)$ 
it also follows that
$yz(e) - w(e) \leq 2\delta_j - 2\delta_i < 2\delta_j = \epsilon' N/2^{j-1}$. 
By part 1, a type $j$ edge must have weight at least $N/2^{j+1} + \delta_j$,
hence $yz(e) - w(e) < 4\epsilon' \cdot w(e)$.
\end{proof}

\begin{lemma}\label{lem:approx-guarantee}
After scale $L = \log N$, $M$ is a $(1-5\epsilon')$-\MWM.
\end{lemma}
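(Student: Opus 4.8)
The plan is to assemble the pieces already established rather than prove anything new from scratch: the substantive work lives in Lemmas~\ref{lem:approx}, \ref{lem:propone-dualadj}, and \ref{lem:weight-in-scale}, and this statement is essentially their composition at the terminal scale. First I would invoke Lemma~\ref{lem:propone-dualadj} to guarantee that Property~\ref{prop:yz} holds throughout the execution, in particular at the moment scale $L$ completes; this lets me apply Lemma~\ref{lem:approx} with $i = L$. Next I would observe that, by the stopping condition in Figure~\ref{fig:alg}, scale $L$ iterates the Dual Adjustment step until the $y$-values of all free vertices reach zero, so the quantity $\phi$ in Lemma~\ref{lem:approx} equals $0$.

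The second ingredient is the per-edge slackness bound. I would appeal to Lemma~\ref{lem:weight-in-scale}(\ref{item:weight-in-scale2}), which states that every matched edge $e$ satisfies $yz(e) \le (1+4\epsilon')\,w(e)$, i.e.\ $yz(e) - w(e) \le 4\epsilon'\cdot w(e)$. This is exactly the hypothesis needed to set the parameter $\hat{\epsilon} = 4\epsilon'$ in Lemma~\ref{lem:approx}.

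With $i = L$, $\phi = 0$, and $\hat{\epsilon} = 4\epsilon'$ in hand, the final clause of Lemma~\ref{lem:approx} immediately yields that $M$ is a $(1 - \epsilon' - \hat{\epsilon})$-\MWM, and substituting $\hat{\epsilon} = 4\epsilon'$ gives $(1 - \epsilon' - 4\epsilon') = (1 - 5\epsilon')$, as claimed. (Internally one is using that $\delta_L = \epsilon' N / 2^L = \epsilon'$, which is where the extra $\epsilon'$ term in $1 - \epsilon' - \hat{\epsilon}$ comes from; this is already folded into Lemma~\ref{lem:approx}'s conclusion, so no recomputation is needed.)

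I do not expect a genuine obstacle here, since all the heavy lifting—maintaining relaxed complementary slackness through augmentation, blossom shrinking, and dual adjustment, and bounding the accumulated tightness violation by $4\epsilon' w(e)$—was carried out in the preceding lemmas. The only points warranting a sentence of care are confirming that the algorithm does drive free-vertex duals to exactly zero at scale $L$ (legitimate because at scale $L$ these values are multiples of $\delta_L/2 = \epsilon'/2$ and the stopping target is $0$), and that Property~\ref{prop:yz} indeed survives to the final configuration so that Lemma~\ref{lem:approx} is applicable. Both are guaranteed by the stopping rule and by Lemma~\ref{lem:propone-dualadj}, respectively, so the argument reduces to a short citation-driven chain.
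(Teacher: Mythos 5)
Your proposal is correct and follows essentially the same route as the paper: the paper's proof likewise applies Lemma~\ref{lem:approx} at scale $i=L$ with $\phi=0$ (free vertex duals driven to zero), $\delta_L=\epsilon'$, and $\hat{\epsilon}=4\epsilon'$ supplied by Lemma~\ref{lem:weight-in-scale}(\ref{item:weight-in-scale2}), yielding $(1-\epsilon'-4\epsilon')=(1-5\epsilon')$. Your explicit citation of Lemma~\ref{lem:propone-dualadj} to justify that Property~\ref{prop:yz} survives to the final configuration is left implicit in the paper but is the same underlying logic.
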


\begin{proof}
The final scale ends when free vertices have zero $y$-values.  Property~\ref{prop:yz}(\ref{item:yz-lb}) holds
with respect to~$\delta_L = \delta_0/2^L = \epsilon' N/2^L = \epsilon'$ and Lemma~\ref{lem:weight-in-scale}
states that $yz(e) \leq (1+4\epsilon') w(e)$.
By Lemma~\ref{lem:approx}, $w(M) \ge (1-5\epsilon')w(M^*)$.
\end{proof}

\begin{theorem}\label{thm:main1}
A $(1-\epsilon)$-\MWM{} can be computed in time $O(m\epsilon^{-1} \log N)$.
\end{theorem}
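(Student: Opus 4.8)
The plan is to separate the two claims folded into the theorem---the $(1-\epsilon)$ approximation guarantee and the $O(m\epsilon^{-1}\log N)$ running time---and to dispatch the first one immediately using the work already done. First I would fix $\epsilon'$ to be the largest power of two not exceeding $\min\{\epsilon/5,\,1/4\}$, so that $\epsilon'=\Theta(\epsilon)$ (hence $\epsilon'^{-1}=O(\epsilon^{-1})$) and $5\epsilon'\le\epsilon$. Lemma~\ref{lem:approx-guarantee} then says the matching $M$ returned after scale $L=\log N$ is a $(1-5\epsilon')$-\MWM, and $1-5\epsilon'\ge 1-\epsilon$, so $M$ is a $(1-\epsilon)$-\MWM. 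This settles correctness and reduces the theorem to a time analysis.

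For the running time I would write the total cost as a product of three factors: the number of scales, the number of iterations per scale, and the cost of a single iteration. Since $N$ is a power of two, there are exactly $L+1=\log N+1=O(\log N)$ scales. Within scale $i$, each Dual Adjustment step lowers the common $y$-value of the free vertices (which is well defined and shared by Property~\ref{prop:yz}(\ref{item:y})) by $\delta_i/2=\epsilon' N/2^{i+1}$, and the scale runs until this value has dropped by $\Theta(N/2^i)$---from about $N/2^{i+1}$ down to $N/2^{i+2}$ when $i<L$, and down to $0$ when $i=L$. Hence the number of iterations is $(N/2^i)/(\delta_i/2)=O(\epsilon'^{-1})=O(\epsilon^{-1})$, independent of $i$, giving $O(\epsilon^{-1}\log N)$ iterations overall.

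The crux is to show that one iteration costs $O(m)$, which I would establish step by step for the four steps of Figure~\ref{fig:alg}. Rebuilding $\Gelig$ from scratch is $O(m)$: by Definition~\ref{def:eligible} the eligibility of each edge is decided in $O(1)$ from its stored values $yz(e)$, $w_i(e)$, and its membership in $M$ or some $E_B$, and contracting the current root blossoms along the blossom forest is $O(m)$. For Augmentation, a \emph{maximal} vertex-disjoint set of augmenting paths in the unweighted graph $\Gelig$---not a shortest-path set---can be found in $O(m)$ time by the depth-first search of Gabow and Tarjan~\cite[\S 8]{GT91}, which contracts blossoms encountered during the search rather than relying on the double depth-first search of~\cite{MV80}. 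Blossom Shrinking, i.e.\ computing the even/odd-reachable sets $V_{out},V_{in}$ and forming a maximal nested family of blossoms on $V_{out}$, is a single alternating search from the free vertices and is again $O(m)$. Finally, Dual Adjustment touches only $y$ on $\hat V_{in}\cup\hat V_{out}$ and $z$ on the affected root blossoms, and the subsequent dissolution removes root blossoms whose $z$ reached zero; as there are $O(n)$ vertices and $O(n)$ nodes in the blossom forest at any instant, this is $O(n)=O(m)$. Multiplying, the total is $O(m)\cdot O(\epsilon^{-1})\cdot O(\log N)=O(m\epsilon^{-1}\log N)$.

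I expect the main obstacle to be precisely the $O(m)$-per-iteration claim for the general-graph blossom machinery: that a maximal set of augmenting paths can be extracted in linear time while contracting blossoms on the fly, and---just as importantly---that forming, contracting, and dissolving blossoms together with rebuilding $\Gelig$ can be charged $O(m)$ \emph{within a single iteration}, with no amortized overhead silently leaking across iterations or scales. This is the point at which the analysis leans on the DFS implementation of~\cite[\S 8]{GT91} and on maintaining the $yz$-values, per-edge type tags, and the blossom forest under all four update types in linear time; everything else in the proof is the elementary arithmetic of the scale and iteration counts above.
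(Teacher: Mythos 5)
Your proposal is correct and follows essentially the same route as the paper's own proof: correctness via Lemma~\ref{lem:approx-guarantee} with $\epsilon'\le\epsilon/5$, an iteration count of $O(\epsilon'^{-1})$ per scale obtained by dividing the total drop $\Theta(N/2^i)$ in free-vertex $y$-values by the step size $\delta_i/2$, and $O(m)$ per iteration by delegating the Augmentation and Blossom Shrinking steps to the modified depth-first search of~\cite[\S 8]{GT91}. Your explicit bookkeeping for rebuilding $\Gelig$ and for blossom formation/dissolution, and your explicit choice of $\epsilon'$ as a power of two, only make precise what the paper leaves implicit.
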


\begin{proof}
Each Augmentation and Blossom Shrinking step takes $O(m)$ time using a modified depth-first search~\cite[\S 8]{GT91}.  (Finding a maximal set of augmenting paths is significantly simpler, conceptually, than finding a maximal set of minimum-length augmenting paths, as is done in~\cite{MV80,Vazirani94}.)  Each Dual Adjustment step clearly takes linear time, as does the dual adjustment at the end of each scale.
Scale $i < L = \log N$ begins with free vertices' $y$-values at $N/2^{i+1} - \delta_i$ and ends with them at $N/2^{i+2} - \delta_i$.  Since $y$-values are decremented by $\delta_i/2$ in each Dual Adjustment step there are exactly
$(N/2^{i+2})/(\delta_i/2) = N/(2\delta_0) = \epsilon'^{-1}/2$ such steps.  
The final scale begins with free vertices' $y$-values at $N/2^{L+1} - \delta_{L}$ and ends with them at zero, so there are fewer than 
$(N/2^{L+1})/(\delta_L/2) = \epsilon'^{-1}$ Dual Adjustment steps.
Lemma~\ref{lem:approx-guarantee} guarantees that the final matching is a $(1-\epsilon)$-\MWM{} for
$\epsilon' \le \epsilon/5$.  Hence, the total running time is $O(m\epsilon^{-1}\log N)$.
\end{proof}

\subsection{A Linear Time Algorithm}

Our $O(m\epsilon^{-1}\log N)$-time algorithm requires few modifications to run in linear time, 
independent of $N$.  In fact, the algorithm as it appears in Figure~\ref{fig:alg} requires no modifications
at all: we only need to change the definition of {\em eligibility} and, in each scale, refrain from scanning edges
that cannot possibly be eligible or part of augmenting paths or blossoms.  
In light of Lemma~\ref{lem:weight-in-scale}(\ref{item:weight-in-scale1}) it is helpful to index edges
according to the first scale in which they may be eligible.

\begin{definition}
Define $\mu_i = N/2^{i+1} + \delta_i$, for $i<L$, and $\mu_L = 0$.
Define $\scale(e)$ to be the $i$ such that $w(e) \in [\mu_i,\mu_{i-1})$.
\end{definition}

Definition~\ref{def:eligible-linear} redefines eligibility.  The differences with
Definition~\ref{def:eligible} are underlined.

\begin{definition}\label{def:eligible-linear}
At scale $i$, an edge $e$ is {\em eligible} if at least one of the following hold:
\begin{enumerate}
\item $e\in E_B$ for some $B\in \Omega$.
\item $e\not\in M$ and $yz(e) = w_i(e) - \delta_i$.
\item $e\in M$ and $yz(e) - w_i(e)$  is a nonnegative intelligent multiple of $\delta_i$.
Furthermore, \underline{$\scale(e) \ge i - \gamma$,} \underline{where $\gamma \bydef \log\epsilon'^{-1}$}.
\end{enumerate}
Let $E_{elig}$ be the set of eligible edges 
and let $\Gelig = (V,E_{elig})/\Omega$ be the unweighted graph obtained by discarding ineligible edges and contracting
root blossoms.
\end{definition}

\begin{lemma}\label{lem:neartight}
Using Definition~\ref{def:eligible-linear} of eligibility rather than Definition~\ref{def:eligible}, Property~\ref{prop:yz}(\ref{item:nonneg},\ref{item:z},\ref{item:yz-lb},\ref{item:y}) is maintained and Property~\ref{prop:yz}(\ref{item:yz-ub}) (near tightness) holds in the following weaker form.
Let $e\in M\cup\bigcup_{B\in \Omega} E_B$ be a type $j$ edge with $\scale(e)=i$.  Then $yz(e) \le w_k(e) + 2(\delta_j - \delta_k)$ at any scale $k \in [i, i+\gamma]$
and $yz(e) \le w_k(e) + 3\delta_i < (1+6\epsilon')w(e)$ for $k > i+\gamma$.
\end{lemma}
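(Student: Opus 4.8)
The plan is to split the argument into the structural invariants (which carry over from Lemma~\ref{lem:propone-dualadj} with a single new subcase) and the weakened near tightness (which needs a genuinely new ``freezing'' argument for edges older than $\gamma$ scales). First I would observe that Definition~\ref{def:eligible-linear} differs from Definition~\ref{def:eligible} only by adjoining the constraint $\scale(e)\ge i-\gamma$ to the matched-edge criterion, leaving the value conditions otherwise intact. Since the new definition only removes edges from $E_{elig}$, every remaining eligible edge still satisfies the same integer conditions, so Lemma~\ref{lem:parity} holds verbatim and Property~\ref{prop:yz}(\ref{item:nonneg},\ref{item:z},\ref{item:y}) is maintained exactly as before. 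For near domination (Property~\ref{prop:yz}(\ref{item:yz-lb})) the only genuinely new situation is a matched edge $e=(u,v)$ that is ineligible solely because $\scale(e)<i-\gamma$ (``frozen''). Writing $B_u,B_v$ for the root blossoms of $u,v$: if $B_u=B_v$ then $yz(e)$ is unchanged, and if $B_u\ne B_v$ then $e$ is the unique matched edge joining $B_u$ and $B_v$ in $M/\Omega$, so since $e\notin\Gelig$ neither blossom can be reached through its matched edge and hence neither lies in $V_{in}$. Thus neither endpoint has its $y$-value incremented, so $yz(e)$ can only decrease; and when it does decrease (both endpoints in $\hat{V}_{out}$) Lemma~\ref{lem:parity} gives the endpoints the same parity, making $yz(e)-w_i(e)$ an integer multiple of $\delta_i$, which with $yz(e)\ge w_i(e)-\delta_i$ forces $yz(e)\ge w_i(e)$ beforehand, so near domination survives the decrement.

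For the weak near tightness I would treat the two regimes separately. Fix a type $j$ edge $e$ with $\scale(e)=i$; note $j\ge i$, since by Lemma~\ref{lem:weight-in-scale}(\ref{item:weight-in-scale1}) an edge cannot become eligible, hence cannot enter $M\cup\bigcup_{B\in\Omega}E_B$, before scale $\scale(e)$. In the window $k\in[j,i+\gamma]$ the constraint $\scale(e)\ge k-\gamma$ holds automatically, so for this edge the new eligibility coincides with the old one; consequently the case analysis of Lemma~\ref{lem:propone-dualadj} applies unchanged to $e$, since it is local to $e$'s own eligibility and its endpoints' membership in $\hat{V}_{in}\cup\hat{V}_{out}$. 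This gives the original bound $yz(e)\le w_k(e)+2(\delta_j-\delta_k)$ throughout the window, and in particular $yz(e)-w_{i+\gamma}(e)<2\delta_j\le 2\delta_i$ at its end.

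For $k>i+\gamma$ the edge is frozen as soon as it is a matched non-blossom edge. The key claim is that within any such scale $yz(e)$ never increases: if $e$ is a blossom edge it lies inside a root blossom and the $y$-decrease and $z$-increase cancel (the ``same root blossom'' case of Lemma~\ref{lem:propone-dualadj}), while if $e$ is a frozen matched edge the paragraph above shows neither endpoint lies in $V_{in}$, so $y(u)+y(v)$ cannot grow and, as the $z$-sum is empty when $B_u\ne B_v$, neither can $yz(e)$. Hence $yz(e)-w_k(e)$ changes only at scale boundaries, each raising it by at most $2\delta_{k'+1}$ (as $yz(e)$ gains $2\delta_{k'+1}$ while $w$ is nondecreasing); summing this geometric tail from scale $i+\gamma$ onward contributes at most $2\delta_{i+\gamma}=2\epsilon'\delta_i\le\delta_i/2$. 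Combined with the window bound this yields $yz(e)\le w_k(e)+3\delta_i$. If instead $e$ re-enters $M$ via criterion (ii) at some scale $j'>i+\gamma$, it resets to $yz(e)=w_{j'}(e)-\delta_{j'}$ and the same boundary accounting keeps $yz(e)-w_k(e)<2\delta_{j'}<\delta_i$, comfortably within $3\delta_i$. Finally, $\scale(e)=i$ gives $w(e)>N/2^{i+1}$ and $\delta_i=2\epsilon'N/2^{i+1}<2\epsilon'w(e)$, so $3\delta_i<6\epsilon'w(e)$ and $w_k(e)\le w(e)$ deliver $yz(e)<(1+6\epsilon')w(e)$.

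The main obstacle is the freezing step: establishing that a frozen matched edge's endpoints can never sit in $V_{in}$, which is exactly what stops $yz(e)$ from drifting upward once we cease scanning the edge. This rests on the edge being the unique matched edge at each endpoint and absent from $\Gelig$, together with the use of Lemma~\ref{lem:parity} to ensure that the compensating decreases in $yz(e)$ never break near domination. The bookkeeping around edges that leave and re-enter $M\cup\bigcup_{B\in\Omega}E_B$, and the interplay between the matched-edge and blossom-edge cases, is where I expect the proof to require the most care.
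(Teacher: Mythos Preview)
Your argument contains a genuine error in the ``freezing'' step, and it stems from confusing $V_{in}$ with $V_{out}$.  Recall that in $\Gelig$ an alternating path from a free vertex begins with an \emph{unmatched} edge; hence a vertex at odd distance ($V_{in}$) is entered via an unmatched edge, while a non-free vertex at even distance ($V_{out}$) is entered via its matched edge.  For a frozen matched edge $e=(u,v)$ with $B_u\neq B_v$, the only matched edge at $B_u$ (and at $B_v$) in $M/\Omega$ is $e$ itself, which is absent from $\Gelig$.  The correct conclusion is therefore that neither $B_u$ nor $B_v$ can lie in $V_{out}$---not that neither lies in $V_{in}$.  Both endpoints \emph{can} lie in $\hat V_{in}$ (reached via eligible unmatched edges), in which case $y(u)$ and $y(v)$ are each incremented by $\delta_k/2$ and $yz(e)$ \emph{increases} by $\delta_k$ in that Dual Adjustment step.

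This reversal breaks your accounting in both places.  For near domination the right argument is immediate once the direction is fixed: since $u,v\notin\hat V_{out}$, $yz(e)$ never decreases during Dual Adjustment, so Property~\ref{prop:yz}(\ref{item:yz-lb}) is preserved trivially; your parity-based salvage for the ``both in $\hat V_{out}$'' case is addressing a situation that cannot occur.  More seriously, for near tightness your claim that $yz(e)$ is constant within each scale $k>i+\gamma$ is false, so summing only the end-of-scale increments $\sum_{l>i+\gamma}2\delta_l\le 2\epsilon'\delta_i$ grossly undercounts.  The paper instead bounds the total possible increase by tracking the $y$-values of free vertices: from the start of scale $t=i+\gamma+1$ onward they drop from $N/2^{t+1}-\delta_t/2$ to $0$, with intermediate increments totalling $\sum_{l\ge t+1}\delta_l$; each unit of movement can raise $yz(e)$ by two units, and the resulting contribution is $\Theta(\delta_i)$ (since $N/2^{t+1}=\delta_i/4$), not $O(\epsilon'\delta_i)$.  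This is precisely what produces the $3\delta_i$ bound rather than something like $(2+O(\epsilon'))\delta_i$.
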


\begin{proof}
In scales $i$ through $i+\gamma$ Property~\ref{prop:yz}(\ref{item:yz-ub}) is maintained as the two definitions of eligibility are the same.
At the beginning of scale $t = i+\gamma+1$, $e$ is no longer eligible and the $y$-values of free vertices are $N/2^{t+1} - \delta_t/2$.
From this moment on, the $y$-values of free vertices are incremented by a total of $\sum_{l \ge t+1} \delta_l$ (the dual adjustments following scales 
$t$ through $L -1$) and decremented a total of $N/2^{t+1} - \delta_t/2 + \sum_{l \ge t+1} \delta_l$ (in the Dual Adjustment steps
following searches for augmenting paths and blossoms).  Each adjustment to $y$-values by some quantity $\Delta$ may cause $yz(e)$ to increase by $2\Delta$.  
This clearly occurs in the dual adjustments following each scale as $y(u)$ and $y(v)$ are incremented by $\Delta$.  Following a search for blossoms it may be that $u,v \in \hat{V}_{in}$, which would also cause $y(u)$ and $y(v)$ to each be incremented by $\Delta$.  Note that $y(u),y(v)$ cannot be decremented in scales $t$ through $L$; if either were in $\hat{V}_{out}$ after a search for blossoms then $e$ would have been eligible, a contradiction.  Hence Property~\ref{prop:yz}(\ref{item:yz-lb}) (near domination) is maintained for $e$. 
Putting this all together, it follows that at any scale $k \geq t = i+\gamma+1 = i+\log\epsilon'^{-1}+1$,
\begin{align*}
yz(e) &\le w_k(e) + 2(\delta_j - \delta_k) + 2\cdot\paren{N/2^{t+1} - \delta_{t}/2 + 2\cdot\sum_{l \ge t+1} \delta_l}\hcm[-3]\\
	&< w_k(e) + 2\delta_i + 2\paren{\delta_{i+2} + \fr{3}{2}\delta_{t}} & \mbox{$j\ge i$, defn.~of $t$}\\
	&= w_k(e) + (2 + 1/2 + 3\epsilon'/2)\delta_i      & \mbox{defn.~of $\delta_t = \epsilon'\delta_{i}/2$}\\
	&< w_k(e) + 3\delta_i					& \mbox{$\epsilon' < 1/3$}\\
	&< (1+6\epsilon')w(e)				& \mbox{$w(e) \ge w_k(e) > N/2^{i+1} = (2\epsilon')^{-1}\cdot \delta_i$}
\end{align*}
which proves the claim.  Note that the inequality $yz(e) < w_k(e) + 3\delta_i$ also holds in scales
$k\in [i,i+\gamma]$.  If $e$ is type $j\ge i$ then $yz(e) \le w_k(e) + 2(\delta_j - \delta_k) < w_k(e) + 2\delta_i$.
This fact will be used in Lemma~\ref{lem:incident}.
\end{proof}

The algorithm will deliberately ignore unmatched edges that may still be eligible according to 
Definition~\ref{def:eligible-linear}.  This will be justified on the grounds that 
such edges {\em must} be adjacent to matched ineligible edges, and therefore cannot be contained in an eligible augmenting path.
Lemma~\ref{lem:incident} will be used to argue that scale-$i$ edges can be safely ignored after scale $i+\gamma+2$.

\begin{lemma}\label{lem:incident}
Let $e_1= (u,v)$ be an edge with $\scale(e_1) = i$ 
and let $e_0 = (u',u)$ and $e_2 = (v,v')$ be the $M$-edges incident to $u$ and $v$ at
some time after scale $i$.  Then at least one of $e_0$ and $e_2$ exists, say $e_0$,
and $\scale(e_0) \le i+2$.
\end{lemma}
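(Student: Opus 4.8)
The plan is to play a lower bound on $yz(e_1)$, coming from near domination, against an upper bound on the $yz$-value of a matched edge incident to $e_1$, coming from near tightness. Throughout, let $k \ge i+1$ be the scale at the ``time after scale $i$'' in question, and recall $\mu_i = N/2^{i+1}+\delta_i$ and $w(e_1)\ge \mu_i$ (since $\scale(e_1)=i$). First I would lower-bound $yz(e_1)$: near domination (Property~\ref{prop:yz}(\ref{item:yz-lb})) gives $yz(e_1)\ge w_k(e_1)-\delta_k > w(e_1)-2\delta_k \ge \mu_i-2\delta_k$, and since $k\ge i+1$ forces $2\delta_k\le \delta_i$, this yields $yz(e_1) > \mu_i-\delta_i = N/2^{i+1}$.

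Next I would show at least one of $u,v$ is matched. If both were free they would share the common free-vertex dual $\phi$ (Property~\ref{prop:yz}(\ref{item:y})), and two free vertices cannot lie in a common blossom (a blossom has a unique unmatched base), so the blossom contribution to $e_1$ vanishes and $yz(e_1)=y(u)+y(v)=2\phi$; combined with the bound above this forces $\phi > N/2^{i+2}$. But at scale $k\ge i+1$ the free-vertex dual is at most $N/2^{k+1}-\delta_k < N/2^{i+2}$, a contradiction. Hence $e_0$ or $e_2$ exists.

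The crux, and the main obstacle, is to exhibit one existing matched edge, say $e_0$, with $yz(e_0)\ge \frac{1}{2}yz(e_1)$; the blossom terms must be handled carefully. Writing $Z_1=\sum_{B\in\Omega:\,u,v\in B} z(B)$ for the blossom contribution to $yz(e_1)$, and $Z_0,Z_2$ for the analogous sums of $e_0=(u',u)$ and $e_2=(v,v')$, the key observation is that every blossom $B$ counted in $Z_1$ contains both $u$ and $v$; since $B$ has a unique base, at least one of $u,v$ is matched \emph{inside} $B$, so its partner also lies in $B$, and therefore $B$ contributes to $Z_0$ or to $Z_2$. This gives $Z_0+Z_2\ge Z_1$ (with the convention that a nonexistent matched edge contributes $0$), and as all $y$-values are nonnegative, $yz(e_0)+yz(e_2)\ge yz(e_1)$. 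Taking $e_0$ to be the existing matched edge of larger $yz$-value then gives $yz(e_0)\ge \frac{1}{2}yz(e_1) > N/2^{i+2}$; when only one matched edge exists the same bound follows directly, since the free endpoint contributes only its minimal dual $\phi$ and $y(u)>\phi$ at the matched endpoint.

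Finally I would convert this into a weight bound. Suppose for contradiction $\scale(e_0)=i'\ge i+3$, so $w(e_0) < \mu_{i+2} = N/2^{i+3}+\delta_{i+2}$ and $\delta_{i'}\le\delta_{i+3}$. Near tightness for $e_0$ (Lemma~\ref{lem:neartight}) bounds $yz(e_0)\le w_k(e_0)+3\delta_{i'}$ in general, and the sharper $yz(e_0)\le w_k(e_0)+2\delta_{i'}$ whenever $k\le i'+\gamma$. Using the crude bound gives $w(e_0)>N/2^{i+2}-3\delta_{i+3}$, which against $w(e_0)<N/2^{i+3}+\delta_{i+2}$ reduces (after substituting $\delta_{i+2}=2\epsilon'N/2^{i+3}$ and $\delta_{i+3}=\epsilon'N/2^{i+3}$) to $1<5\epsilon'$, i.e.\ only $\epsilon'>1/5$ --- not quite a contradiction. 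To close the borderline I would invoke the sharper $2\delta_{i'}$ estimate, which is legitimate because in the regime where this lemma is applied ($k\le i+\gamma+2$) and under the hypothesis $i'\ge i+3$ one has $k < i'+\gamma$; this tightens the reduction to $1<4\epsilon'$, contradicting $\epsilon'\le 1/4$ and forcing $\scale(e_0)\le i+2$. I expect the two delicate points to be precisely (i) the $Z_0+Z_2\ge Z_1$ bookkeeping, which must correctly treat base vertices and nested (laminar) blossoms, and (ii) the choice of near-tightness slack, since the clean contradiction at $\epsilon'=1/4$ genuinely needs the $2\delta$ rather than the $3\delta$ bound.
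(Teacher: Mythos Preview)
Your approach is essentially the paper's, and the broad strokes are sound: lower-bound $yz(e_1)$ by near domination, rule out both endpoints being free, compare $yz(e_1)$ to the $yz$-value of an incident matched edge via the blossom structure, then close with near tightness. Two points deserve comment.

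First, your blossom bookkeeping $Z_0+Z_2\ge Z_1$ is correct but weaker than what the paper uses. Because $\Omega$ is laminar, the set $\mathcal{B}_1$ of blossoms containing both $u$ and $v$ is a chain; its smallest member already contains $u'$ (or $v'$), hence \emph{every} member of $\mathcal{B}_1$ does, giving $\mathcal{B}_1\subseteq\mathcal{B}_0$ (or $\subseteq\mathcal{B}_2$). This uniform containment is what the paper exploits to write $w_k(e_1)\le 2\,yz(e_0)+\delta_k$ directly; your additive bound $yz(e_0)+yz(e_2)\ge yz(e_1)$ arrives at the same inequality on $\max\{yz(e_0),yz(e_2)\}$, so no real loss here.

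Second, and this is where you do lose something: in your first step you pass from $yz(e_1)\ge w_k(e_1)-\delta_k$ to $yz(e_1)>\mu_i-\delta_i$, discarding roughly $\delta_i$. Since $N$ and $\epsilon'$ are powers of two, $\mu_i=N/2^{i+1}+\delta_i$ is already a multiple of $\delta_k$, so in fact $w_k(e_1)\ge\mu_i$ and hence $yz(e_1)\ge\mu_i-\delta_k$. Keeping that extra $\delta_i$ is exactly what lets the paper finish with the universally valid $3\delta_{i'}$ near-tightness bound, yielding $w(e_0)>\tfrac{1}{2}(w_k(e_1)-\delta_i)\ge N/2^{i+2}>\mu_{i+2}$ for all $\epsilon'<1/2$ and \emph{all} $k$. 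Your workaround---falling back on the sharper $2\delta_{i'}$ bound, which requires $k\le i'+\gamma$---does close the argument in the regime where Theorem~\ref{thm:main2} invokes the lemma, but it does not prove the lemma as stated (for arbitrary $k$ after scale $i$). The fix is the one-line tightening above; with it your proof matches the paper's.
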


\begin{proof}
Following the last Dual Adjustment step in scale $i$ the $y$-values of free vertices are $N/2^{i+2} - \delta_i/2$.
It cannot be that both $u$ and $v$ are free at this time, otherwise $yz(e_1) = y(u) + y(v) = N/2^{i+1} - \delta_i = \mu_i - 2\delta_i < w_i(e_1) - \delta_i$, violating Property~\ref{prop:yz}(\ref{item:yz-lb}) (near domination).  Hence, either $u$ or $v$ is matched for the remainder of the computation.  If $e_1$ is matched the claim is trivial, so, assuming the claim is false, whenever $e_0,e_2$ exist we have $\scale(e_0), \scale(e_2) \ge i+3$.  That is, $w(e_0),w(e_2) < \mu_{i+2} = N/2^{i+3} + \delta_{i+2}$.

It cannot be that
$e_1$ is in a blossom without $e_0$ or $e_2$ also being in the blossom.  For any $l\in\{0,1,2\}$ let $\mathcal{B}_l \subseteq \Omega$ be the blossoms containing $e_l$ at a given time.  The laminarity of blossoms ensures that either $\mathcal{B}_1 \subseteq \mathcal{B}_0$ or $\mathcal{B}_1\subseteq \mathcal{B}_2$.  Suppose it is the former, that is, $e_0$ exists and $e_2$ may or may not exist.
Then, if the current scale is $k\ge i+3$, by Property~\ref{prop:yz}(\ref{item:yz-lb}) (near domination)
$yz(e_1) = y(u) + y(v) + \sum_{B\in \mathcal{B}_1}z(B) \ge w_k(e_1) - \delta_k$.
By Lemma~\ref{lem:neartight} 
$y(u) + \sum_{B\in\mathcal{B}_1} z(B) < yz(e_0) \le w_k(e_0) + 3\delta_{i+3}$
and, if $e_2$ exists, 
$y(v) < yz(e_2) \le w_k(e_2) + 3\delta_{i+3}$.  
These inequalities follow from the definition of $yz$, the containment 
$\mathcal{B}_1\subseteq \mathcal{B}_0$ and the fact that $e_0$ and $e_2$ can only be at scale $i+3$ or higher.
Without loss of generality 
we can assume $y(u) + \sum_{B\in\mathcal{B}_1} z(B) \ge y(v)$.
(If $e_2$ exists and $y(v) > y(u) + \sum_{B\in\mathcal{B}_1} z(B)$ then $e_2$ takes the role 
of $e_0$ below.)
Putting these inequalities together we have

\begin{align*}
w_k(e_1) &\le y(u) + y(v) + \sum_{B\in \mathcal{B}_1}z(B) + \delta_k & \mbox{near domination}\\
	&\le 2\paren{y(u) + \sum_{B\in \mathcal{B}_1} z(B)} + \delta_k  & \mbox{$y(u) + \sum_{B\in\mathcal{B}_1} z(B) \ge y(v)$}\\
	&\le 2\paren{yz(e_0)} + \delta_k					& \mbox{$\mathcal{B}_1 \subseteq \mathcal{B}_0$}\\
	&< 2(w_k(e_0) + 3\delta_{i+3}) + \delta_k  & \mbox{Lemma~\ref{lem:neartight}}\\
	&< 2w(e_0) + 7 \delta_{i+3}  & \mbox{$k\ge i+3$, $\epsilon' < 1/3$}\\
\intertext{and therefore}
w(e_0) &> \fr{1}{2}(w_k(e_1) - \delta_i)	& \mbox{$7\delta_{i+3} < \delta_i$}\\
           &\ge N/2^{i+2}	& \mbox{$\scale(e_1)=i, w_k(e_1) \ge \mu_i = N/2^{i+1}+\delta_{i}$}\\
           &> N/2^{i+3} + \delta_{i+2} = \mu_{i+2}
\end{align*}
This contradicts the fact that $\scale(e_0) \ge i+3$, since, by definition, such edges have
$w(e_0) < \mu_{i+2}$.
\end{proof}

\begin{theorem}\label{thm:main2}
A $(1-\epsilon)$-\MWM{} can be computed in time $O(m\epsilon^{-1} \log\epsilon^{-1})$.
\end{theorem}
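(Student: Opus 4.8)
The plan is to run the algorithm of Figure~\ref{fig:alg} essentially verbatim, changing only (i) the eligibility test, from Definition~\ref{def:eligible} to Definition~\ref{def:eligible-linear}, and (ii) the set of edges scanned at each scale: in scale $k$ I would touch only the edges that could conceivably be eligible or lie on an eligible augmenting path—call these the \emph{active} edges of scale $k$—and deliberately skip all others. The argument then cleanly separates into a running-time part, which is a charging argument showing that every edge is active in only $O(\gamma)=O(\log\epsilon'^{-1})$ consecutive scales (here $\gamma=\log\epsilon'^{-1}$ is a nonnegative integer since $\epsilon'$ is a power of two), and a correctness part, which is already packaged into Lemmas~\ref{lem:neartight} and~\ref{lem:incident}.

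First I would pin down the window of scales in which an edge $e$ with $\scale(e)=i$ is active. For the lower end, Lemma~\ref{lem:weight-in-scale}(\ref{item:weight-in-scale1}) says any edge eligible at some time in scales $0$ through $k$ has weight at least $\mu_k$; since $\scale(e)=i$ forces $w(e)<\mu_{i-1}\le\mu_k$ for every $k<i$ (the $\mu$'s are decreasing), $e$ cannot be eligible before scale $i$ and may be ignored in scales $0,\ldots,i-1$. For the upper end there are two cases. If $e$ is ever matched or becomes a blossom edge, criterion~3 of Definition~\ref{def:eligible-linear} makes it ineligible once $k>\scale(e)+\gamma=i+\gamma$. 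If $e$ stays unmatched, Lemma~\ref{lem:incident} guarantees that after scale $i$ one endpoint of $e$ carries an incident matched edge $e_0$ with $\scale(e_0)\le i+2$; for $k>i+\gamma+2$ we get $\scale(e_0)\le i+2<k-\gamma$, so $e_0$ fails criterion~3 and is ineligible, whence $e$ cannot sit on an all-eligible augmenting path and can be dropped. Either way $e$ is active only in the $\gamma+3$ scales $i,\ldots,i+\gamma+2$.

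Granting the window, the time bound is immediate. Writing $a_k$ for the number of active edges at scale $k$, the double-counting identity gives $\sum_k a_k=\sum_e O(\gamma)=O(m\log\epsilon'^{-1})$. As counted in the proof of Theorem~\ref{thm:main1}, each scale runs $O(\epsilon'^{-1})$ Dual Adjustment iterations, and every iteration's Augmentation, Blossom Shrinking, Dual Adjustment, and dissolution steps cost time linear in $a_k$, using the depth-first-search implementation of~\cite[\S 8]{GT91}. Hence scale $k$ costs $O(a_k\epsilon'^{-1})$ and the grand total is $O\paren{\epsilon'^{-1}\sum_k a_k}=O(m\epsilon'^{-1}\log\epsilon'^{-1})=O(m\epsilon^{-1}\log\epsilon^{-1})$ after substituting $\epsilon'=\Theta(\epsilon)$.

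Correctness is then inherited from the supporting lemmas. Lemma~\ref{lem:neartight} shows the modified algorithm preserves Property~\ref{prop:yz}(\ref{item:nonneg},\ref{item:z},\ref{item:yz-lb},\ref{item:y}) exactly and Property~\ref{prop:yz}(\ref{item:yz-ub}) in the relaxed form $yz(e)<(1+6\epsilon')w(e)$ for every matched edge, so I would invoke Lemma~\ref{lem:approx} with $\hat\epsilon=6\epsilon'$, $\delta_L=\epsilon'$, and $\phi=0$ to conclude the final matching is a $(1-7\epsilon')$-\MWM; choosing $\epsilon'$ to be the largest power of two at most $\epsilon/7$ delivers a $(1-\epsilon)$-\MWM. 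I expect the genuinely delicate step to be not this arithmetic but the justification for skipping inactive unmatched edges, namely confirming that omitting such an $e$ never deletes a real augmenting opportunity from $\Gelig$. This is exactly what Lemma~\ref{lem:incident} supplies through the forced low-scale matched neighbour $e_0$, but the \emph{blossom case} needs care: the incident matched edge $e_0$ might instead be a blossom edge, hence eligible, in which case one must argue that the shared endpoint lies inside a contracted root blossom, so $e$ appears in $\Gelig$ as an edge incident to that blossom and its inclusion or omission is governed by the contracted structure rather than by $e$ in isolation.
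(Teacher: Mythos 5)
Your proposal follows the paper's proof of Theorem~\ref{thm:main2} almost exactly: the same algorithm (Figure~\ref{fig:alg} run with Definition~\ref{def:eligible-linear}), the same activity window $[\scale(e),\scale(e)+\gamma+2]$ derived from Lemma~\ref{lem:weight-in-scale}(\ref{item:weight-in-scale1}), criterion~3 of Definition~\ref{def:eligible-linear}, and Lemma~\ref{lem:incident}, the same per-scale accounting against $O(\epsilon'^{-1})$ dual adjustment steps, and the same invocation of Lemmas~\ref{lem:approx} and~\ref{lem:neartight} with $\hat{\epsilon}=6\epsilon'$ and $\epsilon'\le\epsilon/7$. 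Your closing remark about the blossom case is a legitimate subtlety (the paper's own text glosses over it by treating matched and blossom edges together), and your sketched resolution via the contracted structure of $\Gelig$ is in the right spirit.

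There is, however, one genuine gap. You presuppose that at scale $k$ the algorithm can ``touch only the active edges,'' i.e., that the edges have already been bucketed by $\scale(\cdot)$ at no cost. The paper treats this as a separate, essential claim (its part~(iii)): computing $\scale(e)$ for all $e$ is tantamount to computing $\msb(w(e)) = \floor{\log_2 w(e)}$ for every edge, and this is not free in the model the paper works in, which deliberately avoids assuming unit-time multiplication or floating-point conversion. Without an efficient MSB primitive, the natural implementations break the bound you claim: scanning all edges at every scale to test activity costs $O(m\log N)$ total, which is only Theorem~\ref{thm:main1}'s bound, and a per-edge binary search over the scales costs $O(m\log\log N)$, which already exceeds $O(m\epsilon^{-1}\log\epsilon^{-1})$ for fixed $\epsilon$. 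The paper closes this gap by recalling from Section~\ref{sect:def} that w.l.o.g.\ $N\le n^2$, so MSBs of $O(\log n)$-bit weights can be tabulated using $O(n^\beta)$ space and preprocessing and then answered with $O(1)$ table lookups per edge. Your proof needs this step (or an explicit assumption of a constant-time MSB operation) to make the phrase ``cost time linear in $a_k$'' realizable; everything else matches the paper's argument.
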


\begin{proof}
We execute the algorithm from Figure~\ref{fig:alg} where $\Gelig$ refers to the eligible subgraph as
defined in Definition~\ref{def:eligible-linear}.
We need to prove several claims: (i) the algorithm does, in fact return a $(1-\epsilon)$-\MWM{} for suitably chosen
$\epsilon' = \Theta(\epsilon)$, (ii) the number of scales in which an edge could conceivably participate in an augmenting path or blossom is $\log\epsilon^{-1} + O(1)$, and (iii) it is possible in linear time to compute the scales in which each edge must participate.
Part (i) follows from Lemmas~\ref{lem:approx} and \ref{lem:neartight}.
Since $yz(e) \le (1+6\epsilon')w(e)$ for any $e\in M$ and $\delta_L= \epsilon'$,
Lemma~\ref{lem:approx} implies that $M$ is a $(1-\epsilon)$-\MWM{} when $\epsilon' \le \epsilon/7$.

Turning to part (ii), consider an edge $e$ with $\scale(e) = i$.  By Lemma~\ref{lem:weight-in-scale}(\ref{item:weight-in-scale1}) $e$ can be ignored in scales 0 through $i-1$.   If $e=(u,v) \in M$ then, according to Definition~\ref{def:eligible-linear}, $e$ will be ineligible in scales $i+\gamma+1$ through $\log N$.
After scale $i+\gamma$ no augmenting path or blossom can contain $e$, so we can commit it to the final matching and remove from consideration all edges incident to $u$ or $v$.  Now suppose that $e\not\in M$ at the end of scale $i+\gamma+2$.  Lemma~\ref{lem:incident} states that either $u$ or $v$ is incident to a matched edge $e_0$ with $\scale(e_0) \le i+2$, which by the argument above, will be committed to the final matching, thereby removing $e$ from further consideration.  Thus, to faithfully execute the algorithm we only need to consider $e$ in scales $\scale(e)$ through $\scale(e) + \gamma + 2$, that is, 
$\gamma+3 = \log\epsilon'^{-1} + 3 \le \log\epsilon^{-1} + 6$ scales in total.  

We have narrowed our problem to that of computing $\scale(e)$ for all $e$.  This is tantamount to computing the most significant bit ($\msb(x) = \floor{\log_2 x}$) in the binary representation of $w(e)$.  Once the $\msb$ is known, $\scale(e)$ can be just one of two possible values.  MSBs can be computed in a number of ways using standard instructions.  It is trivial to extract $\msb(x)$ after converting
$x$ to floating point representation.  Fredman and Willard~\citeyear{FW93} gave an $O(1)$ time algorithm using unit time multiplication.  However, we do not need to rely on floating point conversion or multiplication.
In Section~\ref{sect:def} we showed that without loss of generality $\log N \le 2\log n$.
Using a negligible $O(n^\beta)$ space and preprocessing time we can tabulate the answers on $\beta \cdot \log n$-bit integers, where $\beta \le 1$, then compute MSBs with $2\beta^{-1}=O(1)$ table lookups.
\end{proof}

\section{Exact Maximum Weight Matching}\label{sect:exact}

At a high level our exact \MWM{} algorithm is similar to our $(1-\epsilon)$-\MWM{} algorithm. 
It consists of $\log N + 1$ scales, where, in the $i$th scale, the magnitude of 
dual adjustments and the violation of domination/tightness is bounded in terms of $\delta_i$, which decreases geometrically with $i$.
However, beyond this similarity the two algorithms are quite different.
Our exact \MWM{} algorithm only works on bipartite graphs; we assume for simplicity that the graph consists of exactly $n$ {\em left}
vertices and $n$ {\em right} vertices.

We redefine $\delta_0 = 2^{\floor{\log(N/\sqrt{n})}}$ and let $\delta_i = \delta_0/2^i$ and $w_i(e) = \delta_i \floor{w(e)/\delta_i}$
be the granularity and weight function of the $i$th scale, where $i\in[0,L]$ and $L = \ceil{\log N}$.\footnote{Note that if $N\le \sqrt{n}$ we do not require a different weight function at each scale since $w_i = w$ for all $i\ge 0$.}
The algorithm maintains a matching $M$ and duals $y$ satisfying Property~\ref{invariant}.
(As the graph is bipartite there is no need for blossoms or their duals $z$.)
Whereas Property~\ref{prop:yz} allows domination to be violated by $\delta_i$ but enforces tightness of matched edges (of type $i$),
Property~\ref{invariant} enforces domination but lets tightness be violated by up to $3\delta_i$.

\begin{property}\label{invariant} 
Let $i \in [0,L]$ be the scale, $M$ be the current matching, and $y \::\: V\rightarrow \mathbb{R}_{\ge 0}$ be the vertex duals,
where $y(e) = y(u) + y(v)$ for edge $e=(u,v)$.
\begin{enumerate}
\item \emph{Granularity:} $y(u)$ is a nonnegative multiple of $\delta_i$.\label{granularity}
\item \emph{Domination:} $y(e) \geq w_{i}(e)$ for all $e \in E$.\label{domination}
\item \emph{Near Tightness:} Let $e\in M$ be a matched edge.
In scale 0, $y(e) \le w_0(e)+\delta_0$.  Throughout scale $i$ we have
$y(e) \leq w_{i}(e) + 3 \delta_{i}$ and at the end of scale $i$ we have
$y(e) \leq w_{i}(e) + \delta_{i}$.
\label{tightness}
\item \emph{Free Vertex Duals:} In scale $i=0$, right free vertices have zero $y$-values and left free vertices have equal and minimal $y$-values among left vertices.
At the end of scale 0 and throughout scales $1,\ldots,L$, all free vertices have zero $y$-values.\label{freeduals}
\end{enumerate}
\end{property}

\begin{lemma}\label{firstlemma}
Let $M$ be the matching at the end of scale $i$ and $M^*$ be the \MWM.  
Then $w(M) \ge w(M^*) - 2n\delta_i$, and when $i=L$, $w(M) \ge w(M^*) - n\delta_L \ge w(M^*) - \sqrt{n}$.
\end{lemma}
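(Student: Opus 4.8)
The plan is to bound $w(M^*) - w(M)$ by running the optimality-style argument of Property~\ref{invariant}, which is the bipartite analogue of the weak-duality computation displayed just after Property~\ref{prop:yz-strict}. Since the graph is bipartite there are no blossoms, so the dual objective is simply $\sum_{u\in V} y(u)$ and the key relations are \emph{domination} ($y(e)\ge w_i(e)$ for all $e$) and \emph{near tightness} at the end of scale $i$ ($y(e)\le w_i(e)+\delta_i$ for matched $e$). First I would write $w(M^*)$ using domination: since every edge of $M^*$ satisfies $w(e)\ge w_i(e)$ is the wrong direction, I instead use $w_i(e)\le y(e)$ together with $w(e) < w_i(e)+\delta_i$ to get $w(M^*) < \sum_{e\in M^*} y(e) + \delta_i|M^*| \le \sum_{u\in V(M^*)} y(u) + \delta_i|M^*|$, where the last step uses that each vertex contributes its $y$-value at most once.

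Next I would lower-bound $w(M)$. By near tightness at the end of scale $i$, each matched edge has $y(e)\le w_i(e)+\delta_i \le w(e)+\delta_i$, so $\sum_{u\in V(M)} y(u) = \sum_{e\in M} y(e) \le w(M) + \delta_i|M|$, giving $w(M) \ge \sum_{u\in V(M)} y(u) - \delta_i|M|$. The two estimates must now be glued together, and here the \emph{free vertex duals} condition (Property~\ref{invariant}(\ref{freeduals})) does the essential work: at the end of scale $i\ge 0$ every free vertex has $y$-value zero, so $\sum_{u\in V} y(u) = \sum_{u\in V(M)} y(u)$ and, crucially, the vertices in $V(M^*)\setminus V(M)$ are all free and contribute nothing. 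Hence $\sum_{u\in V(M^*)} y(u) \le \sum_{u\in V(M)} y(u)$. Chaining the inequalities yields
\[
w(M^*) < \sum_{u\in V(M)} y(u) + \delta_i|M^*| \le w(M) + \delta_i|M| + \delta_i|M^*| \le w(M) + 2n\delta_i,
\]
using $|M|,|M^*|\le n$. This establishes the first claim.

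For the refinement when $i=L$, the argument is identical except that at the final scale $w_L = w$ exactly (the weight function is no longer a strict truncation since $\delta_L$ divides into the integer weights, or more precisely $w(e) - w_L(e) < \delta_L$ collapses because $\delta_L \le 1$), so the slack $\delta_i|M^*|$ coming from $w(e)<w_i(e)+\delta_i$ disappears and only the near-tightness slack $\delta_L|M|$ survives, giving $w(M)\ge w(M^*) - n\delta_L$. Finally I would substitute the definition $\delta_L = \delta_0/2^L$ with $\delta_0 = 2^{\lfloor\log(N/\sqrt{n})\rfloor}$ and $L=\lceil\log N\rceil$; a short calculation shows $\delta_L \le 1/\sqrt{n}$, whence $n\delta_L \le \sqrt{n}$.

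I expect the main subtlety to be the bookkeeping around which $\delta_i$-slacks appear and the careful use of the free-vertex condition to discard the contribution of unmatched vertices in $V(M^*)$; the arithmetic verifying $\delta_L \le 1/\sqrt{n}$ from the definitions of $\delta_0$ and $L$ is routine but must be done cleanly, since it is the only place the specific choice $\delta_0 \approx N/\sqrt{n}$ (rather than $\delta_0 \approx N$) is exploited to obtain the $\sqrt{n}$ rather than $n$ error bound.
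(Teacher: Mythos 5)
Your proof is correct and takes essentially the same route as the paper's: both are the weak-duality chain combining domination, end-of-scale near tightness ($y(e)\le w_i(e)+\delta_i$), zero $y$-values on free vertices, and nonnegativity of $y$, with one $\delta_i$-per-edge loss charged to $M^*$ (truncation) and one to $M$ (near tightness), and both observe that $w_L=w$ at the final scale so only the near-tightness slack $n\delta_L\le\sqrt{n}$ survives. The only cosmetic difference is that you chain upward from $w(M^*)$ whereas the paper chains downward from $w(M)$.
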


\begin{proof}
By Property~\ref{invariant} we have
\begin{align*}
w(M) \ge w_i(M)  &\ge \sum_{e\in M} y(e) - n\delta_i	& \mbox{near tightness}\\
			&= \sum_{u\in V} y(u) - n\delta_i		& \mbox{free vertex duals}\\
			&\ge w_i(M^*) - n\delta_i			& \mbox{domination, non-negativity of $y$}\\
			&\ge w(M^*) - 2n\delta_i				& \mbox{defn. of $w_i$, $|M^*|\le n$}
\end{align*}
Note that the last inequality is weak whenever $\delta_i \leq 1$, since in this case $w = w_i$.
Specifically, when $i=L$ we have $\delta_L = 2^{\floor{\log(N/\sqrt{n})} - \ceil{\log N}} \le 1/\sqrt{n}$,
implying $w(M) \ge w(M^*) - n\delta_L \ge w(M^*) - \sqrt{n}$.
\end{proof}

As in our $(1-\epsilon)$-\MWM{} algorithm we restrict our attention to augmentations on eligible edges.  However, our definition of {\em eligibility}
depends on the context.  For integers $0\le a \le b$, the eligibility graph $G[a,b]$ at scale $i$ consists of all edges $e$ such that
\begin{itemize}
\item[--- ] $e\not\in M$ and $y(e) = w_i(e)$, or
\item[--- ] $e\in M$ and $w_i(e) + a \delta_i \le y(e) \le w_i(e) + b\delta_i$.
\end{itemize}

The algorithm consists of three phases, each with a distinct goal.  Phase I, Phase III, and each scale of Phase II will require $O(m\sqrt{n})$ time, for a total of $O(m\sqrt{n}\log N)$ time.

\begin{description}
\item[Phase I] The phase operates only at scale 0.  It is a simplified execution of the Gabow-Tarjan~\citeyear{GT89} algorithm, stopping not when $M$ is perfect but when $y$-values of free vertices are zero.  In this phase an augmentation is an augmenting path in $G[1,1]$ whose endpoints are free.\\

\item[Phase II] The phase operates at scales $i=1,\ldots,L$.  At the beginning of the scale $M \subseteq G[0,3]$.  The goal is to eliminate $M$-edges that violate near-tightness
by $2\delta_i$ or $3\delta_i$, that is, the scale ends when $M \subseteq G[0,1]$.  In this phase an augmentation is either an augmenting cycle in $G[1,3]$ or an augmenting path in $G[1,3]$ whose endpoints have zero $y$-values.  Note that the ends of augmenting paths can be either free vertices or matched  edges.\\

\item[Phase III] The phase operates only at scale $L$.  The last scale of Phase II leaves $M \subseteq G[0,1]$.  In this phase an augmentation is either an augmenting cycle or
an augmenting path whose ends have zero $y$-values, that, in addition, contains at least one non-tight edge.  That is, the augmenting path/cycle must exist in $G[0,1]$ but not $G[0,0]$,
which implies that augmenting along such a path increases the weight of the matching.
By Lemma~\ref{firstlemma}, at the end of Phase II  $w(M) \ge w(M^*) - \sqrt{n}$.  
We guarantee that $w(M)$ is a nondecreasing function of time,
so, by the integrality of edge weights, $w(M)$ can be improved at most $\sqrt{n}$ times in Phase III.
\end{description}

The following notation will be used liberally.
Let $X\subseteq V(G)$ be a vertex set, $H$ be a subgraph of $G$, and $M$ be an arbitrary matching.
Define $\VODD(X,H)$ (and $\VEVEN(X,H)$) to be the set of vertices reachable from $X$ in $H$ by an odd-length (and even-length) alternating 
path starting with an \underline{\em unmatched} edge.  The directed graph $\vec{H}$ is obtained by
orienting $e\in E(H)$ from left to right if $e\not\in M$ and from right to left if $e\in M$.  In our algorithm $H$ is always chosen to be $G[a,b]$ for some parameters $a,b$.
Note that $\VODD(X,H), \VEVEN(X,H)$, and $\vec{H}$ are defined with respect to a matching known from context.  It is clear that
$\VODD(X,H)$ and $\VEVEN(X,H)$ can be computed in linear time, for example, with depth first search (DFS).

\subsection{Phase I}

In Phase I we operate on $G[1,1]$.
We begin with an empty matching $M = \emptyset$ and let $y(v) = 0$ for right vertices and $y(v) = \delta_0\floor{N/\delta_0}$ for left vertices.
This clearly satisfies Property~\ref{invariant}(\ref{domination}) (domination) since $w_0(e) = \delta_0\floor{w(e)/\delta_0}$ and the maximum edge weight is $N$.
The Phase I algorithm oscillates between augmenting along a maximal set of eligible augmenting paths
and performing dual adjustments.  Note that if we augment along an eligible augmenting path, 
all edges of the path become ineligible.  See Algorithm~\ref{alg1} for the details.

	\begin{algorithm}
		\caption{Phase I}
		\label{alg1}
		\begin{algorithmic}
			\STATE {\bf Initialization:}
			\STATE $M \leftarrow \emptyset$
			\STATE $y(v) \leftarrow \begin{cases} \delta_0\floor{N/\delta_0} & \mbox{if $v$ is a left vertex}  \\  0  & \mbox{if $v$ is a right vertex} \end{cases}$
			\REPEAT
			\STATE {\bf Augmentation:}
			\STATE Find a maximal set $\Psi$ of augmenting paths in $G[1,1]$ and set $M \leftarrow M \oplus \bigcup_{P\in \Psi} P$.
			\STATE
			\STATE {\bf Dual Adjustment:}
			\STATE Let $F$ be the set of left free vertices
			\STATE $y(v) \leftarrow \begin{cases}
					y(v) - \delta_0 & \mbox{if $v \in \VEVEN(F, G[1,1])$}\\
					y(v) + \delta_0 & \mbox{if $v \in \VODD(F, G[1,1])$}\\
					y(v) & \mbox{otherwise}
					\end{cases}$
			\UNTIL {$y$-values of left free vertices are zero or all left vertices are matched}
		\end{algorithmic}
	\end{algorithm}

After an augmentation step, there cannot be any augmenting paths in $G[1,1]$, which implies no free vertex is in $\VODD(F, G[1,1])$. 
Property~\ref{invariant}(\ref{freeduals}) is maintained for left free vertices since their $y$-values are reduced in lockstep in every dual adjustment step.
It is also preserved for right free vertices since they are not in $\VODD(F, G[1,1])$ and therefore never have their $y$-values adjusted.
Property~\ref{invariant}(\ref{domination}) (domination) is maintained since no eligible edge 
can have one endpoint in $\VEVEN(F,G[1,1])$ without the other being in $\VODD(F,G[1,1])$.
Property~\ref{invariant}(\ref{tightness}) (near tightness) is maintained since for any $e\in M$,
$y(e)$ is unchanged if $e$ is eligible, and, if $e$ is ineligible (that is, $y(e) = w_0(e)$), $y(e)$ may only be incremented by $\delta_0$.

The number of augmentation/dual adjustment steps is bounded by the number of dual adjustments, 
that is, $(\delta_0\floor{N/\delta_0}) / \delta_0 \le N/2^{\floor{\log(N/\sqrt{n})}} < 2\sqrt{n}$. 
Thus, Phase I takes $O(m \sqrt{n})$ time.

\subsection{Phase II}\label{sect:phaseII}
At the beginning of scale $i \in [1,L]$ in Phase II we set $y(u) \leftarrow y(u) + \delta_{i}$ for each left vertex $u$ and leave the $y$-values of right vertices unchanged.
Since $w_i(e) \le w_{i-1}(e) + \delta_i$, this preserves Property \ref{invariant}(\ref{domination}) (domination).  
Property \ref{invariant}(3) (near tightness) is also maintained.
\begin{align}
y(e) & \leftarrow y(e) + \delta_{i} \nonumber \\
        & \leq w_{i-1}(e) + \delta_{i-1} + \delta_{i}   &&\mbox{by Property \ref{invariant}(3) at the end of scale $i-1$} \nonumber \\
        & \leq w_{i}(e) + 3\delta_{i} && \mbox{since $\delta_{i-1} = 2\delta_{i}$  and $w_{i-1}(e) \leq w_i(e)$} \nonumber
\end{align}
However, Property \ref{invariant}(\ref{freeduals}) may be violated since the $y$-values of left free vertices are $\delta_i$, not zero. 
Hence, we will run one iteration of Phase I's augmentation and dual adjustment steps on $G[1,3]$.  These steps preserve domination and near tightness and
bring left free vertices' $y$-values down to zero, restoring Property~\ref{invariant}(\ref{freeduals}).  
This procedure takes $O(m)$ time but is executed just once for each of the $\log N$ scales.

Next, we will repeatedly perform Phase II Augmentation and Phase II Dual Adjustment steps (see Sections \ref{sec:aug} and \ref{sec:dual}) on $G[1,3]$ until 
$M \cap G[2,3] = \emptyset$, that is $y(e) \leq  w_{i}(e) + \delta_{i}$ for all $e \in M$.  At this point scale $i$ ends and scale $i+1$ begins.

\subsection{Phase II Augmentation} \label{sec:aug}

The goal of an augmentation step in Phase II is simply to eliminate all augmenting paths and cycles from $G[1,3]$.
We do this in two steps, first eliminating augmenting cycles then paths.  Notice that in contrast to Phase I, 
augmenting paths may start or end with matched edges.

In the first stage of augmentation we will find a maximal set of vertex-disjoint augmenting cycles $\mathcal{C}$ using DFS.  
Observe that directed cycles in $\vec{G}[1,3]$ correspond to augmenting cycles in $G[1,3]$; this simplifies the DFS algorithm since we do not need to distinguish matched and unmatched edges.
At all times the DFS stack $\mathcal{S}$ forms an alternating path.  If there is a back edge from the top of the stack to another vertex on this stack,
that is, $\mathcal{S} = (\ldots,v,\ldots,u)$ and $(u,v)\in E(\vec{G}[1,3])$, this back edge closes an augmenting cycle, which can be added to $\mathcal{C}$.
A vertex is {\em marked} when it is popped off the DFS stack, either because the vertex joins an augmenting cycle in $\mathcal{C}$ or if it is found not to be contained
in any augmenting cycle.  Algorithm~\ref{alg2} gives the details for {\em Cycle-Search}.

	\begin{algorithm}
		\caption{{\em Cycle-Search}: returns a maximal set of eligible augmenting cycles}
		\label{alg2}
		\begin{algorithmic}
			\STATE {All vertices are initially {\em unmarked}}
			\STATE {$\mathcal{C} \leftarrow \emptyset$ \hfill \{set of augmenting cycles discovered so far\}\ \ \ \ \ \ \ }
			\STATE {$\mathcal{S} \leftarrow ()$ \hfill \{DFS stack\}\ \ \ \ \ \ \ }
			\WHILE {there are still unmarked vertices}
			\IF {$\mathcal{S}$ is empty}
			\STATE {Push any unmarked $u_0\in V$ onto $\mathcal{S}$.}
			\ENDIF
			\STATE {$u$ = the top of $\mathcal{S}$}
			\STATE {$V_u = \{v \:|\: \mbox{$(u,v)\in E(\vec{G}[1,3])$  and  $v$ is unmarked}\}$}
			\IF {$V_u$ is empty}
			\STATE {Mark $u$ and pop $u$ off $\mathcal{S}$}
			\ELSE
				\STATE {Choose any $v\in V_u$}
				\IF {$v$ appears in $\mathcal{S}$, that is, if $\mathcal{S} = (\ldots,v,\ldots,u)$}
					\STATE {$\mathcal{C} \leftarrow \mathcal{C} \cup \{(v,\ldots,u,v)\}$ \hfill \{add augmenting cycle to $\mathcal{C}$\}\ \ \ \ \ \ \ }
					\STATE {Mark $v,\ldots,u$ and pop $(v,\ldots,u)$ off $\mathcal{S}$}
				\ELSE
					\STATE {Push $v$ on $\mathcal{S}$}
				\ENDIF
			\ENDIF
			\ENDWHILE
			\RETURN  $\mathcal{C}$
		\end{algorithmic}
	\end{algorithm}

\begin{lemma}\label{nocycle} The algorithm {\em Cycle-Search} finds a maximal set of vertex-disjoint augmenting cycles $\mathcal{C}$. Moreover, if we augment along every cycle in $\mathcal{C}$, then the graph $G[1,3]$ contains no more augmenting cycles. \end{lemma}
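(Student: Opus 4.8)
The plan is to establish two claims: first, that \emph{Cycle-Search} returns a set $\mathcal{C}$ of vertex-disjoint augmenting cycles; and second, that this set is \emph{maximal}, in the sense that after augmenting along every cycle in $\mathcal{C}$ the graph $G[1,3]$ contains no augmenting cycle at all. The key observation driving everything is the correspondence noted just before the algorithm: augmenting cycles in $G[1,3]$ are exactly the directed cycles in $\vec{G}[1,3]$. This lets us forget about the matched/unmatched alternation and reason purely about a DFS on a directed graph searching for directed cycles, where a cycle is detected precisely when a back edge points to a vertex still on the DFS stack $\mathcal{S}$.

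For the first claim, I would argue that vertex-disjointness is immediate from the marking discipline: every vertex that joins a discovered cycle is marked and popped, and $V_u$ is defined to include only \emph{unmarked} neighbors, so no vertex can be placed into two cycles. That each element added to $\mathcal{C}$ really is a directed cycle follows because $\mathcal{S}$ always forms a (directed alternating) path by the push/pop invariant, so when $(u,v)\in E(\vec{G}[1,3])$ with $v=(\ldots,v,\ldots,u)=\mathcal{S}$, the segment $(v,\ldots,u)$ together with the back edge $(u,v)$ closes a genuine directed cycle, i.e.\ an augmenting cycle in $G[1,3]$.

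For maximality, the heart of the argument is the standard DFS fact that when a vertex $u$ is marked and popped with $V_u$ empty, every out-neighbor of $u$ in $\vec{G}[1,3]$ is already marked, so $u$ cannot lie on any directed cycle through currently unmarked vertices. I would prove by induction on the order of marking that once a vertex is marked it is not contained in any augmenting cycle of the \emph{residual} graph obtained by deleting the already-committed cycles in $\mathcal{C}$; equivalently, that after all augmentations are performed, no marked vertex can start a directed cycle. Since the while-loop terminates only when all vertices are marked, this shows the post-augmentation graph $\vec{G}[1,3]$ has no directed cycle, hence $G[1,3]$ has no augmenting cycle. Here I must be careful that augmenting along the cycles in $\mathcal{C}$ only reverses the orientation of edges \emph{inside} those cycles (matched becomes unmatched and vice versa), and that the vertices of those cycles are exactly the ones removed from further consideration, so no \emph{new} directed cycle through the surviving vertices can be created by the augmentation.

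The main obstacle I anticipate is precisely this last point: making rigorous that augmenting along $\mathcal{C}$ cannot create a fresh augmenting cycle among the remaining (marked-but-not-in-$\mathcal{C}$) vertices. A vertex gets marked for one of two reasons---either it joined a cycle, or it was found to have no unmarked out-neighbor---and these two populations interact, since a vertex popped for ``dead-end'' reasons may have had out-edges into what later became a cycle. The clean way to handle this is to track the invariant that at the moment a dead-end vertex $u$ is popped, \emph{all} its out-neighbors are marked, and a marked neighbor is either permanently dead or lies on some cycle in $\mathcal{C}$; in the latter case, after augmentation the relevant edge orientation changes but the cycle's vertices are gone, so $u$ still reaches no live vertex by which it could close a cycle. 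Formalizing this reachability-after-augmentation bookkeeping, rather than the DFS correctness itself, is where the real care is needed.
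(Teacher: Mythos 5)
Your first claim (vertex-disjointness of $\mathcal{C}$, and its maximality among augmenting cycles of the \emph{original} graph $\vec{G}[1,3]$) is sound: your marking-order induction is a valid alternative to the paper's argument, which instead takes a hypothetical cycle $C$ disjoint from $\mathcal{C}$, considers the first vertex $v_0$ of $C$ pushed onto the stack and the last vertex of $C$ pushed before $v_0$ is popped, and derives a contradiction either way. The genuine gap is in the second half, exactly where you said the real care is needed, and your plan as stated does not close it. Your premise that augmenting along $\mathcal{C}$ ``only reverses the orientation of edges inside those cycles'' is false in this setting: eligibility in $G[1,3]$ is asymmetric between matched and unmatched edges (an unmatched edge must be tight, $y(e)=w_i(e)$, while a matched edge must have $y(e)\ge w_i(e)+\delta_i$), so flipping the matching status of an eligible edge makes it \emph{ineligible} --- the edge disappears from $G[1,3]$ entirely rather than reversing direction. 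Consequently your companion assertion that the vertices of the cycles in $\mathcal{C}$ are ``gone'' or ``removed from further consideration'' is precisely what needs proof: those vertices remain in the post-augmentation graph, still matched, and a new eligible cycle could a priori pass through them.

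The missing idea is the mate argument. Suppose an eligible augmenting cycle $C$ exists after augmentation. If $C$ is vertex-disjoint from every $C'\in\mathcal{C}$, then none of its edges changed matching status or eligibility (augmentation alters only edges on the cycles of $\mathcal{C}$, and $y$-values are untouched), so $C$ was already an eligible cycle disjoint from $\mathcal{C}$ before augmentation, contradicting the maximality you established in part one. Otherwise $C$ passes through some vertex $v$ of some $C'\in\mathcal{C}$. An alternating cycle through $v$ must use $v$'s matched edge; after augmentation $v$'s mate lies on $C'$, so $C$ and $C'$ share an edge --- but every edge of $C'$ became ineligible after augmentation, a contradiction. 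Your reachability-after-augmentation bookkeeping cannot substitute for this step: once you correct the picture so that augmented edges vanish rather than reverse, the question is about the alternation structure of the \emph{new} matching at the vertices of $V(\mathcal{C})$, not about what the DFS could reach in the old graph, and only the matched-edge (mate) observation resolves it.
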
 
\begin{proof}
Supposing $\mathcal{C}$ is not maximal, let $C = (v_0, v_1, \dots v_{k-1}, v_0)$ be any cycle vertex-disjoint from all cycles in $\mathcal{C}$, where $v_0$ is the first vertex of $C$ pushed onto the stack. Let $t$ be the largest index such that $v_t$ is pushed onto the stack before $v_0$ is popped off.  
It follows that $v_{t+1 \operatorname{mod} k}$ is unmarked and therefore appears in $V_{v_t}$.  If $t=k-1$ then $v_0\in V_{v_{k-1}}$ and the search will discover an augmenting cycle containing $v_t$; if $t<k-1$ then the search will push $v_{t+1}$ onto the stack, contradicting the maximality of $t$.

If there exists an eligible cycle $C$ after augmentation, then this cycle must share a vertex $v$ with some cycle $C' \in \mathcal{C}$ due to the maximality of $\mathcal{C}$.
However, since $C'$ contains $v$'s mate (both before and after augmentation), $C$ and $C'$ must intersect at an edge, which contradicts the fact that all edges in $C'$ become
ineligible after augmentation.
\end{proof}

\begin{figure}[h]
\begin{center}
\scalebox{.45}{\includegraphics{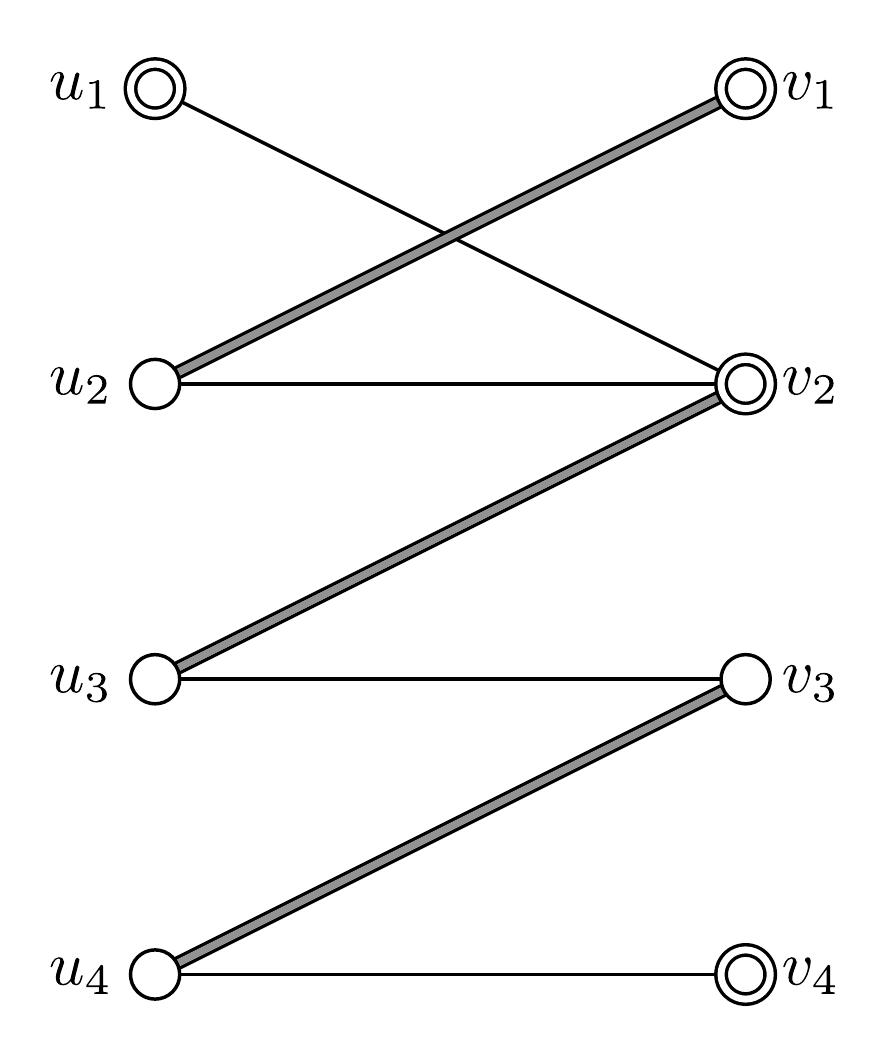}}
\end{center}
\caption{\label{fig:maximal_aug_path} 
An illustration of starting vertices and maximal augmenting paths in $G[1,3]$. 
The plain edges denote unmatched edges; the shaded edges are matched. 
The haloed vertices have zero $y$-values. 
The set of starting vertices is $\{u_1, v_1,v_2\}$. 
The path $P=(v_2,u_3,v_3,u_4,v_4)$ is an augmenting path
but not a maximal augmenting path, since it can be extended to a longer one.
For example, $(u_1,v_2,u_3,v_3,u_4,v_4)$ and $(v_1,u_2,v_2,u_3,v_3,u_4,v_4)$
are augmenting paths containing $P$.}
\end{figure}

In the second stage of augmentation 
we will eliminate all the augmenting paths in $G[1,3]$. This is done by finding a maximal set of vertex-disjoint {\it maximal} augmenting paths,
which are those not properly contained in another augmenting path.  (Recall that in Phase II, augmenting paths can end at matched edges so long as the endpoints of the path
have zero $y$-values.  Any augmenting path with free endpoints is necessarily maximal, but one ending in matched edges may not be maximal.)  
See Figure \ref{fig:maximal_aug_path} for an illustration. 

Consider the graph $\vec{G}[1,3]$. It must be a directed acyclic graph, since, by Lemma~\ref{nocycle}, $G[1,3]$ does not contain augmenting cycles.
Call a vertex with zero $y$-value a {\em starting} vertex if it is left and free or right and matched, and an {\em ending} vertex if it is left and matched or right and free.
Let $A$ and $B$ be the set of starting and ending vertices.  It follows that any augmenting path in $G[1,3]$ corresponds to a directed path from an $A$-vertex to a $B$-vertex
in $\vec{G}[1,3]$.  With this observation in hand we can find a maximal set $\mathcal{P}$ of maximal augmenting paths using DFS.  We initiate the search on each $A$-vertex 
$u$ in topological order.  When an augmenting path to a $B$-vertex, say $v$, is first discovered we cannot commit $(u,\ldots,v)$ to $\mathcal{P}$ immediately.  
Rather, we continue to look for an even longer augmenting path, and keep $(u,\ldots,v)$ only after all outgoing edges from $v$ have been exhausted.
Algorithm~\ref{alg3} gives the {\em Path-Search} procedure.

	\begin{algorithm}
		\caption{{\em Path-Search}: returns a maximal set of maximal eligible augmenting paths}
		\label{alg3}
		\begin{algorithmic}
			\STATE {All vertices are initially {\em unmarked}}
			\STATE {$A \leftarrow \{u \:|\: \mbox{$y(u)=0$ and $u$ is left and free or right and matched}\}$ \hfill \{starting vertices\}\ \ \ \ \ \ \ }
			\STATE {$B \leftarrow \{u \:|\: \mbox{$y(u)=0$ and $u$ is right and free or left and matched}\}$ \hfill \{ending vertices\}\ \ \ \ \ \ \ }
			\STATE {$\mathcal{P} \leftarrow \emptyset$ \hfill \{set of augmenting cycles discovered so far\}\ \ \ \ \ \ \ }
			\STATE {$\mathcal{S} \leftarrow ()$ \hfill \{DFS stack\}\ \ \ \ \ \ \ }
			\WHILE {there are still unmarked $A$-vertices}
			\IF {$\mathcal{S}$ is empty}
				\STATE {Push the first (topologically) unmarked $u_0 \in A$ onto $\mathcal{S}$.}
			\ENDIF
			\STATE {$u$ = the top of $\mathcal{S}$}
			\STATE {$V_u = \{v \:|\: \mbox{$(u,v)\in E(\vec{G}[1,3])$  and  $v$ is unmarked}\}$}
			\IF {$V_u$ is empty}
				\IF {$u\in B$} 
					\STATE {$\mathcal{P} \leftarrow \mathcal{P} \cup \{\mathcal{S}\}$ \hfill \{$\mathcal{S}$ forms a maximal augmenting path\}\ \ \ \ \ \ \ }
					\STATE {Mark all vertices in $\mathcal{S}$ and set $\mathcal{S} \leftarrow ()$ \hfill \{Pop all vertices off the stack\}\ \ \ \ \ \ }
				\ELSE
					\STATE {Mark $u$ and pop $u$ off $\mathcal{S}$}
				\ENDIF
			\ELSE
				\STATE {Push an arbitrary $v\in V_u$ onto $\mathcal{S}$}
			\ENDIF
			\ENDWHILE
			\RETURN  $\mathcal{P}$
		\end{algorithmic}
	\end{algorithm}

\begin{lemma}\label{nopath} 
After augmenting along every path in $\mathcal{P}$, the graph $G[1,3]$ contains no augmenting paths. 
\end{lemma}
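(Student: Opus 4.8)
The plan is to mirror the proof of Lemma~\ref{nocycle} for cycles, combining two ingredients: that augmenting along the paths in $\mathcal{P}$ destroys the eligibility of every edge it touches, and that $\mathcal{P}$ is a \emph{maximal} set of vertex-disjoint augmenting paths. First I would record the eligibility fact. During augmentation the duals $y$ are frozen; only the matched/unmatched status of edges on $\bigcup_{P\in\mathcal{P}}P$ changes. If $e\notin M$ is eligible then $y(e)=w_i(e)$, and once $e$ becomes matched it would need $y(e)\ge w_i(e)+\delta_i$ to stay in $G[1,3]$, which fails; symmetrically, if $e\in M$ is eligible then $y(e)\ge w_i(e)+\delta_i>w_i(e)$, and once unmatched it would need $y(e)=w_i(e)$, which also fails. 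Hence every edge of every $P\in\mathcal{P}$ becomes ineligible after augmentation.

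Second, I would establish set-maximality: there is no augmenting path in $G[1,3]$ vertex-disjoint from $\bigcup_{P\in\mathcal{P}}P$. Since $\vec{G}[1,3]$ is acyclic by Lemma~\ref{nocycle}, augmenting paths are exactly directed $A$-to-$B$ paths, and Algorithm~\ref{alg3} processes the starting vertices $A$ in topological order while extending each search forward until a $B$-vertex with no unmarked out-neighbor is reached. By an argument parallel to the cycle case — take a hypothetical disjoint path $Q=(q_0,\ldots,q_k)$, let $q_t$ be the last vertex pushed before $q_0$ is popped, and observe that its unmarked successor $q_{t+1}$ forces either a push (contradicting maximality of $t$) or the recording of a path through $q_t$ (contradicting disjointness) — one concludes that every such $Q$ must meet $\mathcal{P}$.

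Finally I would combine these. Suppose an eligible augmenting path $Q$ survives in $M'=M\symdiff\bigcup_{P\in\mathcal{P}}P$. If $Q$ were vertex-disjoint from $\mathcal{P}$, then all of $Q$'s edges lie off $\bigcup\mathcal{P}$, so their status and eligibility are unchanged; thus $Q$ was already an eligible augmenting path disjoint from $\mathcal{P}$ in the old graph, contradicting set-maximality. Otherwise $Q$ meets some $P\in\mathcal{P}$ at a vertex $v$, and I would argue they must share an \emph{edge}, which is now ineligible, contradicting the eligibility of $Q$. When $v$ is interior to $P$ this is immediate: in $M'$ the vertex $v$ has a unique matched edge, that edge lies on $P$, and if $v$ is interior to $Q$ or a matched endpoint of $Q$ then this is exactly the matched edge $Q$ uses at $v$ (a free endpoint of $Q$ is impossible here, since $v$ is matched in $M'$).

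The main obstacle — and the one place where the requirement that $\mathcal{P}$ consist of \emph{maximal} augmenting paths (not arbitrary ones) is essential — is the case where $v$ is a shared \emph{endpoint} of both $Q$ and $P$, since then the two paths can touch at $v$ without sharing an edge. Here I would invoke non-extendability. The edge $Q$ uses at $v$ is an eligible edge of the correct parity incident to an endpoint of $P$; being incident to $v\in V(P)$ it lies off $\bigcup\mathcal{P}$, hence is eligible in the old graph as well, so prepending or appending it to $P$ exhibits a directed $A$-to-$B$ path in $\vec{G}[1,3]$ properly containing $P$. This contradicts the fact that Algorithm~\ref{alg3} records only maximal augmenting paths — forward-maximal by the DFS halting rule, and not backward-extendable because the starting vertices $A$ are processed in topological order. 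Carrying out this endpoint case cleanly, with the full $A/B$ and left/right bookkeeping on whether $v$ is free or matched in $M'$, is the crux of the argument.
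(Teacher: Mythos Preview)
Your overall strategy and your interior-vertex case are correct and match the paper. The genuine gap is in the endpoint case: appending or prepending the \emph{single} edge $Q$ uses at $v$ to $P$ does \emph{not} exhibit a directed $A$-to-$B$ path. Concretely, suppose $v$ is the $B$-endpoint of $P$, so $v$ is left and matched in $M$, left and free in $M'$; then $Q$'s edge at $v$ is an unmatched edge $(v,w)$ with $w$ a right vertex. The next edge of $Q$ is $w$'s $M'$-mate, which lies off $\bigcup\mathcal{P}$ and hence was also $w$'s $M$-mate, so $w$ is a matched right vertex in $M$ and therefore $w\notin B$. Thus $P$ followed by $(v,w)$ is not an augmenting path in the old $G[1,3]$. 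Your fallback to ``forward-maximal by the DFS halting rule'' only yields that $w$ was \emph{marked} at the moment $P$ was recorded, which is not by itself a contradiction---$w$ could perfectly well have been visited and popped during an earlier search. The symmetric ``backward'' case has the same defect: the vertex you reach by prepending one edge need not lie in $A$.

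The paper closes this case with two ingredients you do not have. First, it concatenates \emph{all} of $Q$ with $P$, not one edge: since every edge of $Q$ lies off $\bigcup\mathcal{P}$ (otherwise it would be ineligible), $Q$ has the same match status and orientation in the old graph, so $PQ$ is a directed walk in the old $\vec G[1,3]$. Second---and this is the missing idea---it fixes $P$ to be the \emph{first} path added to $\mathcal{P}$ that intersects $Q$, and then splits on which endpoint of $P$ is the starting vertex. If $x_P\in A$, then at the moment the stack held only $x_P$ the vertices of $Q$ could not yet lie on any earlier member of $\mathcal{P}$, and the DFS from $x_P$ would have continued along $Q$ past $x$ rather than recording the non-maximal $P$. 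If instead $x\in A$ (so $x$ is a right matched vertex that becomes free), the paper observes that the other end $x_Q$ of $Q$ is itself an old starting vertex and precedes $x$ in the topological order of $\vec G[1,3]$; hence the search from $x_Q$ ran before the search from $x$ and must have produced some member of $\mathcal{P}$ containing $x_Q\in Q$, contradicting the choice of $P$ as first. Both branches rely essentially on the ``first $P$'' selection together with the topological processing order; your one-edge extension, even with the $A$/$B$ bookkeeping you allude to, cannot be completed without them.
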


\begin{proof}
Suppose that there exists an augmenting path $Q$ after the augmentation. Then, by the maximality of $\mathcal{P}$, there must be some augmenting path in $\mathcal{P}$ sharing vertices with $Q$. There can be two cases.
\begin{description}
\item[Case 1] There exists a $P\in\mathcal{P}$ and a $v \in P \cap Q$ that is not an endpoint of $P$.
Since $P$ contains $v$ and its mate before and after augmentation, $P$ and $Q$ must share an edge, which is impossible since all edges in $P$ become ineligible after augmentation.

\item[Case 2] Let $P$ be the first path added to $\mathcal{P}$ that intersects $Q$.  Since we are not in Case 1, $P$ and $Q$ intersect at a common endpoint, say $x$.
(Note that the only way this is possible is if $x$ is matched before augmentation
and free afterward.  If $x$ were a free endpoint of $P$ before augmentation it would be incident to a matched ineligible edge after augmentation and could therefore 
not be an endpoint of $Q$.)
Let $x_P$ and $x_Q$ be the other endpoints of $P$ and $Q$. If $x_P = x_Q$ then $G[1,3]$ contained an augmenting cycle, contradicting Lemma~\ref{nocycle}. 
If $x_P\in A$ is a starting vertex, consider the moment when the stack contained only $x_P$.  At this time $PQ$ is an augmenting path of unmarked vertices,
so {\em Path-Search} would not place the non-maximal augmentation $P$ in $\mathcal{P}$.
On the other hand, if $x$ is a starting vertex, it must be a right matched vertex that becomes free after augmentation, which implies that $x_Q$ must also be a starting vertex. 
Since our search explores $A$-vertices in topological order, the search from $x_Q$ would have preceded the search from $x$.
Thus, the first augmentation of $\mathcal{P}$ intersecting $Q$ must contain $x_Q$, contradicting the fact that $P$ does not contain $x_Q$.
\end{description}
\end{proof}

\subsection{Phase II Dual Adjustment} \label{sec:dual}

Recall that scale $i$ ends when $M\subseteq G[0,1]$.
Let $B = M\cap G[2,3]$ be the {\em bad} edges and let $f : E\rightarrow \{0,1,2\}$ measure the {\em badness}, defined as follows.
\[
f(e) = \left\{
\begin{array}{ll}
\f{y(e) - w_i(e) - \delta_i}{\delta_i}	& \mbox{ for $e\in B$}\\
0							& \mbox{ for $e\not\in B$}
\end{array}
\right.
\]
Let $f(M) = \sum_{e\in B} f(e)$ be the total badness of $M$. 
The goal of dual adjustment is to eliminate $B$, or equivalently, to reduce $f(M)$ to 0.  We will show how to reduce $f(M)$ by roughly $\sqrt{f(M)}$ in linear time.

A $B' \subseteq B$ is called a {\em chain} if there is an alternating path in $G[1,3]$ containing $B'$ and 
an {\em anti-chain} if no alternating path in $G[1,3]$ contains two edges $e_1,e_2 \in B'$.
Lemma~\ref{chainanti} basically follows from Dilworth's lemma \citeyear{Dilworth50}. 

\begin{lemma}\label{chainanti}
For any $t > 1$, there exists $B' \subseteq B$ such that either $B'$ is a chain with $f(B') \geq  \ceil{t}$ or $B'$ is an anti-chain with $|B'| \geq \ceil{f(M) / 2t}$. Moreover, such a $B'$ can be found in linear time.
\end{lemma}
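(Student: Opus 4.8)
The plan is to recast the lemma as a weighted Mirsky/Dilworth dichotomy on the acyclic orientation $\vec{G}[1,3]$ and to realize it with a single longest-weighted-path computation. By Lemma~\ref{nocycle} the augmentation step has removed all augmenting cycles, so $\vec{G}[1,3]$ is a DAG, and (as already observed) its directed paths are exactly the alternating paths of $G[1,3]$. Thus a set $B'\subseteq B$ is a chain precisely when its edges are pairwise comparable in the reachability order of this DAG, and an anti-chain precisely when they are pairwise incomparable. Recall that $f(e)\in\{1,2\}$ for every $e\in B$. For each bad edge $e$ I would define $d(e)$ to be the maximum total $f$-weight over all chains of bad edges whose downstream-most (most-reachable) element is $e$; equivalently $d(e)=f(e)+\max\{d(e'):e'\in B,\ e'\prec e\}$, with the empty maximum read as $0$. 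Processing the vertices in topological order, $d(\cdot)$ together with a predecessor pointer recording an optimal chain can be computed in $O(m+n)$ time, which is linear as required.

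Next I would run the dichotomy on $D=\max_{e\in B}d(e)$. If $D\ge\ceil{t}$, following predecessor pointers back from the maximizer produces a chain $B'$ with $f(B')=D\ge\ceil{t}$, and we are in the first case. Otherwise $D\le\ceil{t}-1<t$. Partition $B$ into the level sets $S_k=\{e\in B:d(e)=k\}$ for $k=1,\dots,D$. Each $S_k$ is an anti-chain: if $e_1\prec e_2$ lie on a common alternating path then $d(e_2)\ge d(e_1)+f(e_2)>d(e_1)$, so comparable bad edges never share a level. Since $\sum_{k}\sum_{e\in S_k}f(e)=f(M)$, some level $S_{k^\ast}$ has $f$-weight at least $f(M)/D$; using $f(e)\le 2$ I get $|S_{k^\ast}|\ge \tfrac12\sum_{e\in S_{k^\ast}}f(e)\ge f(M)/(2D)>f(M)/(2t)$, the last step because $D<t$. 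As $|S_{k^\ast}|$ is an integer strictly exceeding $f(M)/(2t)$, it is at least $\ceil{f(M)/(2t)}$, placing us in the second case. (When $f(M)=0$ the statement is vacuous, witnessed by the empty anti-chain.)

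The conceptual heart of the argument is entirely standard---this is Mirsky's level decomposition, the companion of the Dilworth's lemma referenced in the statement---so the only real work is in the bookkeeping. The step I would be most careful about is verifying that the level sets are genuinely anti-chains, which hinges on strict monotonicity of $d$ along any alternating path and therefore on the fact that every bad edge contributes $f(e)\ge 1$. The remaining delicate point is purely arithmetic: turning the bound $D\le\ceil{t}-1<t$ into the clean guarantee $\ceil{f(M)/(2t)}$, where integrality of $|S_{k^\ast}|$ is what lets me absorb the factor lost by $f(e)\le 2$ and replace a strict inequality by the stated ceiling. Establishing the DAG property and that a topological-order scan suffices for the linear-time claim is routine given Lemma~\ref{nocycle} and the orientation $\vec{G}[1,3]$ already in hand.
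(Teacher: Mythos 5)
Your proposal is correct and is essentially the paper's own argument: the paper also computes longest $f$-weight paths in the DAG $\vec{G}[1,3]$ (phrased as shortest paths from the zero-in-degree vertices under the length function $-f$), takes a maximum-weight path as the chain when it reaches weight $\ceil{t}$, and otherwise extracts an anti-chain as a level set of the distance labels via pigeonhole. Your Mirsky-style DP with predecessor pointers and the weighted-then-integrality pigeonhole is just a minor rephrasing (and slightly more careful about the arithmetic) of the same linear-time construction.
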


\begin{proof}
Let $S$ be the set of vertices with zero in-degree in the acyclic graph $\vec{G}[1,3]$.
In linear time we compute distances from $S$ using $-f$ as the length function. Let $d(v)$ be the distance to $v$.
Suppose there is a $v$ with $d(v) \le -\ceil{t}$ and let $P$ be a shortest path to $v$.  It follows that $B' = P\cap B$ is a chain with $f(B') \geq \ceil{t}$.
If this is not the case then, for every $(u,v) \in B$ (where $v$ is a left vertex), $d(v) \in [-(\ceil{t} - 1), -1]$. 
Since $f(e) \leq 2$ for $e \in B$, we must have at least $\ceil{|B| / (\ceil{t} - 1)} > \ceil{f(M) / 2t}$ such $v$ with a common distance, say $-k$.
It follows that $B' = \{ (u,v) \in B \:|\: \mbox{$v$ is a left vertex and $d(v) = -k$} \}$ is an anti-chain, for if $(u_1,v_1),(u_2,v_2)\in B'$ were on an alternating path in $G[1,3]$,
the distance to $v_2$ would be strictly smaller than the distance to $v_1$.
\end{proof}

Below we show that if $B'$ is a chain we can decrease the total badness by $f(B')$ in linear time. 
On the other hand, if $B'$ is an anti-chain, then we can decrease the total badness by $|B'| / 2$, also in linear time.

\subsubsection {Phase II Dual Adjustment: Antichain Case} 

When performing dual adjustments we must be careful to maintain Property~\ref{invariant}, 
which states that $y(u)$ must always be nonnegative, and must be zero if $u$ is free.  
This motivates the definition of a {\em dual adjustable} vertex.

\begin{definition}
A vertex $u$ is said to be {\em dual adjustable} if every vertex in $\VODD(u, G[1,3])$ is matched and every vertex $v \in \VEVEN(u, G[1,3])$ has $y(v) > 0$.
\end{definition}

\begin{lemma}\label{uvadjust} For every $e = (u,v) \in B$, 
either $u$ is adjustable or $v$ is adjustable or both. Furthermore, all adjustable vertices can be found in $O(m)$ time.
\end{lemma}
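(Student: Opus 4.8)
The plan is to recast everything in terms of directed reachability in the DAG $\vec{G}[1,3]$. Immediately after the Phase~II augmentation step, Lemma~\ref{nocycle} guarantees $\vec{G}[1,3]$ is acyclic and Lemma~\ref{nopath} guarantees it contains no directed path from a starting vertex ($A$: left-free or right-matched vertices with $y=0$) to an ending vertex ($B$: right-free or left-matched vertices with $y=0$). The first step is to build a ``dictionary'' translating $\VEVEN$ and $\VODD$ into forward/backward reachability. Because an alternating path counted by these sets must begin with an \emph{unmatched} edge (oriented left-to-right), for a \emph{left} vertex $u$ the union $\VEVEN(u,G[1,3])\cup\VODD(u,G[1,3])$ is exactly the set of descendants of $u$ in $\vec{G}[1,3]$ (left descendants, including $u$ itself, landing in $\VEVEN$; right descendants in $\VODD$), whereas for a \emph{right} vertex $v$ the same union is exactly the set of \emph{ancestors} of $v$ (right ancestors in $\VEVEN$, left ancestors in $\VODD$). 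I would also record two degeneracy facts, used repeatedly: a free right vertex is a sink and a free left vertex is a source of $\vec{G}[1,3]$, since a free vertex has no incident matched edge.

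With this dictionary the adjustability conditions become reachability statements: a left vertex $u$ is adjustable iff it has no descendant in the set $T_L$ consisting of the free right vertices together with the left vertices having $y=0$; a right vertex $v$ is adjustable iff it has no ancestor in the set $T_R$ consisting of the free left vertices together with the right vertices having $y=0$. To prove the existence claim, take a bad edge $e=(u,v)\in B$ with $u$ left and $v$ right, so $e$ is oriented $v\to u$, and suppose for contradiction that neither endpoint is adjustable. Non-adjustability of $u$ produces a descendant $z_1$ of $u$ lying in $T_L$; I would argue $z_1\in B$ by cases: a free-right witness already lies in $B$, and a zero-$y$ left witness lies in $B$ unless it is free, but a free left vertex is a source and hence cannot be a \emph{proper} descendant of $u$ (this would force $z_1=u$, impossible since $u$ is matched). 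Symmetrically, non-adjustability of $v$ produces an ancestor $z_2\in A$ (the only escape, a free-right witness, is excluded because a free right vertex is a sink and cannot be a proper ancestor of $v$). Splicing along the edge $v\to u$ yields a directed walk $z_2\rightsquigarrow v\to u\rightsquigarrow z_1$, hence, by transitivity in the DAG, a directed path from $z_2\in A$ to $z_1\in B$, i.e.\ an augmenting path in $G[1,3]$, contradicting Lemma~\ref{nopath}. The only potential degeneracy, $z_2=v$ and $z_1=u$ at once, would force $y(u)=y(v)=0$ and thus $y(e)=0$, contradicting that $e\in B$ is a bad edge; so the spliced object is always a genuine augmenting path.

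For the $O(m)$ bound I would compute $T_L$ and $T_R$ in a single scan and then run two linear-time searches on $\vec{G}[1,3]$. A backward search (reverse reachability) from $T_L$ marks every vertex that can reach $T_L$, so the unmarked left vertices are exactly the adjustable left vertices; a forward search from $T_R$ marks every vertex reachable from $T_R$, so the unmarked right vertices are exactly the adjustable right vertices. Each search visits every edge a constant number of times, giving $O(m)$ in total.

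I expect the main obstacle to be the orientation bookkeeping in the dictionary step: the sets $\VEVEN/\VODD$ are defined by alternating paths that must \emph{start} with an unmatched edge, so they correspond to forward reachability when the base vertex is on the left but to backward reachability when it is on the right, and one must simultaneously track parity (which side a reached vertex occupies) in order to assign it correctly to $\VEVEN$ versus $\VODD$. Getting these correspondences exactly right—together with the sink/source degeneracies that guarantee each witness lands in the correct one of $A$ or $B$—is where the care is required; once the dictionary is in place, both the contradiction via Lemma~\ref{nopath} and the linear-time computation are short.
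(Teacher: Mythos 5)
Your proof is correct, and at its core it is the same argument as the paper's: for existence you splice the two non-adjustability witnesses at the bad edge $(u,v)$ into an augmenting path of $G[1,3]$ and contradict Lemma~\ref{nopath}, and for the time bound you reduce adjustability to linear-time reachability in $\vec{G}[1,3]$. The differences are in execution, and they favor you. The paper simply asserts that non-adjustability of both endpoints yields an augmenting path $(w,\ldots,u,v,\ldots,x)$; you supply the reachability dictionary, note that free left/right vertices are sources/sinks so each witness lands in the correct endpoint class, let acyclicity (Lemma~\ref{nocycle}) absorb any intersection of the two witness paths, and dispose of the degenerate case $z_2=v$, $z_1=u$ via badness of $e$. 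More substantively, for the computation the paper claims a vertex is non-adjustable if and only if it lies in $\VODD(\tilde{V},G[1,3])$ with $\tilde{V}=\{v : v \mbox{ free, or } (v,v')\in M \mbox{ and } y(v')=0\}$; as literally stated this misses a matched vertex $u$ with $y(u)=0$ whose non-adjustability is witnessed only by the empty path $u\in\VEVEN(u,G[1,3])$ (if $u$'s matched edge is its only edge in $G[1,3]$, no odd alternating path starting with an unmatched edge can end at $u$---its last edge would have to be unmatched---so $u\notin\VODD(\tilde{V},G[1,3])$; yet placing $u$ in $X$ would drive $y(u)$ negative, violating Property~\ref{invariant}(\ref{granularity})). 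Your seed sets $T_L,T_R$ contain the zero-$y$ vertices themselves, so your two directed searches compute the adjustable set exactly; this is a small but genuine repair of the paper's characterization. One cosmetic caution: you reuse $B$ both for the lemma's bad-edge set and for the set of ending vertices; rename one of them to avoid a clash.
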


\begin{proof}
First, suppose that for some $e=(u,v) \in B$, both $u$ and $v$ are not adjustable.
There must exist vertices $w$ and $x$ having zero $y$-values where $(w,\ldots,u,v,\ldots,x)$ is an augmenting path in $G[1,3]$.
However, this contradicts Lemma~\ref{nopath}, which states that there are no augmentations in $G[1,3]$ after an augmentation step.
Let $\tilde{V} = \{v \:|\: \mbox{$v$ is free or $(v,v')\in M$ and $y(v')=0$} \}$.  By definition a vertex is {\em not} dual adjustable if and only if it lies in 
$\VODD(\tilde{V}, G[1,3])$, which can be computed in linear time.
\end{proof}

Let $B' \subseteq B$ be an anti-chain. 
The procedure {\em Antichain-Adjust}$(B')$ selects a set of dual adjustable vertices $X$ incident to $B'$ and on a common side (left or right), then does a dual adjustment starting at $X$. Since, by Lemma \ref{uvadjust}, for any $(u,v) \in B'$ either $u$ is adjustable or $v$ is adjustable or both, we can guarantee that $|X| \geq |B'| / 2$. 
See Figure \ref{fig:anti_chain} for an example.

	\begin{algorithm}
		\caption{{\em Antichain-Adjust}$(B')$}
		\label{alg4}
		\begin{algorithmic}
		\STATE {}
			\STATE {\bf Find adjustable vertices:}
			\STATE $\tilde{V} \leftarrow \{v \:|\: \mbox{ $v$ is free or $(v,v')\in M$ and $y(v')=0$} \}$.
			\STATE Mark vertices in  $V \setminus \VODD(\tilde{V}, G[1,3])$ as {\em adjustable}.
			\STATE $X_L \leftarrow \{u \:|\: \mbox{ $(u,v) \in B'$ and $u$ is a left adjustable vertex}\}$, 
\\$X_R \leftarrow \{u \:|\: \mbox{ $(u,v) \in B'$ and $u$ is a right adjustable vertex}\}$.
			\STATE If $|X_R| \geq |X_L|$, then let $X \leftarrow X_R$; otherwise let $X \leftarrow X_L$.
			\STATE {}
			\STATE {\bf Perform dual adjustments, starting at $X$:}
			\STATE {$y(v) \leftarrow \begin{cases}
					y(v) - \delta_i & \mbox{if $v \in \VEVEN(X, G[1,3])$}\\
					y(v) + \delta_i & \mbox{if $v \in \VODD(X, G[1,3])$}\\
					y(v) & \mbox{otherwise}
					\end{cases}$}
		\end{algorithmic}
	\end{algorithm}

\begin{figure}[h]
\centering
\begin{tabular}{cc}
\scalebox{.45}{\includegraphics{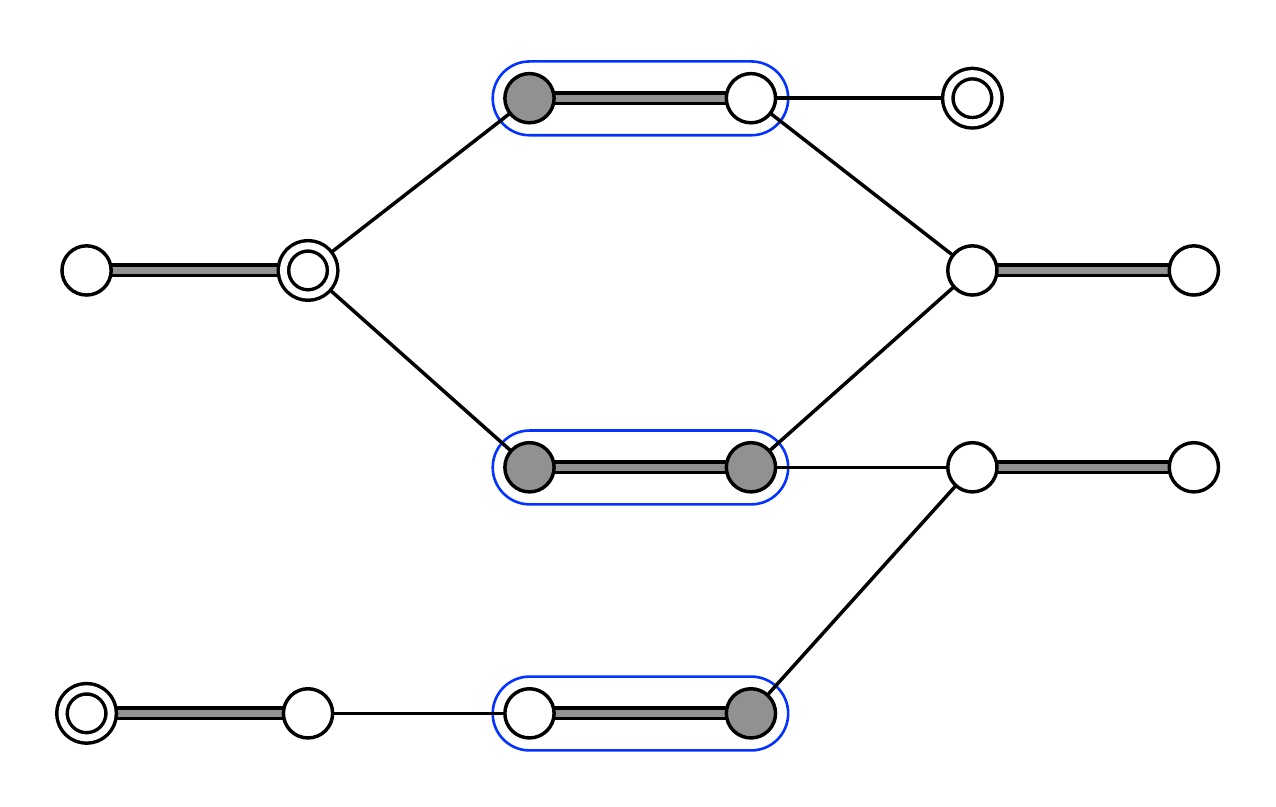}}
& \scalebox{.45}{\includegraphics{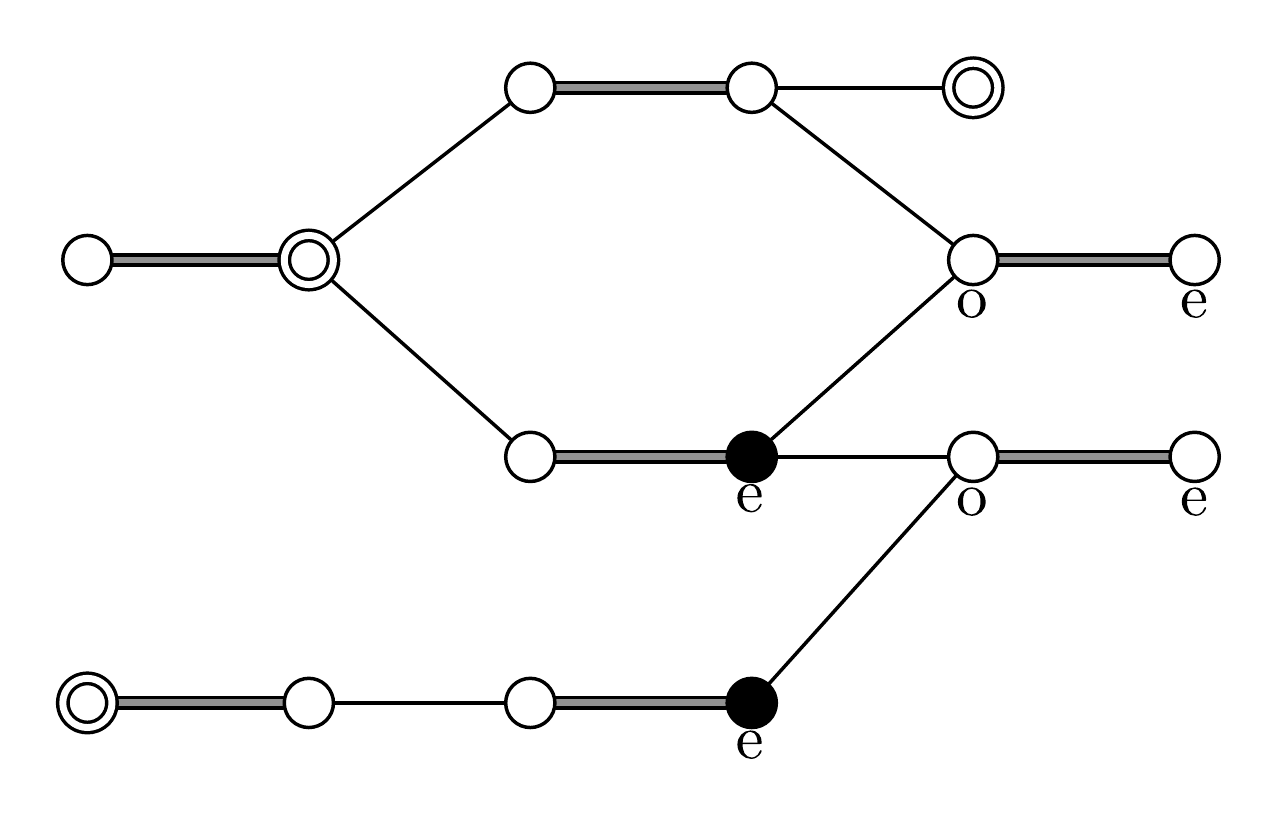}}\\
(a) & (b)
\end{tabular}
\caption{\label{fig:anti_chain} 
(a) The haloed vertices have zero $y$-values. The circled matched edges form an antichain $B'$. 
The shaded vertices in $V(B')$ are dual adjustable.
(b) The black vertices are in $X$ and
vertices marked with `e' and `o' are in $\VEVEN(X,G[1,3])$ and $\VODD(X,G[1,3])$, respectively.}
\end{figure}

\begin{lemma}\label{dual_adjust_anti} The dual adjustment starting at $X$ will not break Property \ref{invariant}(\ref{granularity},\ref{domination},\ref{freeduals}).
Furthermore, it makes Property \ref{invariant}(\ref{tightness}) tighter by decreasing $f(M)$ by $|X|$. \end{lemma}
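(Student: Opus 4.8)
The plan is to verify Property~\ref{invariant}(\ref{granularity},\ref{domination},\ref{freeduals}) directly from the mechanics of the adjustment and then to account for the net change in $f(M)$ using the anti-chain hypothesis. The organizing observation is that all of $X$ lies on one side, say the left; the bipartite alternating structure then forces $\VEVEN(X,G[1,3])$ to be left vertices (including $X$ itself, as the length-zero paths) and $\VODD(X,G[1,3])$ to be right vertices, so the two sets are disjoint and each equals the union of the per-vertex sets $\VEVEN(u,\cdot)$, $\VODD(u,\cdot)$ over $u\in X$. This lets me translate every global claim into the per-vertex adjustability conditions of Lemma~\ref{uvadjust}.

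Granularity is immediate since adjustments are $\pm\delta_i$; the only danger is a decremented vertex going negative, but decremented vertices lie in $\VEVEN(X,G[1,3])$, which by adjustability have $y>0$, hence $y\ge\delta_i$ by granularity. Free vertex duals are preserved because a free vertex cannot be decremented (adjustability would force $y>0$, contradicting $y=0$) and cannot be incremented (every vertex of $\VODD(X,\cdot)$ is matched). For domination, the only edges whose $y(e)$ can fall are those with the left endpoint in $\VEVEN(X,\cdot)$ and the right endpoint not in $\VODD(X,\cdot)$, and such an edge loses exactly $\delta_i$. If it is unmatched it cannot be tight, since tightness would extend an even alternating path and place its right endpoint in $\VODD(X,\cdot)$; hence by granularity it already has slack $\ge\delta_i$. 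If it is matched, the level structure forces its left endpoint to sit at level $0$, i.e.\ to be a bad edge of $B'$, which carries slack $\ge 2\delta_i$. Either way the $-\delta_i$ change leaves $y(e)\ge w_i(e)$.

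The core is the accounting for $f(M)$. First I would argue that no matched edge ever increases: if its right endpoint is incremented (is in $\VODD(X,\cdot)$), then following its matched edge places the left endpoint in $\VEVEN(X,\cdot)$, producing a canceling decrement. So badness never grows, and Property~\ref{invariant}(\ref{tightness}) can only tighten. By the same level argument, the only matched edges that actually decrease are the bad edges $e_u=(u,v_u)$ with $u\in X$ and $v_u\notin\VODD(X,G[1,3])$; distinct $u\in X$ yield distinct such edges, so it suffices to prove $v_u\notin\VODD(X,G[1,3])$ for every $u\in X$. Suppose instead that $v_u\in\VODD(X,G[1,3])$, witnessed by a simple odd alternating path $Q$ from some $u'\in X$ to $v_u$ that begins and ends with unmatched edges. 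If $u'=u$, then $Q$ closed by the matched edge $(v_u,u)$ is an augmenting cycle of $G[1,3]$, contradicting Lemma~\ref{nocycle}. If $u'\ne u$, I would check that neither $u$ nor the mate $v'$ of $u'$ lies on $Q$ (each would force its own matched edge to appear inside $Q$, impossible since $Q$ touches $v_u$ and $u'$ only through unmatched edges), so that $e_{u'}$, $Q$, and $(v_u,u)$ splice into a \emph{simple} alternating path containing the two distinct $B'$-edges $e_{u'}$ and $e_u$, contradicting the anti-chain property. Consequently every $e_u$ loses exactly $\delta_i$, each dropping $f(e_u)$ by one, so $f(M)$ decreases by exactly $|X|$.

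The main obstacle is precisely this last step—excluding $v_u\in\VODD(X,G[1,3])$—since it is the only place where the anti-chain assumption and post-augmentation acyclicity (Lemmas~\ref{nocycle} and~\ref{nopath}) are genuinely used, and it demands care to produce a \emph{simple} path realizing the anti-chain violation rather than a mere alternating walk. Everything else reduces to a routine case split on matched/unmatched and eligible/ineligible edges, driven entirely by the one-sidedness of $X$ and the adjustability conditions.
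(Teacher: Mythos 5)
Your overall route is the same as the paper's: granularity and free-vertex duals follow from the definition of adjustability, domination follows from the one-sidedness of $X$ together with granularity, and the $f(M)$ accounting rests on the acyclicity of $\vec{G}[1,3]$ (Lemma~\ref{nocycle}) plus the anti-chain hypothesis. Your expansion of the key step---showing $v_u\notin\VODD(X,G[1,3])$ for the mate $v_u$ of each $u\in X$ by splicing the two $B'$-edges onto a witnessing path $Q$ and verifying the spliced path is simple---is a careful unpacking of what the paper compresses into one sentence (``since $\vec{G}[1,3]$ is acyclic and $B'$ is an antichain, $u$ cannot appear in $\VODD(X,G[1,3])$''), and it is correct.

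There is, however, one false step. You claim that \emph{no matched edge ever increases}, arguing that if the right endpoint of a matched edge lies in $\VODD(X,G[1,3])$, then ``following its matched edge'' places the left endpoint in $\VEVEN(X,G[1,3])$. That traversal is legitimate only when the matched edge itself belongs to $G[1,3]$, i.e., when $y(e)\ge w_i(e)+\delta_i$. A \emph{tight} matched edge ($y(e)=w_i(e)$, hence in $G[0,0]$ but not in $G[1,3]$) cannot be traversed by an alternating path in $G[1,3]$, so its right endpoint can lie in $\VODD(X,G[1,3])$ (reached through a tight unmatched edge) while its left endpoint is untouched; such an edge genuinely gains $\delta_i$. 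This also falsifies your parenthetical that Property~\ref{invariant}(\ref{tightness}) ``can only tighten.'' The lemma survives because the gain is harmless: the edge moves from $G[0,0]$ to $G[1,1]$, so its badness stays $0$ (badness is supported on $M\cap G[2,3]$), and $y(e)=w_i(e)+\delta_i\le w_i(e)+3\delta_i$ preserves near tightness. The paper avoids the trap by asserting the cancellation only for $e\in M\cap G[1,3]$; your argument needs the same restriction, plus the one-line observation that matched edges outside $G[1,3]$ gain at most $\delta_i$ and remain at badness zero. With that patch your proof is correct and essentially coincides with the paper's.
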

\begin{proof}
Since $X$ consists of adjustable vertices, 
every vertex $v \in \VEVEN(X, G[1,3])$ must have $y(v) > 0$, implying $y(v)$ will be non-negative after being decremented by $\delta_i$. 
Thus, Property \ref{invariant}(\ref{granularity}) is maintained.  Furthermore, $\VODD(X, G[1,3])$ cannot contain a free vertex, which implies that Property \ref{invariant}(\ref{freeduals}) is preserved. Since all vertices in $X$ are on the same left/right side, $y(e)$ can change by at most $\delta_i$.  
Property~\ref{invariant}(\ref{domination}) (domination) is maintained since no tight edge (with $y(e)=w_i(e)$) can
can have one endpoint in $\VEVEN(X,G[1,3])$ without the other being in $\VODD(X,G[1,3])$.
The algorithm does not increase $f(M)$ since, for any $e\in M\cap G[1,3]$, if one endpoint of $e$ appears in $\VODD(X,G[1,3])$, the other must appear in $\VEVEN(X,G[1,3])$.
Furthermore, the algorithm decreases $f(M)$ by at least $|X|\ge |B'|/2$ since $f(u,v)$ is decremented for each $(u,v)\in B'$ with $v\in X$.
To see this, note that $y(v)$ is decremented by $\delta_i$ and, since $\vec{G}[1,3]$ is acyclic and $B'$ is an antichain, 
$u$ cannot appear in $\VODD(X,G[1,3])$ and therefore cannot have its $y$-value adjusted.
\end{proof}

Therefore, by doing the dual adjustment starting at $X$, we can decrease $f(M)$ by at least $|B'| / 2$.

\subsubsection {Phase II Dual Adjustment: Chain Case}
In the chain case we are given a chain $B'\subseteq B$ and a minimal alternating path $P$ containing $B'$, that is, it starts and ends with $B'$-edges.
Setting $M\leftarrow M\symdiff P$ immediately reduces $f(M)$ by $f(B')$ since $B'$-edges are replaced by tight edges, which contribute nothing to $f(M)$.
However, the endpoints of $P$, say $u$ and $v$, are now free while possibly having positive $y$-values, which violates Property~\ref{invariant}(\ref{freeduals}). 
Our goal is to restore Property~\ref{invariant}, either by finding augmenting paths that rematch $u$ and $v$ or by reducing their $y$-values to zero.  
In this section an {\em augmenting path} is one in the eligibility graph $G[0,3]$ (not $G[1,3]$) 
that either connects $u$ and $v$ or connects an $x\in\{u,v\}$ to a vertex with zero $y$-value.
A notable degenerate case is when $y(x)=0$, in which case the empty path is an augmenting path from $x$ to $x$.
We begin by performing dual adjustments (as in a Hungarian search) until an augmenting path $P_u$ in $G[0,3]$ containing $u$ emerges.
We do not augment along $P_u$ immediately but perform a second search for an augmenting path $P_v$ containing $v$.  
If $P_u$ and $P_v$ do not intersect we let $Q = P_u \cup P_v$.  On the other hand, 
if $P_u$ and $P_v$ do intersect then there must be an augmenting path $P_{uv}$ between $u$ and $v$ in $G[0,3]$;
we let $Q=P_{uv}$.  In either case we augment along $Q$, setting
$M\leftarrow M\symdiff Q$.  Figure \ref{fig:chain} illustrates the case where $Q=P_{uv}$.

The search from $x \in \{u, v\}$ works as follows.  
If there exists an augmenting path $P_x$ in $G[0,3]$ starting at $x$, then return $P_x$.  (This will be the empty path if $y(x)=0$.)
Otherwise update $y$-values as follows:
\[
y(z) \leftarrow \begin{cases}
					y(z) - \delta_i & \mbox{if $z \in \VEVEN(x, G[0,3])$}\\
					y(z) + \delta_i & \mbox{if $z \in \VODD(x, G[0,3])$}\\
					y(z) & \mbox{otherwise}
					\end{cases}
\]
and continue to perform these dual adjustments until an augmenting path starting from $x$ emerges.
As in standard Hungarian search, this process is reducible to single source shortest paths on a non-negatively weighted directed graph.
The graph is either $\vec{G}$ or its transpose, depending on the left/right side of $x$.  The weight function $\hat{w}$ is zero on $M$-edges and equal to the gap
between $y$-value and weight on non-$M$ edges: $\hat{w}(e) = y(e) - w_i(e)$.  By Property~\ref{invariant}(\ref{domination}) (domination), $\hat{w}$ is non-negative.
The value $h(z)$ is the sum of dual adjustments until an augmenting path emerges
from $x$ to $z$.  If $z$ is free and on the opposite side of $x$ then this is simply the distance from $x$ to $z$.
However, if $z$ is on the same side as $x$ (and therefore matched), we need another $y(v)/\delta_i$ dual adjustment steps
to reduce its $y$-value to zero.  
The pseudocode for {\em Search}$(x)$ appears below.

	\begin{algorithm}
		\caption{{\em Search}$(x)$}
		\label{alg5}
		\begin{algorithmic}
			\STATE {\bf Initialize the weighted graph:}
			\STATE {$\hat{G} \leftarrow \begin{cases}
				\vec{G} 		& \mbox{ if $x$ is a left vertex}\\
				\vec{G}^T	& \mbox{ if $x$ is a right vertex (reverse the orientation)}
			\end{cases}$}
			\STATE {$\hat{w}(e) \leftarrow \begin{cases} y(e) - w_i(e) & \text{if $e \notin M$} \\ 0 & \text{if $e \in M$} \end{cases}$}
			\STATE {\bf Compute distances.  For each $z\in V$:}
			\STATE {$d(z) \leftarrow $ distance from $x$ to $z$ in $\hat{G}$ with respect to $\hat{w}$, or $\infty$ if $z$ is unreachable from $x$.}
			\STATE {$h(z) \leftarrow \begin{cases} d(z) & \text{if $z$ is free and not on the same side as $x$} \\ 
				d(z) + y(z) & \text{if $z$ is on the same side as $x$}  \\
				\infty & \text{otherwise}
			\end{cases}$}
			\STATE {$z_{\min} \leftarrow \argmin_{z\in V} \{h(z)\}$}
			\STATE {$\Delta \leftarrow h(z_{\min})$}
			\STATE {\bf Perform dual adjustments.  For each $z\in V$:}
			\STATE $y(z) \leftarrow \begin{cases}y(z) - max\{0, \Delta - d(z)\} & \text{if $z$ is on the same side as $x$}\\ y(z) + max\{0, \Delta - d(z)\} & \text{if $z$ is not on the same side as $x$} \end{cases}$
			\STATE $P_x \leftarrow$ a shortest path from $x$ to $z_{\min}$.  (Note: $\hat{w}(P_x)=0$ after dual adjustments above.)
			\RETURN $P_x$

		\end{algorithmic}
	\end{algorithm}

\begin{figure}[h]
\centering
\begin{tabular}{c@{\hcm[1]}c}
\scalebox{.45}{\includegraphics{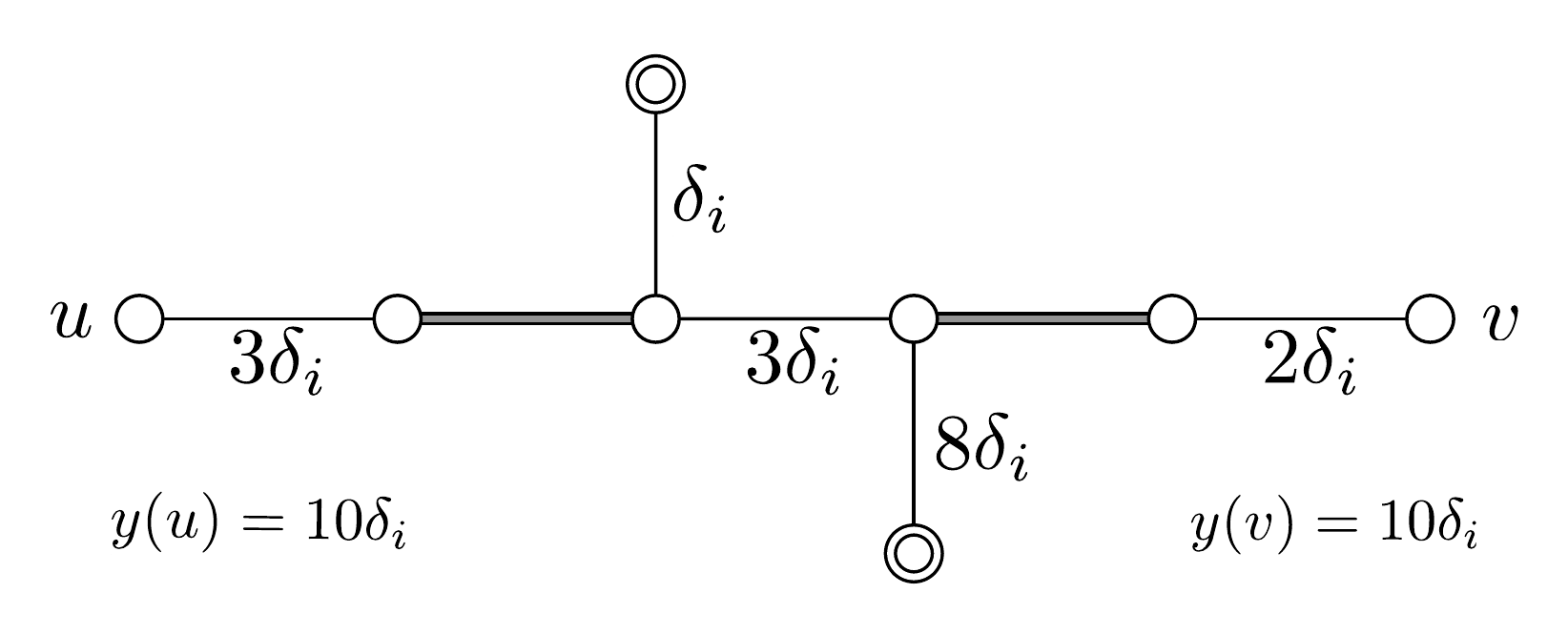}}
& \scalebox{.45}{\includegraphics{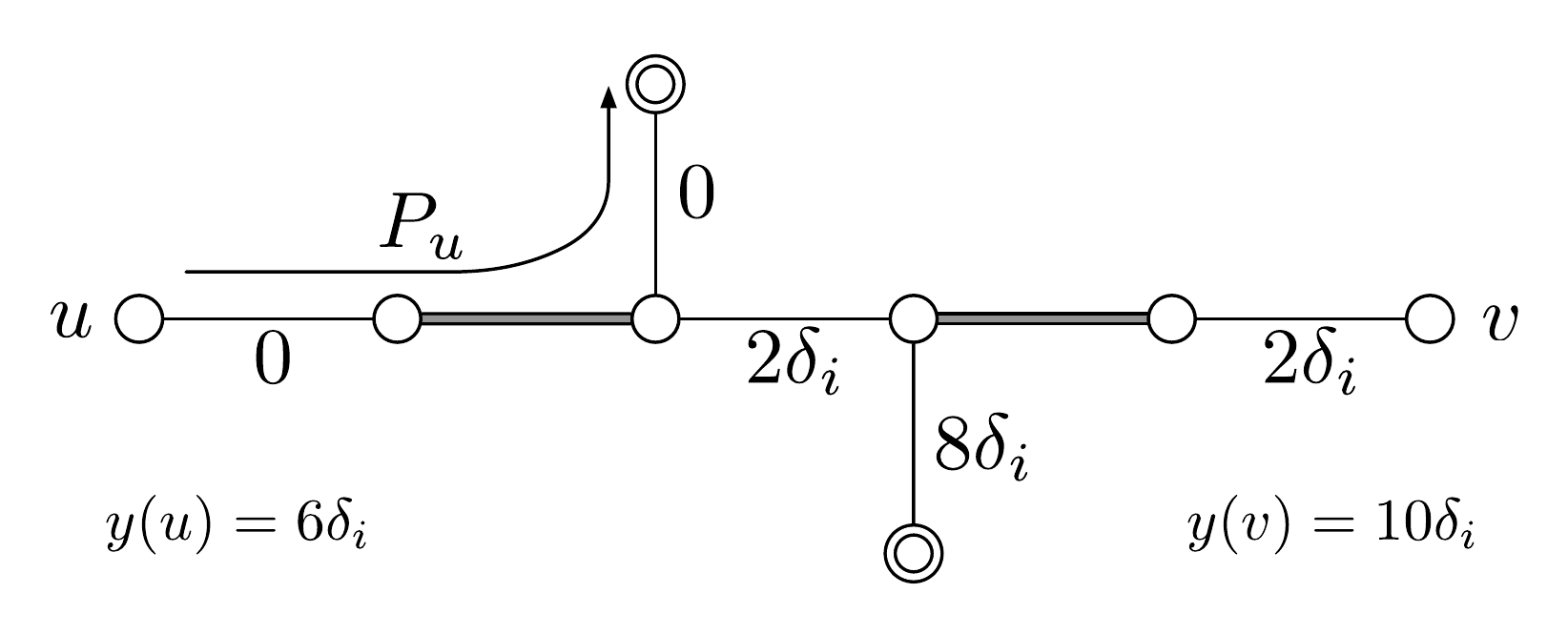}}\\
(a) & (b)\\&\\
\scalebox{.45}{\includegraphics{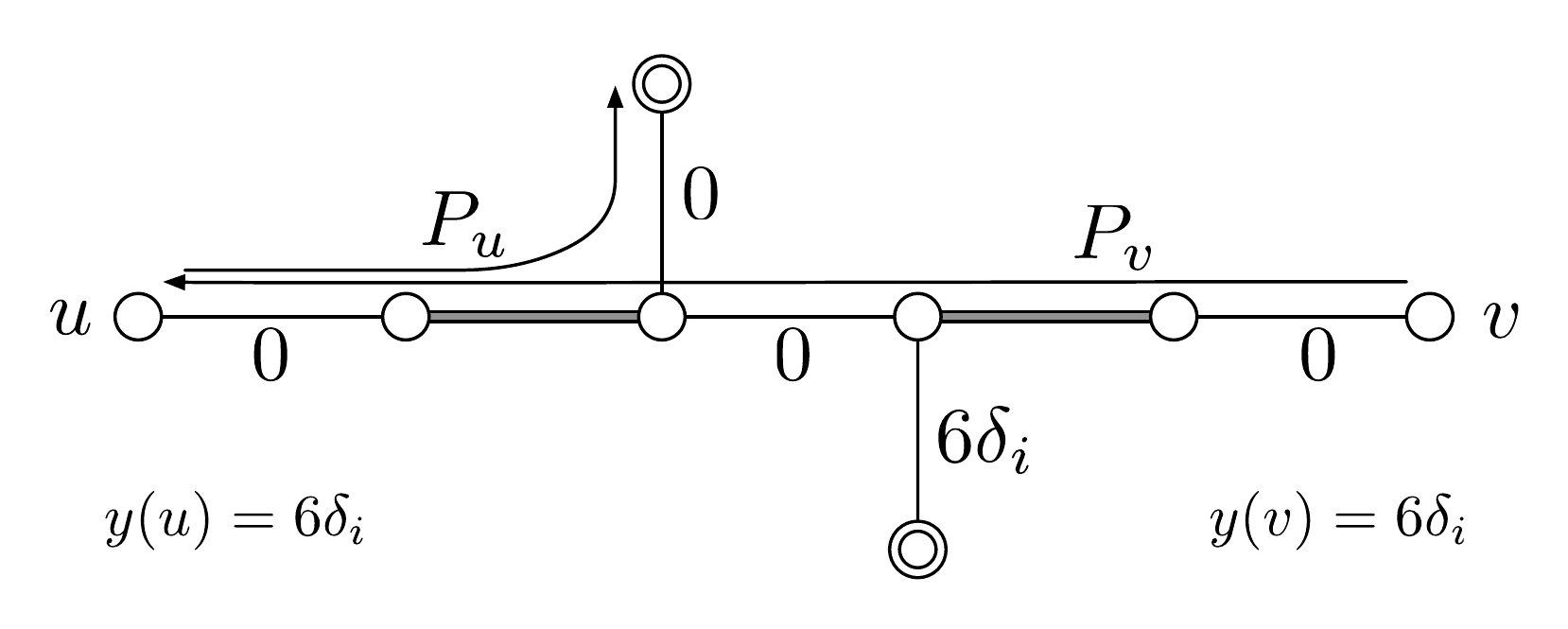}}
&\scalebox{.45}{\includegraphics{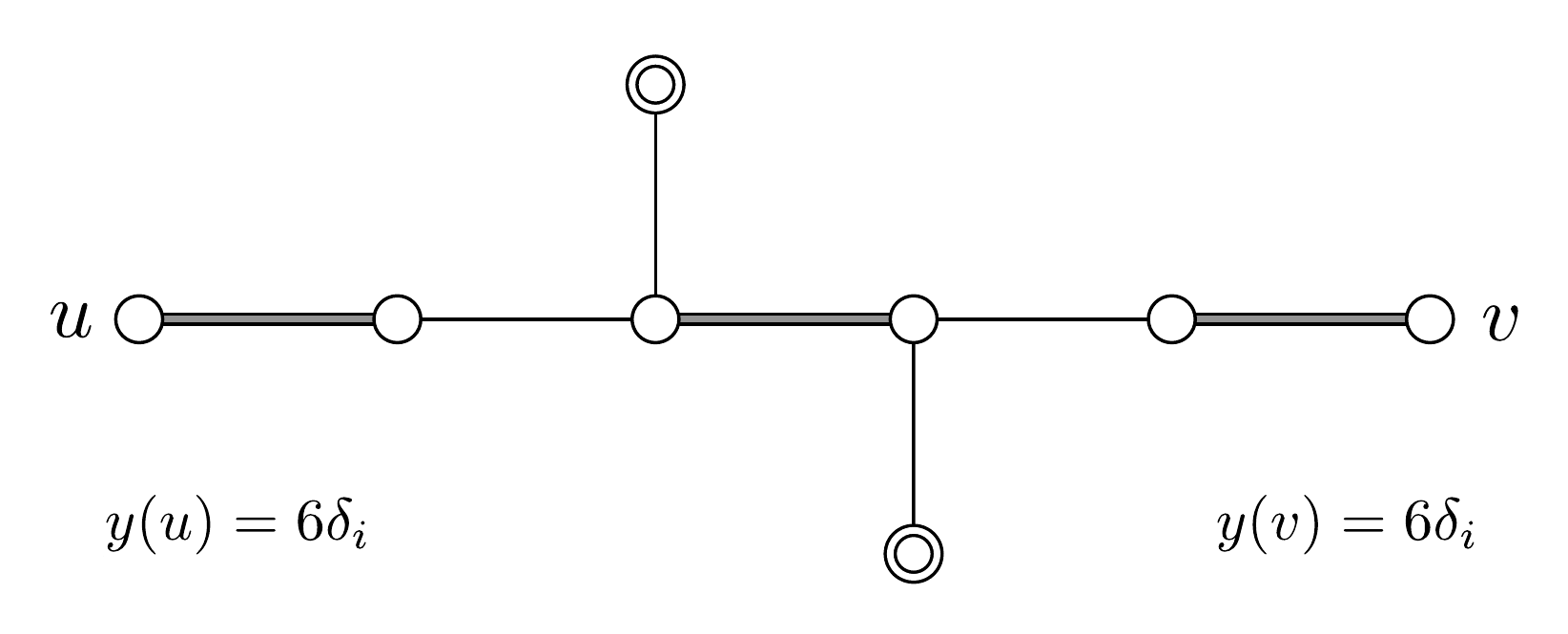}}\\
(c) & (d)
\end{tabular}
\caption{\label{fig:chain} An example illustrating procedures for the chain case. Edges are shown with their new weight $\hat{w}$. The haloed vertices are free vertices with zero $y$-values. (a) After augmenting along $P$, $u$ and $v$ become free while having positive $y$-values. (b) {\em search}$(u)$ adjusts duals by $\Delta_u = 4\delta_i$ and 
finds an augmenting path $P_u$. (c) {\em search}$(v)$ adjusts duals by $\Delta_v = 4\delta_i$ and finds $P_v$. (d) Augmentation along $Q$. This is the case where there exists an augmenting path $Q$ between $u$ and $v$ in $G[0,3]$, which happens to be $P_v$ in the example.}
\end{figure}

Each execution of {\em Search}$(x)$ clearly takes linear time, except for the computation of the distance function $d$.  We use Dijkstra's algorithm~\citeyear{Dij59}, implementing the priority queue as an array of buckets~\cite{Dial69}.  
Since $\hat{w}(e)/\delta_i$ is an integer and we are only interested in distances at most $\Delta$ (see the pseudocode of {\em Search}), a $(\Delta/\delta_i)$-length array suffices.
Lemma~\ref{chain} implies that $\Delta=O(n\delta_i)$, which gives a total running time of $O(m+n)$.

\begin{lemma}\label{chain}
Augmenting along $P$ then $Q$ does not decrease the weight of the matching, that is, $w((M\symdiff P)\symdiff Q) \ge w(M)$.
Furthermore, $\Delta_u + \Delta_v \le 3n\delta_i$, where $\Delta_x$ is the sum of dual adjustments performed by
{\em Search}$(x)$.
\end{lemma}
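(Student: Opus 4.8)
The plan is to handle the two assertions separately, but in both cases to reduce the bookkeeping to the telescoping identity for an alternating structure $R$: writing $\sigma_e=+1$ for an edge that leaves $M$ and $\sigma_e=-1$ for one that enters $M$, and using $y(e)=y(u)+y(v)$, one has $\sum_{e\in R}\sigma_e\,y(e)=0$ when $R$ is an alternating cycle and $\sum_{e\in R}\sigma_e\,y(e)=\pm(\text{sum of endpoint duals})$ when $R$ is an alternating path. Since every non-matching eligible edge is tight ($y(e)=w_i(e)$), this identity converts a weight change into a sum of matched-edge slacks, which is the engine for the weight bound.

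For the first assertion I would split on the two branches of the construction. When $P_u$ and $P_v$ are vertex-disjoint, the net change to $M$ is $R=P\oplus P_u\oplus P_v$, a single alternating path whose two ends are the zero-$y$ vertices reached by {\em Search}$(u)$ and {\em Search}$(v)$ and which passes through $u,v$ internally; when they intersect, $R=P\oplus P_{uv}$ is an alternating cycle through $u$ and $v$. In either case the {\em effective} endpoints of $R$ have zero $y$-value, so $\sum_{e\in R}\sigma_e\,y'(e)=0$ for the duals $y'$ in force when $Q$ is augmented. Because every non-matching edge of $R$ is eligible, hence tight under $y'$, the $w_i$-weight change collapses to $\sum_{e\in R\cap M}\bigl(y'(e)-w_i(e)\bigr)\ge 0$, the released slacks of the removed matched edges (which are precisely the bad $B'$-edges). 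I would then lift this from $w_i$ to $w$ by noting that along $R$ each removed matched edge is paired with an added tight edge, so the per-edge $O(\delta_i)$ gaps between $w$ and $w_i$ cancel up to the positive slack surplus; at the final scale, where $w=w_L$, the inequality becomes exact, which is all that Phase~III needs.

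The hard part is reconciling two different dual functions: $P$'s non-matching edges were certified tight by the {\em pre-search} duals $y$, whereas $P_u,P_v,P_{uv}$ are tight under the {\em post-search} duals $y'$, so I must verify that every non-matching edge surviving into $R$ is simultaneously tight under $y'$. I would argue this by checking that {\em Search} adjusts duals only along its own alternating reach, moving the two endpoints of a surviving edge in lockstep and hence preserving its tightness, and by checking that the degenerate branch ($y(x)\to 0$ with the empty augmenting path) is weight-neutral because it is taken only when no profitable rematch is reachable from $x$. This case analysis, rather than any single calculation, is where I expect the real work to be.

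For the second assertion I would use the augmented path $P$ itself as a {\em witness}. After $M\leftarrow M\oplus P$ the former matching edges of $P$ still form an alternating path joining the now-free $u$ and $v$, and every edge of this path lies in $G[0,3]$ up to a tightness deficiency of only $O(\delta_i)$ per edge: the former-$B'$ edges need their endpoints lowered by at most $3\delta_i$, while the rest are already tight. Since each {\em Search}$(x)$ is a Hungarian/Dijkstra computation returning the {\em minimum} dual adjustment that exposes an eligible augmenting path, and the two searches together must expose an augmenting structure no later than the moment $P$ becomes eligible, $\Delta_u+\Delta_v$ is bounded by the total deficiency of this witness, namely $O(\delta_i)$ summed over the at most $2n$ edges of $P$; a careful charging of each edge's deficiency yields the stated $\Delta_u+\Delta_v\le 3n\delta_i$. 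This is exactly the bound that makes the $(\Delta/\delta_i)$-length bucket array, and hence the $O(m+n)$ time for {\em Search}, legitimate.
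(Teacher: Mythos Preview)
Your argument for $\Delta_u+\Delta_v\le 3n\delta_i$ is correct and is exactly what the paper does: after $M\leftarrow M\oplus P$, the path $P$ itself is an alternating $u$--$v$ walk whose unmatched edges (the former $M\cap P$ edges) have total $\hat w$-length $\tilde w_{old}(M\cap P)=\sum_{e\in M\cap P}(y_{old}(e)-w_i(e))\le 3n\delta_i$, and each \emph{Search} halts no later than when this witness becomes eligible, giving $\Delta_u+\Delta_v\le\tilde w_{old}(M\cap P)$.

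For the weight claim your route diverges from the paper's, precisely at the point you call ``the hard part.'' The paper never fuses $P$ and $Q$ into a single structure and never tries to certify all of $R\setminus M$ tight under one dual function. Instead it pivots through $M_1=M\oplus P$ and writes two \emph{separate} telescoping identities, one for $Q$ under $y_{new}$ and one for $P$ under $y_{old}$:
\[
w_i(M_2)=w_i(M_1)+y_{new}(u)+y_{new}(v)+\tilde w_{new}(M_1\cap Q),
\]
\[
w_i(M)=w_i(M_1)+y_{old}(u)+y_{old}(v)-\tilde w_{old}(M\cap P).
\]
Subtracting, the cross-term is $y_{new}(u)+y_{new}(v)-y_{old}(u)-y_{old}(v)=-(\Delta_u+\Delta_v)$, and now the bound $\Delta_u+\Delta_v\le\tilde w_{old}(M\cap P)$ --- the very inequality you proved for the second assertion --- finishes the weight proof, since $\tilde w_{new}(M_1\cap Q)\ge 0$. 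So in the paper the two assertions are not independent: the $\Delta$-bound is the hinge of the weight argument, not an afterthought.

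Your merged-$R$ approach has two soft spots. First, $R=P\oplus Q$ need not be a single path or cycle: $P_u$ is computed with respect to $M_1$ and is free to retrace edges of $P$, so $P$ and $Q$ can overlap and $R$ is in general a disjoint union of components. This is repairable --- telescope componentwise and check that every path endpoint has $y'=0$ --- but it is not the picture you describe. Second, your ``lockstep'' justification covers only edges that are \emph{matched} in $M_1$: for $e\in M_1$ the two endpoints share a $d$-value in any \emph{Search} (the matched edge has $\hat w$-weight $0$ and is the unique incoming arc to one endpoint), so $y(e)$ is preserved. It does \emph{not} cover the unmatched edges of $P_u$, which were made tight by \emph{Search}$(u)$ and whose two endpoints can be shifted by different amounts during \emph{Search}$(v)$. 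You would need an additional argument that \emph{Search}$(v)$'s reach avoids $P_u$, which is not implied by $P_u$ and $P_v$ being vertex-disjoint. The paper's two-dual decomposition sidesteps this entirely by never asking $P$'s tight edges to remain tight under the post-search duals.
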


\begin{proof}
Call $M_1=M\symdiff P$ and $M_2 = M_1 \symdiff Q$ the matchings after augmenting along $P$ and then $Q$
and let $\tilde{w}$ be the weight function 
$\tilde{w}(e) = y(e) - w_i (e)$. (Notice that $\tilde{w}$ differs from $\hat{w}$ on the matched edges.) 
For a quantity $q$ denote its value before {\em Search}$(u)$ and {\em Search}$(v)$
by $q_{old}$ and after both searches by $q_{new}$. After the two searches, we must have:
\begin{align} 
w_i( Q \setminus M_1)
			&=  \sum_{e \in Q \setminus M_1} y_{new}(e) & \mbox{tightness on unmatched edges} \nonumber \\
			&=  y_{new}(u) + y_{new}(v) + \sum_{e \in M_1 \cap Q} y_{new}(e)   \label{ln:chain} \\
			&=  \zero{y_{new}(u) +y_{new}(v) + w_i(M_1 \cap Q) + \tilde{w}_{new}(M_1\cap Q)} & \mbox{defn. of $\tilde{w}_{new}$} \nonumber
\end{align}

Line (\ref{ln:chain}) follows since, aside from $u$ and $v$, $V(Q\setminus M_1)$ and $V(M_1\cap Q)$ differ only on vertices
with zero $y$-values.  (These are the other endpoints of $P_u$ and $P_v$ when $Q=P_u\cup P_v$.)
Therefore, 

\begin{align}
w_i(M_2) &= w_i(M_1) + w_i(Q\setminus M_1) - w_i(M_1\cap Q)\nonumber\\
 		&= w_i(M_1) + y_{new}(u) +y_{new}(v) + \tilde{w}_{new}(M_1\cap Q)\label{eq:chain} 
\end{align}

A similar proof shows that {\em before} the two searches, we have

\begin{equation}\label{eq:chain2} 
w_i(M) = w_i(M_1) + y_{old}(u) +y_{old}(v) - \tilde{w}_{old}(M\cap P) 
\end{equation}

The total amount of dual adjustment in both searches is 
at most the distance from $u$ to $v$, so $\Delta_{u} + \Delta_{v} \leq \tilde{w}_{old}(M\cap P) \leq 3n \delta_i$.
Moreover:
\begin{align}
w_i(M_2) &\geq w_i(M_1) + y_{new}(u) + y_{new}(v) && \mbox{by (\ref{eq:chain}) and $\tilde{w}_{new}(M_1 \cap Q) \geq 0$}\nonumber \\ 
	&= w_i(M_1) + y_{old}(u) + y_{old}(v) - \Delta_{u} - \Delta_{v}  \nonumber \\
	& \geq w_i(M_1) + y_{old}(u) + y_{old}(v) - \tilde{w}_{old}(M\cap P)  && \nonumber \\
	&= w_i(M) && \mbox{by (\ref{eq:chain2})}\nonumber 
\end{align} 
\end{proof}

\begin{lemma}\label{rounds} At most $O(\sqrt{n})$ rounds of augmentation and dual adjustment are required to reduce $f(M)$ to $0$. \end{lemma}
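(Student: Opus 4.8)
The plan is to track the potential $\Phi = \sqrt{f(M)}$ and argue that it drops by a constant amount in every round, so that starting from its initial value the process terminates in $O(\sqrt{n})$ rounds. First I would pin down the initial bound. At the start of a Phase II scale we have $M\subseteq G[0,3]$, so every bad edge $e\in B = M\cap G[2,3]$ has $f(e)\in\{1,2\}$; since a bipartite matching on $n$ left and $n$ right vertices has at most $n$ edges, $f(M)=\sum_{e\in B} f(e)\le 2n$. Thus it suffices to show that $O(\sqrt{f(M)})$ rounds drive the badness to zero.

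The engine of the argument is to invoke Lemma~\ref{chainanti} with the \emph{adaptive} threshold $t=\sqrt{f(M)}$, which is legal whenever $f(M)>1$. In the chain case the returned $B'$ satisfies $f(B')\ge \ceil{t}\ge \sqrt{f(M)}$, and (by the chain-case analysis, whose net reduction is certified by Lemma~\ref{chain}) augmenting along the alternating path carrying $B'$ reduces the total badness by $f(B')$. In the anti-chain case $B'$ satisfies $|B'|\ge \ceil{f(M)/(2t)} = \ceil{\sqrt{f(M)}/2}$, and \emph{Antichain-Adjust} reduces the badness by at least $|B'|/2\ge \sqrt{f(M)}/4$. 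Hence, whichever of the two alternatives Lemma~\ref{chainanti} delivers, a single round decreases $f(M)$ by at least $\tfrac14\sqrt{f(M)}$.

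With this per-round guarantee the potential computation is routine. Writing $F_k$ for the value of $f(M)$ after $k$ rounds, we have $F_0\le 2n$ and $F_{k+1}\le F_k-\tfrac14\sqrt{F_k}$, so
\[
\sqrt{F_k}-\sqrt{F_{k+1}} \;=\; \frac{F_k-F_{k+1}}{\sqrt{F_k}+\sqrt{F_{k+1}}} \;\ge\; \frac{\tfrac14\sqrt{F_k}}{2\sqrt{F_k}} \;=\; \frac18 .
\]
Thus $\Phi=\sqrt{f(M)}$ falls by at least $1/8$ per round, and after $8\sqrt{F_0}=O(\sqrt{n})$ rounds we reach $f(M)\le 1$. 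A single remaining bad edge is a trivial chain, so one final round sends $f(M)$ to $0$, giving $O(\sqrt{n})$ rounds in total.

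The only point that demands care — and the only place where the preceding subsections do genuine work — is the claim that a chain round yields a \emph{net} reduction of $f(B')$, rather than merely eliminating the badness on $B'$ while re-creating badness elsewhere through the rematching in \emph{Search}. This is exactly what the weight- and duality-bookkeeping of Lemma~\ref{chain} guarantees, so I would simply cite it. Everything else is the Dilworth-style chain/anti-chain tradeoff of Lemma~\ref{chainanti} balanced against a geometrically shrinking potential, which is the same $\sqrt{n}$-type accounting familiar from blocking-flow and Hopcroft--Karp arguments.
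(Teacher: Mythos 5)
Your proof is correct and takes essentially the same route as the paper's: both invoke Lemma~\ref{chainanti} with a threshold $t=\Theta(\sqrt{f(M)})$, use the chain/antichain adjustment procedures to get a per-round reduction of $\Omega(\sqrt{f(M)})$, bound the initial badness by $2n$, and finish with standard accounting. Your potential-function telescoping with $t=\sqrt{b}$ is only a cosmetic variant of the paper's choice $t=\sqrt{b}/2$ and its recurrence $T(b)=T(b-\ceil{\sqrt{b}/2})+1\le 4\sqrt{b}$.
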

\begin{proof}
We apply Lemma~\ref{chainanti} with $t = \sqrt{b}/2$, where $b=f(M)$.   It follows that in linear time we can either obtain an anti-chain $B'$ of size at least $\ceil{\sqrt{b}}$ and reduce $f(M)$ by $\ceil{\sqrt{b}/2}$, or obtain a chain $B'$ such that $f(B') \geq \ceil{\sqrt{b}/2}$ and reduce $f(M)$ by $f(B')$. In either case we can reduce $f(M)$ by $\ceil{\sqrt{b}/2}$. 
The number of rounds is at most 
$T(b)$, where $T(0)=0$ and $T(b) = T(b - \ceil{\sqrt{b}/2}) + 1$ for $b > 0$. 
A proof by induction shows that $T(b) \leq 4 \sqrt{b}$, which is at most $4\sqrt{2n}$ since $|M|\le n$ and $f(e)\le 2$ for $e\in M$.
\end{proof}

Phase II concludes at the end of scale $L=\ceil{\log N}$, when $M\cap G[2,3] = \emptyset$.  That is,
when Phase III begins $y(e) = w(e)$ or $w(e) + \delta_L$ for each $e\in M$.  (Note that $w_L = w$ since $\delta_L < 1$.)

\subsection{Phase III}

Like a single scale of Phase II, Phase III alternately executes augmentation and dual adjustment steps.
Certain complications arise since the goal of Phase III is to eliminate all non-tight $M$-edges whereas 
the goal of Phase II was to eliminate edges in $M\cap G[2,3]$.
In order to understand the ramifications of this slight shift, we should review the interplay between 
augmentation and dual adjustment in Phase II.
Note that Phase II augmentation steps improved the weight of $M$, but this was an incidental benefit.
The {\em real} purpose of augmentation was to eliminate any augmentations in $G[1,3]$ (thereby making $\vec{G}[1,3]$ acyclic), which then let us reduce $f(M)$ by $\sqrt{f(M)}/2$ with a chain/antichain dual adjustment.
The efficiency of the augmentation step stemmed from the fact that matched and unmatched edges had different eligibility criteria.
Thus, augmenting along a maximal set of augmentations in $G[1,3]$ destroyed all augmentations in $G[1,3]$, that is, tight $M$-edges could and should be ignored.

In Phase III we cannot afford to exclude tight $M$-edges from the eligibility graph.  This raises two concerns.  First, augmenting along a maximal set of augmenting paths/cycles in $G[0,1]$ does {\em not} destroy all augmentations in $G[0,1]$.  (Tight $M$-edges remain eligible after augmentation and may therefore be contained in another augmentation.)  Second, $G[0,1]$ may contain cycles of tight edges, that is, augmentations that do not improve the weight of $M$, so eliminating all {\em weight}-augmenting paths/cycles does not guarantee that $\vec{G}[0,1]$ is acyclic.
These concerns motivate us to redefine {\em augmentation}.  In Phase III an eligible augmentation is either an alternating cycle or alternating path whose endpoints have zero $y$-values, that, in addition, 
is contained in $G[0,1]$ but not $G[0,0]$.  That is, it cannot consist solely of tight edges.

\subsubsection{Phase III Augmentation}

In a Phase III Augmentation step we repeatedly augment along eligible augmentations in $G[0,1]$
until no such eligible augmentation exists.  
Lemma~\ref{lasttrim} lets us upper bound the aggregate time for all Phase III Augmentation steps.

\begin{lemma}\label{lasttrim} 
An eligible augmentation $P$ in $G[0,1]$ can be found in $O(m)$ time, if one exists,
and $w(M\symdiff P) > w(M)$.
Consequently, there can be at most $\sqrt{n}$ augmentations in Phase III. 
\end{lemma}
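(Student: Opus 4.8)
The plan is to establish Lemma~\ref{lasttrim} in three parts: the weight-increase guarantee $w(M\symdiff P) > w(M)$, the $O(m)$ procedure for finding an eligible augmentation, and finally the bound of $\sqrt{n}$ on the number of Phase III augmentations.

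First I would prove the weight increase. At scale $L$ we have $w_L = w$; every unmatched edge of $G[0,1]$ is tight ($y(e)=w(e)$), and every matched edge $e\in M$ of $G[0,1]$ satisfies $y(e) = w(e) + c_e\delta_L$ with $c_e\in\{0,1\}$, where $c_e=1$ precisely for the non-tight edges. Writing $w(M\symdiff P) - w(M) = \sum_{e\in P\setminus M} w(e) - \sum_{e\in P\cap M} w(e)$ and substituting $w(e)=y(e)$ on unmatched edges and $w(e)=y(e)-c_e\delta_L$ on matched edges, I would collect the $y$-terms by vertex. Each internal vertex of $P$ is incident to exactly one matched and one unmatched $P$-edge (the alternating property), so its $y$-value cancels; each endpoint is incident to a single $P$-edge but has zero $y$-value (this is exactly the Phase III requirement on endpoints, and is vacuous for cycles). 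Hence the $y$-terms telescope to zero and $w(M\symdiff P) - w(M) = \delta_L\cdot|\{e\in P\cap M : c_e=1\}|$. Since $P$ lies in $G[0,1]$ but not $G[0,0]$ it contains at least one non-tight matched edge, so this quantity is strictly positive.

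For the count, note that $M\symdiff P$ is an integer-weighted matching, so $w(M\symdiff P)$ and $w(M)$ are integers and the strict increase forces $w(M\symdiff P) \ge w(M)+1$. Dual adjustments leave $M$ and hence $w(M)$ unchanged, so $w(M)$ is nondecreasing and rises by at least $1$ per augmentation. By Lemma~\ref{firstlemma}, $w(M)\ge w(M^*)-\sqrt{n}$ when Phase III begins, while trivially $w(M)\le w(M^*)$ throughout; therefore at most $\sqrt{n}$ augmentations can occur.

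The crux is the $O(m)$ search, and the main obstacle is the one flagged in the text: $\vec{G}[0,1]$ may contain directed cycles of tight edges, so neither a plain DFS nor acyclicity is available. My plan is to compute the strongly connected components of $\vec{G}[0,1]$ in $O(m)$ time. An improving \emph{cycle} through a non-tight matched edge $(v,u)$ exists iff $u$ and $v$ share a component, in which case a simple alternating cycle is extracted by routing from $u$ back to $v$ inside that component. Otherwise every non-tight edge crosses between components, and I would pass to the condensation DAG: marking the components reachable from the starting set $A$ and those that reach the ending set $B$ (defined as in Phase II, with zero-$y$ endpoints), a non-tight edge $S_1\to S_2$ is usable iff $S_1$ is reached from $A$ and $S_2$ reaches $B$. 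Any usable edge yields a simple augmenting path by concatenating a DAG path of \emph{distinct} components and routing through tight edges within each visited component; distinctness of the components is precisely what guarantees the lifted path is simple. If no non-tight edge is usable, no eligible augmentation exists. SCC computation, DAG reachability, and path lifting each run in $O(m)$ time. The delicate point to get right is this simplicity argument in the presence of tight cycles, and the condensation viewpoint is what makes it clean.
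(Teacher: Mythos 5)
Your proposal is correct and matches the paper's proof in all essentials: the same telescoping tightness argument for $w(M\symdiff P)>w(M)$ (the paper sums $y$-values over $V(P\cap M)$ and $V(P\setminus M)$, which cancel except at zero-$y$ endpoints), the same integrality-plus-Lemma~\ref{firstlemma} count, and the same linear-time search --- SCCs of $\vec{G}[0,1]$ for augmenting cycles, then a reachability test for augmenting paths, where the paper's criterion (both endpoints of a non-tight matched edge fail the dual-adjustability test of Lemma~\ref{uvadjust}) is the same computation as your condensation-DAG reachability from $A$ and to $B$, with your explicit distinct-components argument for simplicity being, if anything, slightly more careful. One small correction: Phase III dual adjustments do \emph{not} all leave $M$ unchanged --- the chain case sets $M\leftarrow (M\symdiff P)\symdiff Q$ --- so the monotonicity of $w(M)$ across dual adjustment steps must be justified by Lemma~\ref{chain} (as the paper does), not by the claim that $M$ is untouched.
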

\begin{proof}
To find an augmentation we begin by computing the strongly connected components of $\vec{G}[0,1]$ in linear time.
If the endpoints of an $e\in M\cap G[1,1]$ are in the same strongly connected component then that edge is contained in an augmenting cycle.
If there are no augmenting cycles, use the linear time algorithm described in Lemma \ref{uvadjust} to determine whether $v$ is dual adjustable for all $v \in V$. If {\em both} endpoints of some $e \in M\cap G[1,1]$ are {\em not} dual adjustable,
then there must be an augmenting path containing $e$.  

Let $P$ be an augmenting path or cycle.
We must have $\sum_{v \in V(P \cap M)} y(v) = \sum_{v \in V(P\setminus M)} y(v)$, since the sums differ only
on vertices with zero $y$-values.
Thus 
$w(P \cap M) < \sum_{v \in V(P \cap M)} y(v) = \sum_{v \in V(P\setminus M)} y(v) = w(P\setminus M)$,
where the first inequality follows from the fact that $P\cap M$ contains a non-tight edge
and the last equality from the tightness of unmatched edges.
Since all weights are integers, the weight of the matching is increased by at least one.

Recall that $\delta_{L} = 2^{\floor{\log N/\sqrt{n}} - \ceil{\log N}} \leq 1 / \sqrt{n} $. By Lemma \ref{firstlemma}, we have 
$w(M) \geq w(M^*) - n \delta_{L} \geq w(M^*) - \sqrt{n}$ at the end of Phase II, where $M^*$ is the \MWM. 
As in Phase II, Phase III dual adjustments will not reduce the weight of $M$.  (See Lemma~\ref{chain}.)
Since each augmentation increases the weight of $M$ there can be at most $\sqrt{n}$ augmentations in Phase III.
\end{proof}

Lemma~\ref{lasttrim} implies that the time for $\ell$ Phase III Augmentation steps is $O(\ell m + m\sqrt{n})$: linear time per step plus linear time per augmentation discovered.

\subsubsection{Phase III Dual Adjustment}

Since the goal of Phase III is to eliminate edges in $M\cap G[1,1]$ (as opposed to $M\cap G[2,3]$) 
we must redefine the {\em badness} of an edge accordingly.  Let $B= M\cap G[1,1]$ be the bad edges
and let $f : E\rightarrow \{0,1\}$ measure badness, where $f(e) = (y(e)-w(e))/\delta_L$ if $e\in B$ and zero if $e\not\in B$.
We define $B'\subseteq B$ to be a chain or antichain exactly as in Phase II.  The difference is that $\vec{G}[0,1]$ is not necessarily acyclic so {\em finding} a chain or antichain requires one extra step.  

\begin{lemma}\label{chainanti-phaseIII}
For any $t>1$, there exists a $B'\subseteq B$ such that $B'$
is a chain with $f(B') \ge \ceil{t}$ or $B'$ is an antichain with $|B'| = f(B') \ge \ceil{f(M)/t}$. 
Moreover, $B'$ can be found in linear time.
\end{lemma}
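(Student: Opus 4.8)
The plan is to mirror the proof of Lemma~\ref{chainanti} as closely as possible, the only new ingredient being a preprocessing step that copes with the fact that $\vec{G}[0,1]$ need not be acyclic. Recall that in Phase~II the entire argument rested on $\vec{G}[1,3]$ being a DAG, which let us assign each vertex a ``distance'' under the length function $-f$ and then either read off a long chain or pigeonhole a large antichain among the left endpoints of bad edges. Here $\vec{G}[0,1]$ may contain directed cycles, but only harmless ones: a directed cycle corresponds to an alternating cycle in $G[0,1]$, and if such a cycle contained a bad edge (a non-tight edge of $M\cap G[1,1]$) it would be an eligible Phase~III augmentation, contradicting the fact that the preceding augmentation step left no eligible augmentation. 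Hence every directed cycle of $\vec{G}[0,1]$ consists solely of tight edges, and in particular \emph{no bad edge lies inside a strongly connected component}.

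This observation is the crux. First I would compute the strongly connected components of $\vec{G}[0,1]$ in linear time and contract each to a supernode, obtaining a DAG $\vec{G}'$. By the observation above, every bad edge becomes an inter-component arc of $\vec{G}'$, so $B$ is in natural bijection with a set of arcs of the DAG, while every intra-component arc is tight and carries $f=0$. Now the Phase~II machinery applies to $\vec{G}'$: using $-f$ as the length function I compute, in linear time by topological-order dynamic programming, the shortest-path ``distance'' $d(\cdot)$ to every supernode from the zero-in-degree supernodes. Since all arc lengths are $\le 0$ and each bad arc has length $-1$, every supernode that is the head of a bad arc has $d\le -1$.

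I would then run the dichotomy exactly as before. If some supernode $C$ has $d(C)\le -\ceil{t}$, its shortest path crosses at least $\ceil{t}$ bad arcs; stitching together the intra-component alternating paths (available because each component is strongly connected via tight edges) with these bad arcs yields a single alternating path in $G[0,1]$ carrying a chain $B'$ with $f(B')=|B'|\ge\ceil{t}$. Otherwise every bad arc's head has distance in $\{-1,\dots,-(\ceil{t}-1)\}$, so by pigeonhole at least $\ceil{|B|/(\ceil{t}-1)}$ of them share a common distance $-k$, and these bad arcs form the antichain $B'$. The antichain property is verified just as in Lemma~\ref{chainanti}: two bad edges on a common alternating path project to bad arcs on a single directed path of $\vec{G}'$ (the projection cannot revisit a supernode since $\vec{G}'$ is acyclic), along which $d$ is non-increasing and strictly drops by $1$ across each bad arc, so their heads cannot share the distance $-k$.

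The improved antichain bound is where Phase~III differs quantitatively from Phase~II. Because in Phase~III each bad edge has badness exactly $f(e)=1$ (rather than $f(e)\in\{1,2\}$), we have $|B|=f(M)$ and lose no factor of two: since $t>1$ forces $\ceil{t}-1\ge 1$ and $\ceil{t}-1<t$, the pigeonhole count satisfies $\ceil{|B|/(\ceil{t}-1)}\ge\ceil{f(M)/t}$, giving $|B'|=f(B')\ge\ceil{f(M)/t}$ as claimed. I expect the main obstacle to be purely presentational: justifying that the SCC contraction is sound---that intra-component routing really produces a legitimate alternating path in $G[0,1]$ and that the projection of an alternating path to $\vec{G}'$ is a simple directed path---after which the distance/pigeonhole argument is identical to Phase~II. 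All steps (SCC computation, contraction, DAG shortest paths, pigeonhole, and path stitching) run in linear time.
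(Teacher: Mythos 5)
Your proposal is correct and takes essentially the same approach as the paper: the paper likewise contracts strongly connected components (of $\vec{G}[0,0]$ inside $\vec{G}[0,1]$, which coincide with those of $\vec{G}[0,1]$ by exactly your observation that every directed cycle consists of tight edges, so all $B$-edges straddle components and survive in the acyclic contracted graph) and then invokes the distance/pigeonhole argument of Lemma~\ref{chainanti} verbatim. The strengthened bound $|B'| = f(B') \ge \ceil{f(M)/t}$ is justified in the paper precisely as you justify it: in Phase III the range of $f$ is $\{0,1\}$ rather than $\{0,1,2\}$, so $|B| = f(M)$ and no factor of two is lost.
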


\begin{proof}
Let $\vec{G}^*$ be the graph obtained from $\vec{G}[0,1]$ by contracting all strongly connected components in $\vec{G}[0,0]$.  Since $G[0,1]$ contains no eligible augmenting cycles after a Phase III Augmentation step, all $B$-edges straddle different strongly connected components and therefore remain in $\vec{G}^*$.  By definition $\vec{G}^*$ is acyclic.
The argument from Lemma~\ref{chainanti} shows that $\vec{G}^*$ contains a chain $B'$ with $f(B')\ge \ceil{t}$
or an antichain with $|B'| = \ceil{f(M)/t}$, and that such a $B'$ can be found in linear time.\footnote{The bound on $|B'|$ is $\ceil{f(M)/t}$ rather than $\ceil{f(M)/2t}$ since
the range of $f$ is $\{0,1\}$ rather than $\{0,1,2\}$.}
\end{proof}

We can apply the chain and antichain elimination procedures from Phase II without compromising correctness
since 
Lemmas~\ref{uvadjust}, \ref{dual_adjust_anti}, \ref{chain}, and \ref{rounds} remain valid if we substitute $G[0,1]$ for $G[1,3]$.
Let $t=\sqrt{b/2}$, where $b = f(M)$ is the current total badness.  Lemmas~\ref{uvadjust} and \ref{chainanti-phaseIII} imply that we can reduce
$f(M)$ by $\ceil{t} \ge \sqrt{b/2}$ in the chain case or $\ceil{b/2t} \ge \sqrt{b/2}$ in the antichain case.  The number
of augmentation and dual adjustment steps in Phase III is then $T(b)$ where $T(b) = T(b - \ceil{\sqrt{b/2}}) + 1$ and $T(0)=0$.
By induction $T(b) \le 2\sqrt{2b}$, which is at most $2\sqrt{2n}$ since $f(M)\le n$.
Thus, the total time spent on dual adjustment in Phase III is still $O(m\sqrt{n})$.

\subsection{Maximum Weight Perfect Matching}\label{sec:MWPM}

Recall from Section~\ref{intro} that the maximum weight perfect matching problem (\MWPM) is reducible to \MWM.
One simply adds $nN$ to the weight of every edge; a \MWM{} in the new graph is necessarily a \MWPM{} in the original.
Thus, our \MWM{} algorithm solves the \MWPM{} problem in $O(m\sqrt{n}\log(nN))$ time, where the number of scales is $\log((n+1)N)$.  However, we can circumvent this roundabout reduction and solve \MWPM{} directly, in $\log(\sqrt{n}N)$ scales, that is, a factor 2 improvement for small $N\ll n$.  We substitute Property~\ref{invariant-mwpm} for 
Property~\ref{invariant}.

\begin{property}\label{invariant-mwpm} 
Redefine $\delta_{0} = 2^{\floor{\log N}}$, $L = \ceil{\log (\sqrt{n}N)}$.
In each scale $i \in [0,L]$, we maintain a {\em perfect} matching $M$ satisfying the following properties.
\begin{enumerate}
\item \emph{Granularity:} $y(u)$ is a multiple of $\delta_i$ for all $u\in V$.
\item \emph{Domination:} $y(e) \geq w_{i}(e)$ for all $e \in E$.
\item \emph{Near Tightness:} For any $e\in M$, $y(e) \leq w_{i}(e) + 3 \delta_{i}$ throughout scale $i$
and $y(e) \leq w_{i}(e) + \delta_{i}$ at the end of scale $i$.
\end{enumerate}
\end{property}

In Phase I we find any perfect matching $M$ in $O(m\sqrt{n})$ time~\cite{HK73,Dinic70,Karzanov73a}
and initialize $y$-values to satisfy Property~\ref{invariant-mwpm}.
\[
y(u) \leftarrow \begin{cases}\delta_0 & \mbox{ if $u$ is a left vertex}\\
0 & \mbox{ if $u$ is a right vertex}\end{cases}
\]
Since $y(e) \le w_i(e) + \delta_0$ for $e\in M$, Property~\ref{invariant-mwpm} lets us end scale 0.

As before, Phase II operates at scales $i \in [1,L]$ and Phase III operates at scale $L$ with the following simplifications:
\begin{enumerate}
\item To begin scale $i$ we simply 
increment $y(u)$ by $\delta_i$ for each left vertex $u$.  There is no need for an initial augmentation/dual adjustment
(as in Section~\ref{sect:phaseII}) since there are no free vertices.

\item As there are no free vertices, an {\em augmentation} is always an augmenting cycle.  In the Phase II and Phase III
augmentation steps we use only {\em Cycle-Search}, not {\em Path-Search}.

\item We do not prohibit negative $y$-values.  Thus, in the antichain case of dual adjustment, both endpoints of
a $B'$-edge are dual-adjustable and we can reduce $f(M)$ by $|B'|$ rather than just $|B'|/2$.

\item In the chain case of dual adjustment, setting $M\leftarrow M\symdiff P$ temporarily frees $P$'s endpoints, say $u$ and $v$.
We execute {\em Search}$(u)$ but force $z_{\min} = v$, that is {\em Search} returns a $u$-to-$v$ path $P_u$.  Setting 
$M\leftarrow M \symdiff P_u$ restores the perfection of $M$.  There is no need for {\em Search}$(u)$ to calculate $h$-values.  These were introduced to maintain Property~\ref{invariant}(\ref{granularity}), namely that $y$-values are non-negative.

\item We change the parameter that determines whether we apply a chain or antichain dual adjustment.
Choose $t = \sqrt{b/2}$, where $b=f(M)$.  Either we can obtain an anti-chain $B'$ of size at least $\ceil{\sqrt{b/2}}$ 
and decrease $f(M)$ by $\ceil{\sqrt{b/2}}$, or we can obtain a chain $B'$ such that $f(B') \geq \ceil{\sqrt{b/2}}$ 
and decrease $f(B)$ by $f(B')$. Thus, the number of rounds is at most 
$T(b) = T(b - \lceil \sqrt{b}/2 \rceil) + 1 \le 2\sqrt{2b}$.  In a Phase II scale $b\le 2n$, so $4\sqrt{n}$ augmentation/dual adjustment steps are needed.  In Phase III $b\le n$ so $2\sqrt{2n}$ steps are needed.

\item Note that since $\delta_L = 2^{\floor{\log N} - \ceil{\log (\sqrt{n}N)}} \leq 1 / \sqrt{n}$,
at the end of Phase II we have $w(M) \geq w(M^{*}) - n \delta_L \geq w(M^*) - \sqrt{n}$, 
by Lemma~\ref{firstlemma}. Thus, the matching can be augmented at most $\sqrt{n}$ times in Phase III
and the total time spent in augmentation steps is $O(m\sqrt{n})$.
\end{enumerate}

The main difference between our \MWM{} and \MWPM{} algorithms is in Phase I.  In the \MWM{} algorithm we 
can afford to use a smaller value for $\delta_0$ since Phase I ends when free vertices have zero $y$-values, whereas 
Phase I of the \MWPM{} algorithm ends only when we have a perfect matching.

\bibliographystyle{plain}

\end{document}